\lstdefinelanguage{pseudo}{morekeywords={init,with,or,if,then,else,fi,and,not,while,do,od,distinct,
    case, goto,local,algorithm, function, for, each, times, from, to,
    variables, procedure, recursive, return},
  morecomment=[l]{//}, morecomment=[s]{/*}{*/},
  mathescape=true,escapechar={@},
  basicstyle=\sffamily\small,
  commentstyle=\itshape\rmfamily\small,
  keywordstyle=\sffamily\bfseries\small
}
\definecolor{processblue}{cmyk}{0.96,0,0,0}
\newcommand{\be}{\begin{enumerate}}
\newcommand{\ee}{\end{enumerate}}
\newcommand{\bc}{\begin{center}}
\newcommand{\ec}{\end{center}}
\newcommand{\bi}{\begin{itemize}}
\newcommand{\ei}{\end{itemize}}
\newcommand{\NatZero}{{\mathbb N}_0}
\newcommand{\Rational}{{\mathbb Q}}
\newcommand{\TA}{\textsf{TA}}
\newcommand{\local}{{\mathcal L}}
\newcommand{\initlocal}{{\mathcal I}}
\newcommand{\paraset}{\Pi}
\newcommand{\globset}{\Gamma}
\newcommand{\varset}{\mathcal{V}} 
\newcommand{\ruleset}{{\mathcal R}}
\newcommand{\ResCond}{{\mathit{RC}}} 
\newcommand{\syssize}{N}
\newcommand{\numlocal}{{|\local|}}
\newcommand{\numglob}{{|\globset|}}
\newcommand{\Env}{{\mathit{Env}}}
\newcommand{\update}{\vec{u}}
\newcommand{\zerovec}{\vec{0}}
\newcommand{\fromstate}{{\mathit{from}}}
\newcommand{\tostate}{{\mathit{to}}}
\newcommand{\PrecondU}{\Phi^{\mathrm{rise}}}
\newcommand{\PrecondL}{\Phi^{\mathrm{fall}}}
\newcommand{\Precond}{\Phi} 
\newcommand{\precond}{\varphi} 
\newcommand{\true}{\mathit{true}}
\newcommand{\false}{\mathit{false}}
\newcommand{\Sys}{\textsf{Sys}}
\newcommand{\configs}{\Sigma}
\newcommand{\iconfigs}{I}
\newcommand{\transrel}{T}
\newcommand\gst{\sigma}
\newcommand{\counters}{{\vec{\boldsymbol\kappa}}}
\newcommand{\vars}{\vec{g}}
\newcommand{\param}{\vec{p}}
\newcommand{\finpath}[2]{\textsf{path}(#1, #2)} 
\newcommand{\infpath}[2]{\textsf{path}(#1, #2)} 
\newcommand{\setconf}[2]{\textsf{Cfgs}(#1,#2)}   
\newcommand{\statectx}{\omega}
\newcommand\slice[2]{#1{\raise-.5ex\hbox{\ensuremath|}}_{#2}}
\newcommand\ltlF{\textsf{\textbf{F}}\,}
\newcommand\ltlG{\textsf{\textbf{G}}\,}
\newcommand\ELTLFT{\textsf{ELTL}_\textsf{FT}}
\newcommand{\casest}{\cite{CT96,ST87:abc,BrachaT85,MostefaouiMPR03,Raynal97,Gue02,DobreS06,BrasileiroGMR01,SongR08}}
\newcommand{\cpp}[1]{#1{\footnotesize{\texttt{++}}}}
\newcommand{\bala}[1]{\todo[color=blue!30]{\small #1}}
\newcommand{\ml}[1]{\todo[color=yellow,inline]{Marijana: #1}}
\begin{document}

\title{Complexity of Verification and Synthesis of Threshold Automata
\thanks{This project has received funding from the European Research Council (ERC) under the European Union's Horizon 2020 research and innovation programme under grant agreement No 787367 (PaVeS).}
}

\author{A.\,R. Balasubramanian \and 
Javier Esparza \and
Marijana Lazi\' c 
}

\authorrunning{A.\,R. Balasubramanian et al.}

\institute{Technical University of Munich, Germany}

\maketitle

\begin{abstract}
Threshold automata are a formalism for modeling and analyzing 
fault-tolerant distributed algorithms, recently introduced by Konnov, Veith, and Widder, describing protocols executed by a fixed but arbitrary number of processes. We conduct the first systematic study of the complexity of verification and synthesis problems for threshold automata. We prove that the coverability, reachability, safety, and liveness problems are NP-complete, and that the bounded synthesis problem is $\Sigma_p^2$ complete. A key to our results is a novel characterization of the reachability relation of a threshold automaton as an existential Presburger formula. The characterization also leads to novel verification and synthesis algorithms. We report on an implementation, and provide experimental results.

\keywords{Threshold automata, distributed algorithms, parameterized verification}

\end{abstract}

\section{Introduction}

Many concurrent and distributed systems consist of an 
arbitrary number of communicating processes. Parameterized verification 
investigates how to prove them correct for any number of processes~\cite{2015Bloem}. 

Parameterized systems whose processes are indistinguishable and finite state are often called 
\emph{replicated systems}. A global state of a replicated system is completely determined by the number of processes in each state. Models of replicated systems differ in the communication mechanism between processes.  Vector Addition Systems  (VAS) and their extensions \cite{GermanS92,DufourdFS98,EsparzaFM99,BlondinHM18} can model rendez-vous,
multiway synchronization, global resets and broadcasts, and other mechanisms. The decidability and complexity of their verification problems is well understood \cite{EsparzaN94,Esparza96,SchmitzS14,2015Bloem,BlondinHM18}.

Transition guards of VAS-based replicated systems are \emph{local}: Whether a transition is enabled or not depends only on the current states of a \emph{fixed} number of processes, 
independent of the total number of processes. 
Konnov \textit{et al.} observed in~\cite{KVW14:concur} that local guards cannot model fault-tolerant distributed algorithms. Indeed, in such algorithms often a process can only make a step if it has received a message from a \emph{majority} or some fraction of the processes. To remedy this, they introduced \emph{threshold automata}, a model of replicated systems with shared-variable communication and \emph{threshold guards}, in which the value of a global variable is compared to an affine combination of  the total numbers of processes of different types. 
In a number of papers, Konnov \textit{et al.}  have developed and implemented 
verification algorithms for safety and liveness of threshold automata \cite{KVW14:concur,KVW17:IandC,ELTLFT,Reach,KW18}. Further, Kukovec \textit{et al.} have obtained decidability and undecidability results \cite{AllFlavors} for different variants of the model. However, contrary to the VAS case, the computational complexity of the main verification 
problems has not yet been studied. 

We conduct the first systematic complexity analysis of threshold automata.
In the first part of the paper we show that the parameterized coverability and reachability problems
are NP-complete. Parameterized coverability asks if some configuration reachable from some initial configuration puts at least one process in a given state, and parameterized reachability asks if it puts processes in \emph{exactly} a given set of states, leaving all other states unpopulated.  The NP upper bound is a consequence of our main result, showing that the reachability relation of threshold automata is expressible in existential Presburger arithmetic.
In the second part of the paper we apply this expressibility result to prove that the model checking problem of \emph{Fault-Tolerant Temporal Logic} ($\ELTLFT$)~\cite{ELTLFT} is NP-complete, and that the problem of synthesizing the guards of a given automaton, studied in \cite{LKWB17:opodis}, is $\Sigma_p^2$ complete. 
The last part of the paper reports on an implementation of our novel approach to the parameterized (safety and liveness) verification problems. We show that it compares favorably to ByMC, the tool developed in \cite{KW18}.

Due to lack of space most proofs have been moved to the Appendix.

\section{Threshold Automata }

\begin{figure}[t]
  \begin{minipage}{.5\textwidth}
      \input{st87-pseudo.tex}
    \caption{Pseudocode of a reliable broadcast protocol from~\cite{ST87:abc}
    for a correct process~$i$, where $n$ and $t$ denote the number of processes,
    and an upper bound on the number of faulty processes. 
    The protocol satisfies its specification 
    (if $\mathit{myval}_i=1$ for every correct process $i$, then
    eventually $\mathit{accept}_j=\mathit{true}$ for some correct 
    process $j$) if $t < n/3$. }\label{fig:st} 
  \end{minipage}
  \begin{minipage}{.01\textwidth}
      \phantom{x}
  \end{minipage}
  \begin{minipage}{.49\textwidth}
  \centering 
      \tikzstyle{node}=[circle,draw=black,thick,minimum size=4.3mm,inner sep=0.75mm,font=\normalsize]
\tikzstyle{init}=[circle,draw=black!90,fill=green!10,thick,minimum size=4.3mm,inner sep=0.75mm,font=\normalsize]
\tikzstyle{final}=[circle,draw=black!90,fill=red!10,thick,minimum size=4.3mm,inner sep=0.75mm,font=\normalsize]
\tikzstyle{rule}=[->,thick]
\tikzstyle{post}=[->,thick,rounded corners,font=\normalsize]
\tikzstyle{pre}=[<-,thick]
\tikzstyle{cond}=[rounded
  corners,rectangle,minimum
  width=1cm,draw=black,fill=white,font=\normalsize]
\tikzstyle{asign}=[rectangle,minimum
  width=1cm,draw=black,fill=gray!5,font=\normalsize]

\tikzset{every loop/.style={min distance=5mm,in=140,out=113,looseness=2}}
\begin{tikzpicture}[>=latex, thick,scale=1.1, every node/.style={scale=1}]

\node[] at (0, 0.85) [init,label={[label distance=-0.5mm]180:\textcolor{blue}{$\ell_0$}}]         (0) {};
 \node[] at (0, -0.85) [init,node,label={[label distance=-0.5mm]180:\textcolor{blue}{$\ell_1$}}]   (1) {};

 \node[] at (2.75, 0) [node,label=below:\textcolor{blue}{$\ell_2$}]        (2) {};
 \node[] at (4.2, 0) [final,label=below:\textcolor{blue}{$\ell_3$}]    (3) {};

\draw[post] (0) to[]
    node[sloped, above, align= center,xshift=0cm]
    {\small $r_2 \colon \gamma_1 \mapsto \cpp{x}\quad$} (2);
\draw[post] (1) to[] node[sloped, above, align= center,yshift=0cm]
    {\small $ r_1 \colon \top \mapsto \cpp{x}$~ ~}(2);
\draw[post] (2)to[]
    node[align=center,anchor=north, midway]
    {\small $r_3 \colon \gamma_2$} (3);
\draw[rule] (0) to[out=55,in=100,looseness=8]
	    node[align=center,anchor=south, yshift=0cm,xshift=-0.1cm]{\small $sl_1\colon \top$} (0);
\draw[rule] (2) to[out=55,in=100,looseness=8]
	    node[align=center,anchor=south, yshift=0cm,xshift=-0.1cm]{\small $sl_2{:} \top$} (2);
\draw[rule] (3) to[out=55,in=100,looseness=8]
	    node[align=center,anchor=south,yshift=0cm,xshift=-0.1cm]{\small $sl_3{:}\top$} (3); 

\end{tikzpicture}\caption{Threshold automaton modeling the body of the loop in the 
      protocol from Fig.~\ref{fig:st}. Symbols $\gamma_1, \gamma_2$ stand for the
      threshold guards $x \ge (t+1) - f$ and $x \ge (n-t) - f$, 
      where $n$ and $t$ are as in Fig.~\ref{fig:st}, and $f$ is the actual
     number of faulty processes. The shared variable $x$ models the number of ECHO messages sent by 
     correct processes. Processes with $\mathit{myval}_i=b$ (line~1) start in location 
     $\ell_b$ (in green). Rules $r_1$ and $r_2$ model sending ECHO at lines 7
     and 12. The self-loop rules $sl_1,\ldots, sl_3$ are stuttering steps.
     }\label{fig:stexample}
  \end{minipage}
\end{figure}

We introduce threshold automata, illustrating the definitions on the example of Figure~\ref{fig:stexample}, a model of the Byzantine agreement protocol of Figure \ref{fig:st}. 

\subsubsection*{Environments.} Threshold automata are defined relative to an \emph{environment} $\Env = (\paraset, \ResCond, \syssize)$, where $\paraset$ is a set of \emph{parameters} ranging over $\NatZero$, $\ResCond \subseteq \NatZero^{\paraset}$ is a \emph{resilience condition} expressible as an integer linear formula, and $\syssize \colon \ResCond \rightarrow \NatZero$ is a linear function. Intuitively, a valuation of $\paraset$ determines the number of processes of different kinds (e.g., faulty) executing the protocol, and $\ResCond$ describes the admissible combinations of parameter values. Finally, $\syssize$ associates to a each admisible combination, the number of copies of the automaton that are going to run in parallel, or, equivalently, the number of processes explicitly modeled. In a Byzantine setting, faulty processes behave arbitrarily, and so we do not model them explicitly; in this case, the system consists of one copy of the automaton for every correct process. In the crash fault model, processes behave correctly until they crash, they must be modeled explicitly, and the system has a copy of the automaton for each process, faulty or not. 

\begin{example}
In the threshold automaton of Figure~\ref{fig:stexample}, the parameters are $n$, $f$, and $t$, describing the number of processes, the number of faulty processes, and the maximum possible number of faulty processes, respectively. The resilience condition is the set of triples $(i_n, i_f, i_t)$ such that $i_n/3> i_t \geq i_f$; abusing language, we identify it with the constraint $n/3 > t \geq f$.  The function $\syssize$ is given by $\syssize(n,t,f)= n-f$, which is the number of correct processes. 
\end{example}

\subsubsection*{Threshold automata.} 
A \emph{threshold automaton} over an environment $\Env$ is a tuple $\TA=(\local, \initlocal, \globset, \ruleset)$, where
$\local$ is a nonempty, finite set of  \emph{local states} (or \emph{locations}), $\initlocal\subseteq\local$ is a nonempty subset  of \emph{initial locations}, $\globset$ is a set of \emph{global variables} ranging over $\NatZero$, and $\ruleset$ is a set of \emph{transition rules} (or just \emph{rules}), formally described below.

A \emph{transition rule} (or just a \emph{rule}) is a tuple $r = (\fromstate, \tostate,  \precond, \update)$, where $\fromstate$ and $\tostate$ are the \emph{source} and \emph{target} locations, $\precond \colon \paraset \cup \globset \rightarrow \{\mathit{true}, \mathit{false}\}$ 
is a conjunction of \emph{threshold guards}, and $\update \colon \globset \rightarrow \{0,1\}$ is an \emph{update}. We often let $r.\fromstate, r.\tostate, r.\precond, r.\update$ denote the components of $r$.
Intuitively, $r$ states that a process can move from $\fromstate$ to $\tostate$ if the current values of $\paraset$ and $\globset$ satisfy $\varphi$, and when it moves it updates the current valuation $\vars$ of $\globset$ by performing the update $\vars := \vars + \update$. Since all components of $\update$ are nonnegative, the values of global variables never decrease.  A \emph{threshold guard} $\varphi$ has one of the following two forms:
\begin{itemize}
\item $x\ge a_0 + a_1 \cdot p_1 + \ldots + a_{|\paraset|} \cdot p_{|\paraset|}$, called a \emph{rise guard}, or
  \item $x < a_0 + a_1 \cdot p_1 + \ldots + a_{|\paraset|} \cdot p_{|\paraset|}$, called a \emph{fall guard},
\end{itemize}
where~$x$ $\in \globset$ is a shared variable, 
$p_1,\ldots, p_{|\paraset|}\in \paraset$ are the parameters, and
$a_0,a_1,\ldots,a_{|\paraset|}\in \Rational$ are rational coefficients. 
Since global variables are initialized to $0$, and they never decrease, once a rise (fall) guard becomes true (false) it stays true (false). We call this property \emph{monotonicity of guards}.
We let $\PrecondU$, $\PrecondL$, and $\Precond$ denote the sets of rise guards, fall guards, and all guards of
$\TA$.  

\begin{example}
The rule $r_2$ of Figure~\ref{fig:stexample}  has $\ell_0$ and $\ell_2$ as source and target
locations, $x \geq (t+1)-f$ as guard, and the number $1$ as update (there is only one
shared variable, which is increased by one). 
\end{example}

\subsubsection*{Configurations and transition relation.}
A \emph{configuration} of $\TA$ is a triple $\sigma=(\counters,\vars,\param)$ where 
$\counters  \colon \local \rightarrow \NatZero$ describes the number of processes at each location, and $\vars \in \NatZero^\numglob$ and $\param  \in \ResCond$ are valuations of the global variables and the parameters. In particular, $\sum_{\ell \in \local} \counters(\ell)= \syssize(\param)$ always holds.  A configuration is \emph{initial} if $\counters(\ell) =0$ for every $\ell \notin \initlocal$, and $\vars = \vec{0}$. We often let $\sigma.\counters, \sigma.\vars, \sigma.\param$ denote the components of $\sigma$.

A configuration $\sigma=(\counters,\vars,\param)$  
\emph{enables} a rule $r = (\fromstate, \tostate,  \precond, \update)$ if $\counters(\fromstate) > 0$, and $(\vars, \param)$ satisfies the guard $\precond$, i.e., substituting $\vars(x)$ for $x$ and $\param(p_i)$ for $p_i$ in $\precond$ yields a true expression, denoted by $\sigma\models\varphi$. If $\sigma$ enables $r$, then $\TA$ can \emph{move} from $\sigma$ to the configuration $r(\sigma)=(\counters',\vars',\param')$ defined as follows:
(i)  $\param' = \param$, (ii) $\vars' = \vars + \update$, and (iii) $\counters = \counters + \vec{v}_r$, where $\vec{v}_r (\fromstate) =-1$, $\vec{v}_r (\tostate) =+1$, and $\vec{v}_r = 0$ otherwise.
We let $\sigma \rightarrow r(\sigma)$ denote that $\TA$ can move from $\sigma$ to $r(\sigma)$.

\subsubsection*{Schedules and paths.}
A \emph{schedule} is a (finite or infinite) sequence of rules. 
A schedule $\tau = r_1, \dots, r_m$ is \emph{applicable} to configuration $\sigma_0$ if there is a sequence of configurations $\sigma_1,\dots, \sigma_m$ such that $\sigma_i =
     r_{i} (\sigma_{i-1})$ for $1 \le i \le m$, and we define $\tau(\sigma_0):= \sigma_m$.
We let $\sigma \xrightarrow{*} \sigma'$ denote that $\tau(\sigma) = \sigma'$ for some schedule $\tau$, and say that $\sigma'$ is \emph{reachable} from $\sigma$. Further we let $\tau \cdot \tau'$ denote the concatenation
     of two schedules $\tau$ and~$\tau'$, and, given $\mu \geq 0$, let $\mu \cdot \tau$ the concatenation of $\tau$ with itself $\mu$ times.

A \emph{path} or \emph{run} is a finite or infinite sequence $\gst_0, r_1, \gst_1, \dots, \gst_{k-1},
     r_k, \sigma_k, \dots$ of alternating configurations and rules such that
     $\gst_i = r_i(\gst_{i-1})$ for every $r_i$ in 
     the sequence. If $\tau=r_1,\ldots,r_{|\tau|}$ is applicable to $\sigma_0$, then we let $\finpath{\gst_0}{\tau}$
     denote the path $\gst_0, r_1, \gst_1,
     \dots, r_{|\tau|}, \gst_{|\tau|}$ with 
     $\gst_i = r_i(\gst_{i-1})$, for $1 \le i \le |\tau|$.
Similarly, if $\tau$ is an infinite schedule.
Given a path $\infpath{\gst}{\tau}$, the set of all configurations
     in the path is denoted by~$\setconf\sigma\tau$.

\section{Coverability and Parameterized Coverability} \label{sec:coverability}
We say that configuration~$\sigma$ \emph{covers} location~$\ell$ if $\sigma.\counters(\ell)>0$. 
We consider the following two \emph{coverability} questions in threshold automata:	
	
\begin{definition}[(Parameterized) coverability]
The \emph{coverability problem} consists of deciding, given a threshold automaton~$\TA$, a 
location $\ell$ and an initial configuration $\gst_0$, if some configuration reachable
from $\gst_0$ covers $\ell$. The \emph{parameterized coverability problem} consists of deciding, given 
$\TA$ and $\ell$, if there is an initial configuration $\gst_0$ and a configuration reachable
from $\gst_0$ that covers $\ell$.
\end{definition}

Sometimes we also speak of the \emph{non-parameterized coverability} problem, instead of the coverability problem, to avoid confusion. 
We show that both problems are NP-hard,
	even when the underlying threshold automaton is acyclic.
	In the next section, we show that the reachability and parameterized reachability problems (which subsume the 
	corrersponding coverability problems) are both in NP.

\begin{theorem} \label{theorem:paramcov}
Parameterized coverability in threshold automata is NP-hard, even 
for acyclic threshold automata with only constant guards (i.e., guards of the form $x \geq a_0$ and $x < a_0$).
\end{theorem}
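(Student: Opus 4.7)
The plan is to reduce $3$-SAT to parameterized coverability in polynomial time. Given a $3$-CNF formula $\varphi = C_1 \wedge \cdots \wedge C_m$ over variables $x_1,\ldots,x_n$, I would build an acyclic threshold automaton $\TA_\varphi$ using only constant guards, together with a target location $\ell_F$, such that $\ell_F$ is parameterized-coverable in $\TA_\varphi$ iff $\varphi$ is satisfiable.

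The construction uses initial locations $\ell_1,\ldots,\ell_n$, auxiliary witnesses $\top_i,\bot_i$ for each variable, a collector location $\ell_{\textit{mid}}$, and the target $\ell_F$. The shared variables are $y_i,\bar{y}_i$ for each $i$ (encoding the truth-value choice) and counters $c_1,\ldots,c_m$ (one per clause). The rules are: (i) from $\ell_i$, a rule to $\top_i$ guarded by $\bar{y}_i < 1$ that increments $y_i$, and a rule to $\bot_i$ guarded by $y_i < 1$ that increments $\bar{y}_i$; (ii) from $\top_i$ to $\ell_{\textit{mid}}$, an unguarded rule whose update increments $c_j$ for every $j$ with $x_i \in C_j$; (iii) symmetrically from $\bot_i$ to $\ell_{\textit{mid}}$, incrementing $c_j$ for every $j$ with $\neg x_i \in C_j$; (iv) a single rule from $\ell_{\textit{mid}}$ to $\ell_F$ guarded by $c_1 \geq 1 \wedge \cdots \wedge c_m \geq 1$. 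The resulting automaton is acyclic, polynomial in $|\varphi|$, and uses only constant guards.

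For correctness, pick the initial configuration with exactly one process in each $\ell_i$. If $\alpha$ is a satisfying assignment, move the process in $\ell_i$ to $\top_i$ or $\bot_i$ according to $\alpha(x_i)$ and then to $\ell_{\textit{mid}}$; since every clause has a true literal, each $c_j$ is incremented at least once and the final rule fires to cover $\ell_F$. Conversely, suppose some initial configuration admits a run covering $\ell_F$. By monotonicity of guards, once $y_i$ becomes positive the rule $\ell_i \to \bot_i$ is forever blocked, and symmetrically for $\bar{y}_i$, so for each $i$ processes enter at most one of $\top_i,\bot_i$. Define $\alpha(x_i) = \top$ iff $\top_i$ was entered (and arbitrarily for untouched variables). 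Reaching $\ell_F$ forces every $c_j \geq 1$, and $c_j$ can only have been incremented by a rule $\top_i \to \ell_{\textit{mid}}$ with $x_i \in C_j$ or $\bot_i \to \ell_{\textit{mid}}$ with $\neg x_i \in C_j$; in either case $\alpha$ satisfies $C_j$.

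The main obstacle is that parameterized coverability allows the prover to choose any initial configuration, so a naive gadget would let the prover place many processes in a single $\ell_i$ and send some to $\top_i$ and others to $\bot_i$, forging an inconsistent assignment that still covers every clause. The $y_i/\bar{y}_i$ flag pair exploits the monotonicity of fall guards to rule this out: no matter how many processes sit initially in $\ell_i$, the first one to move commits the remaining ones to the same side. This is precisely what makes the extracted $\alpha$ well-defined and the reduction sound.
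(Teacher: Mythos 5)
Your reduction is correct and is essentially identical to the paper's own proof (Definition~\ref{def:paramcov} together with Lemma~\ref{lem:noncont}): same 3-SAT reduction, same location/variable layout, same $y_i/\bar y_i$ mutual-exclusion gadget relying on the monotonicity of fall guards, and same extraction of a satisfying assignment in the converse direction.
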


\begin{proof}(Sketch.)
	We prove NP-hardness of parameterized coverability
	by a reduction from 3-SAT. The reduction is as follows:
	(See Figure~\ref{fig:paramcov} for an
	illustrative example).
	Let $\varphi$ be a 3-CNF formula with variables $x_1,\dots,x_n$.
	For every variable $x_i$ we will have two shared variables
	$y_i$ and $\bar{y_i}$. For every clause $C_j$, we will have a shared 
	variable $c_j$.
	Intuitively, each process begins at some state $\ell_i$ and then
	moves to either $\top_i$ or $\bot_i$ by firing either
	$(\ell_i,\top_i,\bar{y_i}<1,\cpp{y_i})$ or 
	$(\ell_i,\bot_i,y_i<1,\cpp{\bar{y_i}})$ respectively. 
	Moving to $\top_i$ ($\bot_i$ resp.) means that the process has guessed the value 
	of the variable $x_i$ to be true (false resp).
	Once it has chosen a truth value, it then
	increments the variables corresponding to all the clauses which
	it satisfies and moves to a location $\ell_\textit{mid}$. If it happens that all the guesses were correct,
	a final rule gets unlocked and processes can move from $\ell_{\textit{mid}}$ to $\ell_F$. 
	The key property we need to show is that if some process moves to 
	$\top_i$ then no other process can move to $\bot_i$ (and vice versa). 
	This is indeed the case because if a process moves to $\top_i$ from $\ell_i$, 
	it would have fired the 
	rule $(\ell_i,\top_i,\bar{y_i}<1,\cpp{y_i})$ which increments the shared variable~$y_i$, and so falsifies the guard of the corresponding rule $(\ell_i,\bot_i,y_i<1,\cpp{\bar{y_i}})$, and therefore no process can fire it.
	Similarly, if $(\ell_i,\bot_i,y_i<1,\cpp{\bar{y_i}})$ is fired, no process can fire $(\ell_i,\top_i,\bar{y_i}<1,\cpp{y_i})$.
	
\end{proof}

\begin{figure}
	\begin{center}
		\tikzstyle{node}=[circle,draw=black,thick,minimum size=4.3mm,inner sep=0.75mm,font=\normalsize]
		\tikzstyle{edgelabelabove}=[sloped, above, align= center]
		\tikzstyle{edgelabelbelow}=[sloped, below, align= center]
		\begin{tikzpicture}[->,node distance = 1cm,scale=0.8, every node/.style={scale=0.8}]
		\node[node] (dec1)[label=left:\textcolor{blue}{$\ell_1$}] {};
		\node[node] at ($(dec1)+(4,0.4)$) [label=right:\textcolor{blue}{$\top_1$}] (true1){};
		\node[node] at ($(dec1)+(4,-0.4)$) [label=right:\textcolor{blue}{$\bot_1$}](false1) {};
		
		\node[node, below = of dec1] (dec2)[label=left:\textcolor{blue}{$\ell_2$}] {};
		\node[node, below = of true1] (true2)[label=right:\textcolor{blue}{$\top_2$}] {};
		\node[node, below = of false1] (false2)[label=right:\textcolor{blue}{$\bot_2$}] {};
		
		\node[node, below = of dec2] (dec3) [label=left:\textcolor{blue}{$\ell_3$}]{};
		\node[node, below = of true2] (true3)[label=right:\textcolor{blue}{$\top_3$}]{};
		\node[node, below = of false2] (false3)[label=right:\textcolor{blue}{$\bot_3$}] {};
		
		\draw(dec1) edge[edgelabelabove] node{$\bar{y_1} < 1 \mapsto \cpp{y_1}$} (true1);
		\draw(dec1) edge[edgelabelbelow] node{$y_1 < 1 \mapsto \cpp{\bar{y_1}}$} (false1);
		
		\draw(dec2) edge[edgelabelabove] node{$\bar{y_2} < 1 \mapsto \cpp{y_2}$} (true2);
		\draw(dec2) edge[edgelabelbelow] node{$y_2 < 1 \mapsto \cpp{\bar{y_3}}$} (false2);
		
		\draw(dec3) edge[edgelabelabove] node{$\bar{y_3} < 1 \mapsto \cpp{y_3}$} (true3);
		\draw(dec3) edge[edgelabelbelow] node{$y_3 < 1 \mapsto \cpp{\bar{y_3}}$} (false3);
		
		\node[node, right = 6.5cm of dec2] (mid)[label=above:\textcolor{blue}{$\ell_{\textit{mid}}$}] {};
		
		\draw(true1) edge[edgelabelabove] node{$\cpp{c_1}$} (mid);
		\draw(false1) edge[edgelabelbelow] node{$\cpp{c_2}$} (mid);
		\draw(false2) edge[edgelabelabove] node{$\cpp{c_1}\wedge \cpp{c_2}$} (mid);
		\draw(true3) edge[edgelabelabove] node{$\cpp{c_1}$} (mid);
		\draw(false3) edge[edgelabelbelow] node{$\cpp{c_2}$} (mid);
		
		\node[node, right =2.5cm of mid] (F) [label=above:\textcolor{blue}{$\ell_F$}]{};
		\draw(mid) edge[edgelabelabove] node{$c_1 \ge 1 \wedge c_2\ge 1$} (F);
		
		\end{tikzpicture}
		\caption{Threshold automaton~$\TA_\varphi$ corresponding to the formula $\varphi = (x_1 \vee \neg x_2 \vee x_3) \wedge (\neg x_1 \vee \neg x_2 \vee \neg x_3)$. 
			Note that setting $x_1$ to true and $x_2$ and $x_3$ to false 
			satisfies~$\varphi$. 
			Let~$\sigma_0$ be the initial configuration obtained by having~1 process in each initial location $\ell_i$, $1\le i\le 3$, 
			and~0 in every other location.
			From $\ell_1$ we increment $y_1$ and from $\ell_2$ and $\ell_3$
			we increment $\bar{y_2}$ and $\bar{y_3}$ respectively, thereby
			making the processes go to $\top_1,\bot_2,\bot_3$ respectively. From there
			we can move all the processes to $\ell_\textit{mid}$, at which
			point the last transition gets unlocked and we can cover $\ell_F$.}
		\label{fig:paramcov}
	\end{center}
\end{figure}

A modification of  the same construction proves

\begin{theorem} \label{theorem:nonparamcov}
	The coverability problem is NP-hard even for acyclic threshold
	automata with only constant rise guards (i.e., guards of the form $x \geq a_0$).
\end{theorem}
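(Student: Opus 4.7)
My plan is to adapt the 3-SAT reduction from the proof of Theorem~\ref{theorem:paramcov}. In that construction, the only fall guards appear on the choice rules out of each $\ell_i$, where they force that at most one of $\top_i,\bot_i$ is ever populated and hence that the schedule encodes a consistent truth assignment for~$x_i$. Since the non-parameterized coverability problem takes the initial configuration $\gst_0$ as input, I would discharge this mutex by a population argument instead of by fall guards: if each $\ell_i$ starts with exactly one process, then only one of the two outgoing rules at $\ell_i$ can ever fire, because an acyclic construction guarantees that $\ell_i$ is never repopulated.

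Given a 3-CNF formula $\varphi$ over $x_1,\dots,x_n$ with clauses $C_1,\dots,C_m$, I would build a threshold automaton with locations $\ell_1,\dots,\ell_n,\top_1,\dots,\top_n,\bot_1,\dots,\bot_n,\ell_{\mathit{mid}},\ell_F$ and shared variables $c_1,\dots,c_m$. For each $i$ the rules are: two choice rules $\ell_i \to \top_i$ and $\ell_i \to \bot_i$ carrying the vacuous constant rise guard $c_1 \ge 0$ and no update; two rules $\top_i \to \ell_{\mathit{mid}}$ and $\bot_i \to \ell_{\mathit{mid}}$, still with trivial guard, whose updates increment $\cpp{c_j}$ for every clause $C_j$ containing the literal $x_i$, respectively $\neg x_i$; and one final rule $\ell_{\mathit{mid}} \to \ell_F$ guarded by the conjunction of constant rise guards $c_1 \ge 1 \wedge \cdots \wedge c_m \ge 1$. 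The initial configuration $\gst_0$ places exactly one process in each $\ell_i$ and zero everywhere else; the environment can be taken, for instance, to have a single parameter $p$ with $\ResCond$ fixing $p = n$ and $\syssize(p)=p$.

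Correctness then splits into the two usual directions. If $\varphi$ is satisfiable by an assignment $\alpha$, the schedule that, for each $i$, sends the unique process in $\ell_i$ through $\top_i$ (if $\alpha(x_i)=\mathit{true}$) or $\bot_i$ (otherwise) on to $\ell_{\mathit{mid}}$ brings every $c_j$ to at least $1$, unlocking the final rule and covering $\ell_F$. Conversely, any run covering $\ell_F$ must fire the final rule, hence every $c_j$ must have been incremented; reading off from each $\ell_i$ which of the two choice rules actually fired yields a satisfying assignment, and the mutex observation above ensures that this assignment is well-defined. The automaton is manifestly acyclic and uses only constant rise guards, so the reduction establishes NP-hardness.

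The only real step to be careful about is the mutex argument itself, i.e., verifying that no transition can ever deposit a process back into $\ell_i$; this is immediate because the targets $\top_i,\bot_i,\ell_{\mathit{mid}},\ell_F$ are pairwise disjoint from the initial layer $\{\ell_1,\dots,\ell_n\}$ and no rule has an $\ell_i$ as its target. Everything else is a routine rewording of the argument behind Theorem~\ref{theorem:paramcov}.
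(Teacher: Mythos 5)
Your proposal is correct and matches the paper's proof essentially verbatim: the paper modifies the $\TA_\varphi$ of Definition~\ref{def:paramcov} by replacing the choice rules $(\ell_i,\top_i,\bar{y_i}<1,\cpp{y_i})$ and $(\ell_i,\bot_i,y_i<1,\cpp{\bar{y_i}})$ with trivially-guarded, update-free rules, fixes the initial configuration to one process per $\ell_i$, and relies on exactly the same mutex-via-population argument that you describe (the $\ell_i$ have no incoming rules, so the single process at $\ell_i$ can take at most one of the two outgoing rules). Dropping the now-inert $y_i,\bar{y_i}$ variables from $\Gamma$ is an inessential cosmetic tidying.
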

		
\paragraph*{Constant rise guards.}

Theorem \ref{theorem:nonparamcov} puts strong constraints to the class of $\TA$s for which 
parameterized coverability can be polynomial, assuming P $\neq$ NP. We identify an interesting polynomial case.

\begin{definition}
	\label{def:mult}
	An environment $\Env = (\paraset, \ResCond, \syssize)$ is \emph{multiplicative} for a $\TA$ if for every $\mu \in \mathbb{N}_{> 0}$ (i) for every valuation $\param \in \ResCond$ we have $\mu \cdot \param \in \ResCond$ and $\syssize(\mu \cdot \param) = \mu \cdot \syssize(\param)$, 
	and (ii) for every guard $\varphi := x \ \Box \ a_0 + a_1 p_1 + a_2 p_2 + \dots a_k p_k$ in $\TA$ (where $\Box \in \{\ge, <\}$), if $(y,q_1,q_2,\dots,q_k)$
	is a (rational) solution to $\varphi$ then $(\mu \cdot y, \mu \cdot q_1, \dots, \mu \cdot q_k)$ is also a solution to $\varphi$.
\end{definition}

Multiplicativity is a very mild condition. To the best of our knowledge, all algorithms discussed in the literature,
and all benchmarks of \cite{ELTLFT}, have multiplicative environments.  
For example, in Figure \ref{fig:stexample}, if the resilience condition $t < n/3$ holds for a pair $(n, t)$, then
it also holds for $(\mu \cdot n, \mu \cdot t)$; similarly, the function $\syssize(n,t,f)=n-f$ also satisfies
$\syssize(\mu \cdot n, \mu \cdot t, \mu \cdot f) = \mu \cdot n -\mu \cdot f = \mu \cdot \syssize(n,t,f)$. 
Moreover, if $x \ge t+1-f$ holds in $\sigma$, then we also have $\mu\cdot x \ge \mu\cdot t +1 -\mu\cdot f$ in $\mu \cdot \sigma$.
Similarly for the other guard $x\ge n-t-f$.

This property allows us to reason about multiplied paths in large systems.
Namely, condition (ii) from Definition~\ref{def:mult} yields that if a rule is enabled in $\sigma$, it is also enabled in $\mu\cdot\sigma$.
This plays a crucial role in Section~\ref{sec:liveness} where we need the fact that a counterexample in a small system implies a counterexample in a large system.



\begin{theorem} \label{theorem:cons-rise-complete}
Parameterized coverability of threshold automata over multiplicative environments with only constant rise guards is P-complete.
\end{theorem}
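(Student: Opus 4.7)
I plan to prove the two inclusions separately: P-membership via a monotone fixpoint algorithm, and P-hardness by a log-space reduction from the Monotone Circuit Value Problem (MCVP).

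For the upper bound, I would iteratively compute two monotonically growing sets: $R \subseteq \local$ (locations that can be populated in some reachable configuration of some initial configuration) and $S \subseteq \globset$ (variables that can be driven arbitrarily high). Start with $R := \initlocal$ and $S := \emptyset$. Call a rule $r = (\fromstate, \tostate, \precond, \update)$ \emph{ready} if $\fromstate \in R$ and every atom $x \geq a_0$ in $\precond$ is either trivial ($a_0 \leq 0$) or has $x \in S$. Whenever $r$ becomes ready, add $\tostate$ to $R$ and every $x$ with $\update(x) = 1$ to $S$, and iterate until neither set grows. Since both sets are bounded by $|\local|+|\globset|$ and each iteration rescans all rules, the computation runs in polynomial time. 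The claim is that $\ell$ is parameterized-coverable iff $\ell \in R$ at the fixpoint.

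The completeness direction ($\ell$ coverable $\Rightarrow \ell \in R$) is a direct induction on a covering schedule: by monotonicity of rise guards, whenever a rule first fires, its source is already in $R$ and each guard atom $x \geq a_0$ with $a_0 > 0$ has been met by earlier increments of $x$, hence $x \in S$ at that point. Soundness ($\ell \in R \Rightarrow \ell$ coverable) is the main obstacle and is where multiplicativity is used. I record the activation order $r_1, \dots, r_k$ from the fixpoint and, for each $r_i$, compute a bound $N_i$ on the number of firings needed so that every later $r_j$ whose guard mentions a variable updated by $r_i$ sees its constant threshold met; these bounds depend only on the constants $a_0$ and the updates, so they are polynomial in the input. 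Picking a multiplier $\mu$ exceeding $\sum_i N_i$, condition (i) of Definition~\ref{def:mult} yields $\mu\param \in \ResCond$ and $\mu \cdot \syssize(\param)$ processes. Place these processes initially at the initial locations of $R$ in quantities proportional to the needs of later firings; since constant rise guards do not scale with $\mu$ and each non-initial source in $R$ is repopulated by previously fired rules (by induction on the activation order), the schedule is applicable and covers $\ell$.

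For P-hardness, I reduce from MCVP. Given a monotone circuit with AND/OR gates and an input assignment, introduce a shared variable $y_g$ and an initial location $\ell_g$ for every gate $g$. For each input gate $g$ of value $1$, add an outgoing rule from $\ell_g$ with trivial guard and update $\cpp{y_g}$. For each AND gate $g = a \wedge b$, add an outgoing rule with guard $y_a \geq 1 \wedge y_b \geq 1$ and update $\cpp{y_g}$; for each OR gate $g = a \vee b$, add two outgoing rules, one per disjunct. Let $\ell_F$ be reached by a final rule with guard $y_{\mathrm{out}} \geq 1$. Equip the resulting $\TA$ with the trivially multiplicative environment $\paraset = \{n\}$, $\ResCond = \NatZero$, $\syssize(n) = n$. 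An easy induction on gate depth shows that the fixpoint computation on this $\TA$ mirrors the bottom-up evaluation of the circuit, so $\ell_F \in R$ iff the circuit outputs $1$. Since MCVP is P-complete, parameterized coverability in the setting of the theorem is P-hard, completing the proof.
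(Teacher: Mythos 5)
Your proposal is correct and follows essentially the same two‐pronged approach as the paper: a monotone saturation/fixpoint computation for membership in P, and a log-space reduction from circuit evaluation for P-hardness. The only substantive differences are cosmetic: your algorithm tracks a set $S$ of ``unboundedly incrementable'' shared variables where the paper tracks a set $X_\ruleset$ of fireable rules (these are interchangeable since $x\in S$ exactly when some fireable rule increments~$x$), your soundness argument chooses a multiplier $\mu$ from per-rule bounds rather than the paper's single global constant $c$, and you reduce from Monotone CVP rather than general CVP, which lets you drop the paper's dual variables $g^0,g^1$ tracking both truth values of a gate — a modest simplification that buys a cleaner gadget at no cost, since MCVP is also P-complete under log-space reductions.
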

\begin{proof}(Sketch.)  P-hardness is proved by giving a logspace-reduction from the Circuit Value problem (\cite{CVP}) which is well known to be P-hard. In the following, we sketch the proof of inclusion in P.

Let $\TA = (\local, \initlocal,\globset, \ruleset)$ be a threshold automaton over a multiplicative
environment $\Env = (\paraset, \ResCond, \syssize)$ such that the guard of each transition in $\ruleset$ is a constant rise guard.
We construct the set $\widehat{\local}$ of locations that can be reached by at least one process,  and the set 
of transitions $\widehat{\ruleset}$ that can occur, from at least one initial configuration. 
We initialize two variables $X_\local$ and $X_\ruleset$ by $X_\local := \initlocal$ and $X_\ruleset := \emptyset$, and repeatedly update 
them until a fixed point is reached, as follows:
\begin{itemize}
\item If there exists a rule $r = (\ell,\ell',\true,\vec{u}) \in \ruleset \setminus X_\ruleset$ such that $\ell \in X_\local$, then set $X_\local := X_\local \cup \{\ell'\}$ and $X_\ruleset := X_\ruleset \cup \{r\}$.
\item If there exists a rule $r=(\ell,\ell', (\land_{1 \le i \le q} \ x_i \geq c_i), \vec{u}) \in \ruleset \setminus X_\ruleset$ such that $\ell \in X_\local$,  and there exists rules $r_1,r_2,\dots,r_q$ such that each
	$r_i = (\ell_i,\ell'_i,\varphi_i,\vec{u}_i) \in X_\ruleset$
	and $\vec{u}_i[x_i] > 0$, then set $X_\local := X_\local \cup \{\ell'\}$ and $X_	\ruleset := X_\ruleset \cup \{r\}$.
\end{itemize}

We prove in the Appendix that after termination $X_\local = \widehat{\local}$ holds. Intuitively, multiplicativity ensures that if a reachable configuration enables a rule, there are reachable configurations from which the rule can occur arbitrarily  many times. This shows that any path of rules constructed by the algorithm is executable.
\end{proof}


\section{Reachability} \label{sec:reachability}

We now consider reachability problems for threshold automata. 
Formally, we consider the following two versions of the reachability problem:
	
\begin{definition}[(Parameterized) reachability]
The \emph{reachability problem} consists of deciding,  given a threshold automaton~$\TA$,  two sets $\local_{=0},  \local_{>0}$ of locations, and an initial configuration $\gst_0$, if some configuration $\gst$ reachable from  $\gst_0$ satisfies $\gst.\counters(\ell) = 0$ for every 
$\ell \in \local_{=0}$ and $\gst.\counters(\ell) > 0$ for every $\ell \in \local_{>0}$.
The \emph{parameterized reachability problem} consists of deciding, given $\TA$ and $\local_{=0},  \local_{>0}$, if there is an initial configuration $\gst_0$ such that some $\gst$ reachable from  $\gst_0$ satisfies $\gst.\counters(\ell) = 0$ for every 
$\ell \in \local_{=0}$ and $\gst.\counters(\ell) > 0$ for every $\ell \in \local_{>0}$.
\end{definition}

Notice that the reachability problem clearly subsumes the coverability problem and hence, in the sequel, we will only be concerned with proving that both problems are in NP. This will be a consequence of our main result, showing that the reachability relation of threshold automata can be characterized as an existential formula of Presburger arithmetic. This result has several other consequences. In Section \ref{sec:liveness} we use it to give a new model checking algorithm for 
the fault-tolerant logic of~\cite{ELTLFT}. In Section \ref{sec:experiments} we report on an implementation whose runtime compares favorably with previous tools. 

\subsubsection*{Reachability relation as an existential Presburger formula.}
Fix a threshold automaton $\TA=(\local, \initlocal, \globset, \ruleset)$ over an environment $\Env$. We construct an existential Presburger arithmetic formula $\phi_\mathit{reach}$ with $(2\numlocal+2\numglob+2|\paraset|)$ free variables such that $\phi_\mathit{reach}(\sigma,\sigma')$ is true if{}f $\sigma'$ is reachable from $\sigma$. 


Let the \emph{context} of a configuration~$\sigma$, denoted by $\statectx(\sigma)$,
be the set of all \emph{rise} guards that evaluate to true and all \emph{fall} guards that 
evaluate to false in~$\sigma$. Given a schedule $\tau$, we say that the path
$\infpath{\sigma}{\tau}$ is \emph{steady} if all the configurations it visits have the same context.
By the monotonicity of the guards of threshold automata, $\infpath{\sigma}{\tau}$ is steady if{}f its endpoints have the same context, i.e., if{}f $\statectx(\sigma)=\statectx(\tau(\sigma))$. We have the following proposition:
\begin{proposition}\label{prop:gen-to-steady}
Every path of a threshold automaton with $k$ guards is the 
concatenation of at most $k {+} 1$ steady paths.
\end{proposition}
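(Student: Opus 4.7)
The plan is to exploit monotonicity of guards, which was explicitly recorded earlier in the paper: once a rise guard becomes true it stays true, and once a fall guard becomes false it stays false, because updates have nonnegative components and so global-variable valuations are nondecreasing along any schedule. This means each of the $k$ guards can ``flip'' at most once along the path, and each flip is the only thing that can change the context.

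Concretely, given a path $\finpath{\sigma_0}{\tau} = \sigma_0, r_1, \sigma_1, \ldots, r_m, \sigma_m$, I would define the set of \emph{breakpoints}
\[
B \;=\; \{\, i \in \{1,\ldots,m\} : \statectx(\sigma_{i-1}) \neq \statectx(\sigma_i) \,\}.
\]
The first step is to show $|B| \le k$. For each $i \in B$ there is at least one guard $g$ that belongs to exactly one of $\statectx(\sigma_{i-1})$, $\statectx(\sigma_i)$. I would show that each guard $g \in \PrecondU \cup \PrecondL$ can witness at most one index of $B$: if $g$ is a rise guard, then by monotonicity once $g \in \statectx(\sigma_j)$ we have $g \in \statectx(\sigma_{j'})$ for all $j' \ge j$, so $g$ can change membership at most once; symmetrically for a fall guard. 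Summing over the $k$ guards yields $|B| \le k$.

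The second step is the decomposition itself. Write $B = \{i_1 < i_2 < \cdots < i_b\}$ with $b \le k$ and put $i_0 := 0$, $i_{b+1} := m$. Then the subschedules $\tau_j := r_{i_{j-1}+1} \cdots r_{i_j}$ for $j = 1,\ldots,b+1$ satisfy $\tau = \tau_1 \cdots \tau_{b+1}$, and by construction every configuration visited inside $\finpath{\sigma_{i_{j-1}}}{\tau_j}$ shares the same context (otherwise a new breakpoint would lie strictly between $i_{j-1}$ and $i_j$). Thus each $\finpath{\sigma_{i_{j-1}}}{\tau_j}$ is steady, and we have expressed the original path as a concatenation of $b+1 \le k+1$ steady paths. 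The case of an infinite path is handled identically, with the final piece possibly infinite.

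The only mildly subtle point — not really an obstacle, but the place where one must be careful — is the ``witness'' argument: when the context changes at step $i$, several guards could change simultaneously, but it is enough to charge the breakpoint to any one of them and observe that each guard is charged at most once thanks to monotonicity. Everything else is bookkeeping, so I expect the proof to be short and follow exactly the template above.
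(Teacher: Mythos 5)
Your proof is correct and follows exactly the argument the paper intends: the paper establishes the ``monotonicity of guards'' observation immediately before stating the proposition (rise guards never flip from true to false, fall guards never flip from false to true, because updates are nonnegative), and your breakpoint-counting argument is the straightforward consequence the authors leave implicit. The charging of each breakpoint to one flipped guard, with each guard flipping at most once, is precisely the right bookkeeping; nothing is missing.
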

\noindent Using this proposition, we first construct a formula $\phi_\mathit{steady}$ such that $\phi_\mathit{steady}(\sigma,\sigma')$ holds if{}f there is a \emph{steady} path $\infpath{\sigma}{\tau}$ such that $\tau(\sigma)=\sigma'$. 

\subsubsection*{The formula $\phi_\mathit{steady}$.} For every rule $r\in\ruleset$,  let $x_r$ be a variable ranging over non-negative integers. Intuitively, the value of~$x_r$ will represent the number of times~$r$ is fired during the (supposed) path from~$\sigma$ to~$\sigma'$. Let $X=\{x_r\}_{r\in\ruleset}$. We construct $\phi_\mathit{steady}$ step by step, specifying necessary conditions for $\sigma,\sigma'$ and $X$ to satisfy the existence of the steady path, which in particular implies that $\sigma'$ is reachable from $\sigma$.

\paragraph{Step 1.} $\sigma$ and $\sigma'$ must have the same values of the parameters, which must satisfy the resilience condition, the same number of processes, and  the same context: 
$$\phi_\mathit{base}(\sigma,\sigma') \equiv \;  \sigma.\param = \sigma'.\param \; \wedge \;
\ResCond(\sigma.\param) \; \wedge \; \syssize(\sigma.\param) = 
\syssize(\sigma'.\param) \; \wedge \;
\statectx(\sigma)=\statectx(\sigma').$$

\paragraph{Step 2.} For a location $\ell\in\local$, 
let $out^\ell_1,\ldots,out^\ell_{a_\ell}$ be all outgoing rules from~$\ell$ and let $in^\ell_1,\dots,in^\ell_{b_\ell}$ be all incoming rules to~$\ell$. The number of processes in $\ell$ after the execution of the path is the initial number, plus the incoming processes, minus the outgoing processes. Since $x_r$ models the number of times the rule $r$ is fired, we have
$$	\phi_\local(\sigma,\sigma',X) \equiv \quad
	\bigwedge_{\ell \in \local} \ \left(\sum_{i=1}^{a_\ell} x_{in^\ell_i} - \sum_{j=1}^{b_\ell} x_{out^\ell_j} =
	\sigma'.\counters(\ell) - \sigma.\counters(\ell)\right)
$$
\paragraph{Step 3.} Similarly, for the shared variables we must have:
$$  \phi_\globset(\sigma,\sigma',X)  \equiv \quad
	\bigwedge_{z\in\globset} \ \left(\sum_{r\in\ruleset} (x_{r} \,\cdot\, r.\update[z]) = \sigma'.\vars[z] - \sigma.\vars[z]\right)
$$
\paragraph{Step 4.} Since $\infpath\sigma\tau$ must be steady, if a rule is fired along $\infpath\sigma\tau$ then its guard must be true in~$\sigma$ and so
$$\phi_\ruleset(\sigma,X)  \equiv \quad \bigwedge_{r \in \ruleset} \ x_r > 0 \ \Rightarrow \ (\sigma\models r.\varphi)$$
\paragraph{Step 5.} Finally, for every rule~$r$ that occurs in $\infpath\sigma\tau$, the path 
must contain a ``fireable'' chain leading to $r$, i.e., a set of rules $S=\{r_1,\dots,r_s\} \subseteq \ruleset$ such that  all rules of $S$ are executed in $\infpath\sigma\tau$, there is a process in $\sigma$ at $r_1.\fromstate$,  and the rules $r_1,\dots,r_s$  form a chain leading from 
$r_1.\fromstate$ to $r.\fromstate$. We capture this by the constraint
$$  \phi_{\textit{appl}}(\sigma,X) \equiv \quad
	\bigwedge_{r \in \ruleset} \; \left( x_r > 0 \ \Rightarrow \ \bigvee_{S=\{r_1,r_2,\dots,r_s\} \subseteq \ruleset} \phi_\mathit{chain}^r(S,\sigma,X)\right)
$$
where
$$
\phi_\mathit{chain}^r(S,\sigma,X) \equiv \;
\bigwedge_{r \in S} x_{r} > 0 \;\land \;  \sigma.\counters(r_1.\fromstate) > 0 \; \land \;
 \bigwedge_{1 < i \le s} r_{i-1}.\tostate = r_i.\fromstate \; \land \; r_s = r
$$

\paragraph{Combining the steps. } Define $\phi_\mathit{steady}(\sigma,\sigma')$ as follows:
\begin{eqnarray*}\label{eq:kirchoff}
\phi_\mathit{steady}(\sigma,\sigma') 
& \equiv &  \phi_\mathit{base}(\sigma,\sigma') \ \land \  \\
&    &	\exists X \ge 0.\; 
	\phi_\local(\sigma,\sigma',X)  \land 
	\phi_\globset(\sigma,\sigma',X)  \land 
	\phi_\ruleset(\sigma,X)  \land  
	\phi_{\textit{appl}}(\sigma,X) \ .
\end{eqnarray*}
\noindent where $\exists X \ge 0$ abbreviates $\exists x_{r_1} \ge 0 ,\ldots, \exists x_{r_{|\ruleset|}} \ge 0$. By our discussion, it is clear that if there is a steady path leading from
$\sigma$ to $\sigma'$, then $\phi_\mathit{steady}(\sigma,\sigma')$ is satisfiable. The following theorem proves the converse.

\begin{theorem}\label{th:kirchoffpath}
Let~$\TA$ be a threshold automaton and let $\sigma, \sigma' \in \configs$ be two configurations.
Formula $\phi_\mathit{steady}(\sigma,\sigma')$ is satisfiable if and only if there is a steady schedule~$\tau$ with $\tau(\sigma)=\sigma'$.
\end{theorem}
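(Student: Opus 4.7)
The plan is to prove the biconditional in each direction separately, with the converse — reconstructing a steady schedule from a satisfying assignment — being the technical heart.

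For the forward direction, suppose $\tau = r_1 \cdots r_m$ is a steady schedule with $\tau(\sigma) = \sigma'$, and let $y_r$ be the number of occurrences of $r$ in $\tau$. Then $\phi_\mathit{base}(\sigma,\sigma')$ holds because steadiness forces $\sigma$ and $\sigma'$ to share a context, while parameters and the total process count are invariants of rule firings. The conjuncts $\phi_\local(\sigma,\sigma',Y)$ and $\phi_\globset(\sigma,\sigma',Y)$ are Kirchhoff-style balance equations for location counters and shared variables, and they follow by summing the contributions of individual firings. $\phi_\ruleset(\sigma,Y)$ holds because any fired rule's guard is true at its firing moment and, by steadiness combined with monotonicity of variables, at $\sigma$ as well. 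Finally, $\phi_{\textit{appl}}(\sigma,Y)$ is witnessed by taking, for each $r$ with $y_r > 0$, the contiguous subsequence of $\tau$ that carried the relevant process from its initial location to $r.\fromstate$.

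For the converse, I induct on $M(Y) := \sum_{r \in \ruleset} y_r$, assuming $Y$ satisfies $\phi_\mathit{steady}(\sigma,\sigma')$. The base case $M(Y) = 0$ follows because $\phi_\local$ and $\phi_\globset$ then force $\sigma.\counters = \sigma'.\counters$ and $\sigma.\vars = \sigma'.\vars$, while $\phi_\mathit{base}$ equates the parameters, so the empty schedule works. For the inductive step, I extract an enabled rule: by $\phi_{\textit{appl}}$, any $r^*$ with $y_{r^*} > 0$ sits at the end of a chain $r_1, \ldots, r_s = r^*$ with $\sigma.\counters(r_1.\fromstate) > 0$ and $y_{r_i} > 0$ throughout, and $\phi_\ruleset$ gives $\sigma \models r_1.\precond$, so $r_1$ is enabled at $\sigma$. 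Firing $r_1$ yields $\sigma'' = r_1(\sigma)$; I take $Y'$ to be $Y$ with $y_{r_1}$ decremented and verify that $Y'$ satisfies $\phi_\mathit{steady}(\sigma'',\sigma')$: $\phi_\mathit{base}$ is preserved because monotonicity of variables sandwiches the context of $\sigma''$ between those of $\sigma$ and $\sigma'$, which coincide; $\phi_\local$ and $\phi_\globset$ rebalance by absorbing the single firing on both sides; $\phi_\ruleset$ persists since the context is unchanged. Invoking the induction hypothesis yields a steady schedule $\tau'$ from $\sigma''$ to $\sigma'$, and $r_1 \cdot \tau'$ is the desired schedule.

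The main obstacle is re-establishing $\phi_{\textit{appl}}(\sigma'',Y')$ for every rule $r^\dagger$ with $y'_{r^\dagger} > 0$, since firing $r_1$ may invalidate the witness chain of $r^\dagger$. If the chain passed through $r_1$ and $y'_{r_1} = 0$, its suffix starting right after $r_1$ remains a valid chain, with new head $r_1.\tostate$ occupied in $\sigma''$ (and the self-loop case $r_1.\fromstate = r_1.\tostate$ handled by the counter being unchanged). The delicate subcase is when the old chain begins at $\ell = r_1.\fromstate$ whose counter in $\sigma''$ has dropped to $0$. To repair this, I would close the backward-reachability set $S$ of $\ell$ under incoming rules with positive $Y'$-value and exploit the sum of the flow equations $\phi_\local(\sigma'',\sigma',Y')$ across $S$ to argue that $S$ must contain a location occupied in $\sigma''$, from which a fresh chain to $r^\dagger$ can be assembled. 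Formalizing this global balance argument and weaving it into the inductive step is the combinatorial core of the proof, which I plan to defer to the appendix.
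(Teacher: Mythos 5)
Your forward direction and the overall inductive strategy on $M(Y)=\sum_r y_r$ match the paper's proof. The gap is in the inductive step, and specifically in the subcase you yourself flag as delicate: after firing $r_1$ (the head of some witness chain) the counter at $\ell = r_1.\fromstate$ may drop to $0$, and you propose to recover a witness chain by closing the backward-reachability set $S$ of $\ell$ under $Y'$-positive incoming rules and appealing to a flow-balance argument over $S$. This does not work: $S$ can be entirely unoccupied in $\sigma''$ while the balance over $S$ is still zero, because the balance only counts \emph{net} flow into $S$, not whether any process can ever enter $S$ during a putative future run.

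Concretely, take locations $\ell, \ell', \ell_F$ with rules $r_1\colon \ell\to\ell_F$, $r_3\colon \ell\to\ell'$, $r_4\colon \ell'\to\ell$, all trivially guarded, and with $\sigma.\counters(\ell)=1=\sigma'.\counters(\ell_F)$ (all else zero). The assignment $y_{r_1}=y_{r_3}=y_{r_4}=1$ satisfies $\phi_\mathit{steady}(\sigma,\sigma')$. Your rule chooses $r'=r_1$ with chain $\{r_1\}$, so $r_1$ is the head and you fire it. Then $\sigma''.\counters(\ell_F)=1$ and all else is zero, and $Y'$ has $y'_{r_3}=y'_{r_4}=1$. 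Now for $r^\dagger = r_3$, the backward-closure of $\ell$ is $S=\{\ell,\ell'\}$, neither of which is occupied in $\sigma''$, and the two flow equations at $\ell$ and $\ell'$ sum to $0 = 0 - 0$ with no contradiction. So $\phi_\textit{appl}(\sigma'',Y')$ genuinely fails under $Y'$, and your induction hypothesis cannot be invoked: $\phi_\mathit{steady}(\sigma'',\sigma')$ does hold here, but only with an assignment that zeroes out $r_3,r_4$, which is not the decrement of $Y$.

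The paper avoids this by choosing \emph{which} rule to fire more carefully: it takes the chain head $r_1$ with $\ell = r_1.\fromstate$, but then distinguishes whether there is a fireable cycle at $\ell$ with respect to $Y$. If not (Lemma~\ref{lem:nocycles}), firing $r_1$ is safe and exactly your argument goes through (your flow-balance contradiction is the right tool there, because absence of a fireable cycle guarantees that any $Y'$-positive incoming rule has a chain that does \emph{not} re-enter $\ell$). If there \emph{is} a fireable cycle $\{t_1,\ldots,t_m\}$ at $\ell$ (Lemma~\ref{lem:cycles}), the paper fires $t_1$ instead of $r_1$; the cycle then furnishes the missing return path to $\ell$, so every chain that needed $\ell$ can be rerouted through $\{t_2,\ldots,t_m\}$. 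In the example above this means firing $r_3$ rather than $r_1$, which works. So your proposal is missing this cycle/no-cycle case split, and without it the induction breaks.
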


Observe that, while  $\phi_\mathit{steady}$ has exponential length in $\TA$ when constructed na\"{i}vely (because of the exponentially many disjunctions in $\phi_\textit{appl}$), its satisfiability is in NP. Indeed, we first non-deterministically guess one of the disjunctions for each conjunction of $\phi_\textit{appl}$ and then check in nondeterministic polynomial time that the  (polynomial sized) formula with only these disjuncts is satisfiable. This is possible because existential Presburger arithmetic is known to be in NP~\cite{ExistPres}.

\subsubsection*{The formula $\phi_\mathit{reach}$.} By Proposition~\ref{prop:gen-to-steady}, 
every path from $\sigma$ to $\sigma'$ in a threshold automaton with a set $\Precond$ of guards can be written in the form 
\begin{equation*}
		\sigma = \sigma_0 \xrightarrow{*} \sigma'_0 \rightarrow \sigma_1 \xrightarrow{*} \sigma'_1 \rightarrow \sigma_2 \dots \sigma_{K} \xrightarrow{*} \sigma'_{K} = \sigma'
	\end{equation*}
where $K=|\Precond| + 1$, and $\sigma_i \xrightarrow{*} \sigma'_i$ is a steady path for each $0\le i \le K$.
It is easy to see from the definition of the 
	transition relation between configurations that
	we can construct a polynomial sized existential 
	Presburger formula $\phi_\mathit{step}$ such that 
	$\phi_\mathit{step}(\sigma,\sigma')$
	is true iff $\sigma'$ can be reached from $\sigma$ by firing
	at most one rule.
Thus, we  define $\phi_\mathit{reach}(\sigma,\sigma')$ to be 
	\begin{equation*}
		\exists \sigma_0, \sigma'_0, \dots, \sigma_{K}, \sigma'_{K} 
		\left(\sigma_0 = \sigma \land \sigma'_{K} = \sigma' \land \bigwedge_{0 \le i \le K} \phi_\mathit{steady}(\sigma_i,\sigma'_i) 
		\land  \bigwedge_{0 \le i \le K-1}
		\phi_\mathit{step}(\sigma'_i,\sigma_{i+1}) \right)
	\end{equation*}

\begin{theorem} \label{th:main}
Given a threshold automaton~$\TA$, there is an existential Presburger formula $\phi_\mathit{reach}$ such that $\phi_\mathit{reach}(\sigma,\sigma')$ holds if{}f 
$\sigma \xrightarrow{*} \sigma'$. 
\end{theorem}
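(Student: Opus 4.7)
The plan is to establish the biconditional $\phi_\mathit{reach}(\sigma,\sigma') \Leftrightarrow \sigma \xrightarrow{*} \sigma'$ by combining Proposition~\ref{prop:gen-to-steady}, which bounds the number of steady segments of any path, with Theorem~\ref{th:kirchoffpath}, which gives an existential Presburger characterization of each such segment. First I would verify that $\phi_\mathit{reach}$ is syntactically an existential Presburger formula: $\phi_\mathit{steady}$ already is, by construction, and $\phi_\mathit{step}(\sigma,\sigma')$ can be written as a polynomial-size disjunction ``$\sigma = \sigma'$, or there is some rule $r$ with $\sigma \models r.\varphi$ and $r(\sigma) = \sigma'$'', indexed over the finitely many rules of $\TA$. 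Since an existential quantifier prefix over a conjunction of existential Presburger formulas remains existential Presburger, the whole formula $\phi_\mathit{reach}$ lies in the desired fragment.

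For the ``if'' direction I would start from a schedule $\tau$ realizing $\sigma \xrightarrow{*} \sigma'$. By Proposition~\ref{prop:gen-to-steady}, $\finpath{\sigma}{\tau}$ decomposes into at most $|\Precond|+1 \leq K+1$ maximal steady sub-paths, so I can pad with empty steady segments to obtain exactly $K+1$ segments with endpoints $\sigma_i, \sigma'_i$ separated by at most one rule firing (precisely the rule whose occurrence flips some guard's truth value between consecutive segments). Theorem~\ref{th:kirchoffpath} then delivers $\phi_\mathit{steady}(\sigma_i,\sigma'_i)$ for each segment, while $\phi_\mathit{step}(\sigma'_i,\sigma_{i+1})$ follows directly from the definition of the transition relation. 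Conversely, a satisfying assignment for $\phi_\mathit{reach}(\sigma,\sigma')$ supplies witnesses $\sigma_0, \sigma'_0, \dots, \sigma_K, \sigma'_K$; Theorem~\ref{th:kirchoffpath} converts each $\phi_\mathit{steady}(\sigma_i,\sigma'_i)$ into a steady schedule $\tau_i$ with $\tau_i(\sigma_i)=\sigma'_i$, and $\phi_\mathit{step}(\sigma'_i,\sigma_{i+1})$ yields either $\sigma'_i = \sigma_{i+1}$ or a single rule firing from $\sigma'_i$ to $\sigma_{i+1}$. Concatenating $\tau_0, \tau_1, \dots, \tau_K$ (interleaved with the separating single-step firings) produces a schedule witnessing $\sigma \xrightarrow{*} \sigma'$.

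The only delicate point is justifying the bound on the number of steady segments used in the forward direction, and this is exactly the content of Proposition~\ref{prop:gen-to-steady}: monotonicity of guards forces every rise guard to flip from false to true at most once and every fall guard from true to false at most once along any schedule, so the context changes at most $|\Precond|$ times. With that proposition in hand, the proof reduces to a routine gluing of the pieces already developed, and the resulting formula is existential Presburger as required by the statement.
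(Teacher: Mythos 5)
Your proposal is correct and follows essentially the same route as the paper: decompose an arbitrary path into at most $K{+}1$ steady segments via Proposition~\ref{prop:gen-to-steady}, encode each via Theorem~\ref{th:kirchoffpath}, and glue with a one-step formula $\phi_\mathit{step}$. Your direct definition of $\phi_\mathit{step}$ as a disjunction over rules is a minor cosmetic variant of the paper's (which reuses $\phi_\mathit{steady}$ with a $\sum_r x_r \le 1$ constraint and drops the same-context requirement), but the argument is otherwise identical.
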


As deciding the truth of  existential Presburger formulas is in NP, we obtain:
\begin{corollary} \label{cor:reach}
The reachability and parameterized reachability problems are in~NP.
\end{corollary}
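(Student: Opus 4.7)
The plan is to obtain NP membership of both problems as a direct consequence of Theorem~\ref{th:main}, by reducing each to a satisfiability question about existential Presburger arithmetic, which is in NP by~\cite{ExistPres}.

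For the (non-parameterized) reachability problem with input $(\TA,\sigma_0,\local_{=0},\local_{>0})$, I would first invoke Theorem~\ref{th:main} to obtain $\phi_{\mathit{reach}}$. I would then build the existential Presburger sentence
\[
\Psi(\sigma_0) \;\equiv\; \exists \sigma.\; \phi_{\mathit{reach}}(\sigma_0,\sigma) \,\land\, \bigwedge_{\ell \in \local_{=0}} \sigma.\counters(\ell) = 0 \,\land\, \bigwedge_{\ell \in \local_{>0}} \sigma.\counters(\ell) > 0,
\]
in which $\sigma_0$ is replaced by its concrete integer values. By Theorem~\ref{th:main} this sentence is satisfiable if{}f some $\sigma$ reachable from $\sigma_0$ meets the target profile.

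For the parameterized version, the initial configuration is not fixed; I would simply existentially quantify over it while imposing that it is initial, giving the sentence
\[
\exists \sigma_0, \sigma.\; \phi_{\mathit{reach}}(\sigma_0,\sigma) \,\land\, \sigma_0.\vars = \vec 0 \,\land\, \bigwedge_{\ell \notin \initlocal} \sigma_0.\counters(\ell) = 0 \,\land\, \ResCond(\sigma_0.\param) \,\land\, \sum_{\ell \in \initlocal} \sigma_0.\counters(\ell) = \syssize(\sigma_0.\param) \,\land\, \textit{(target constraints)}.
\]
All additional conjuncts are linear constraints since $\ResCond$ and $\syssize$ are integer linear by assumption, so the sentence remains existential Presburger.

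The one point that needs care is the size blow-up already flagged after Theorem~\ref{th:kirchoffpath}: the subformula $\phi_{\textit{appl}}$ contains exponentially many disjuncts, and $\phi_{\mathit{reach}}$ inherits this blow-up through its $K+1 = |\Precond|+2$ copies of $\phi_{\mathit{steady}}$. This is the only obstacle, and it is handled exactly as sketched there: an NP machine first guesses, for each of the polynomially many occurrences of a disjunction indexed by $S \subseteq \ruleset$, one specific chain $S$ of polynomial length; plugging these guesses back in produces a polynomial-sized existential Presburger formula equisatisfiable with the original, whose satisfiability can then be decided in NP by~\cite{ExistPres}. Composing the two nondeterministic phases keeps the overall procedure in NP, establishing the corollary for both problems.
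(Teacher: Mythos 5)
Your argument is correct and follows essentially the same route as the paper: invoke Theorem~\ref{th:main} to obtain $\phi_\mathit{reach}$, existentially quantify the relevant configurations (adding initiality constraints in the parameterized case), conjoin the $\local_{=0}$/$\local_{>0}$ constraints, and decide satisfiability of the resulting existential Presburger sentence, handling the exponential $\phi_\textit{appl}$ blow-up by guessing one chain per disjunction as already explained after Theorem~\ref{th:kirchoffpath}. You are in fact slightly more explicit than the paper about spelling out the initial-configuration constraints and about re-flagging the blow-up at this point, but these are presentational differences, not a different proof.
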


\begin{remark}
In~\cite{Reach} an algorithm was given for parameterized reachability of threshold automata in
which the updates of all rules contained in loops are equal to $\vec{0}$. 
Our algorithm does not need this restriction.
\end{remark}

\section{Safety and Liveness}\label{sec:liveness}

We recall the definition of \emph{Fault-Tolerant
Temporal Logic} ($\ELTLFT$), the fragment of LTL used in~\cite{ELTLFT} to specify and verify properties of a large number of fault-tolerant algorithms. $\ELTLFT$ has the following syntax, where $\mathit{S} \subseteq \local$ is a set of locations and $\mathit{guard} \in \Precond$ is a guard:  
\begin{align*}
    \psi &::=
    \mathit{pf} \ |\ \ltlG \psi\ |\ \ltlF \psi \ |\  \psi \wedge \psi
    &
    \mathit{cf} &::= \mathit{S}=0 \mid  \neg (\mathit{S}=0) \mid \mathit{cf} \wedge \mathit{cf} 
    \\
    \mathit{pf} &::= \mathit{cf} \ |\ \mathit{gf} \Rightarrow \mathit{cf}
    &
    \mathit{gf} &::= \mathit{guard}
    \ |\ \mathit{gf}\wedge \mathit{gf} \ |\ \mathit{gf} \vee \mathit{gf}
\end{align*}
An infinite path $\infpath{\sigma}{\tau}$ starting at $\sigma=(\counters,\vars,\param)$,
 satisfies $\mathit{S}=0$ if $\counters(\ell) = 0$ for every $\ell \in S$, and $\mathit{guard}$ if $(\vars, \param)$ satisfies $\mathit{guard}$. The rest of the semantics is standard.  The negations of specifications of the benchmarks  \casest{} can be expressed in $\ELTLFT$, as we are interested in finding 
 possible violations.

\begin{example}
One specification of the algorithm from Figure~\ref{fig:st} is that 
	if $\mathit{myval}_i=1$ for every correct process $i$, then
	eventually $\mathit{accept}_j = \mathit{true}$ for some correct 
	process $j$.
In the words of the automaton from Figure~\ref{fig:stexample}, 
	a violation of this property would mean that 
	initially all correct processes are in location~$\ell_1$,
	but no correct process ever reaches location~$\ell_3$.
In $\ELTLFT$ we write this as 
\[\{\ell_0, \ell_2, \ell_3\} =0 \;\wedge\; \ltlG (\{\ell_3\} =0) \, .\]
This has to hold under the fairness constraint
	$$
	\ltlG \ltlF \bigg( \, ((x \ge t+1 \vee  x\ge n-t) \Rightarrow \{\ell_0\}\!=\!0)  \; \wedge \;  \{\ell_1\}\!=\!0 \; \wedge \;  (x \ge n-t \Rightarrow \{\ell_2\}\!=\!0) \, \bigg).
	$$
	As we have self-loops at locations~$\ell_0$ and $\ell_2$, a process could stay forever in one of these two states, even if it has collected enough messages, i.e., if $x\ge t+1$ or $x\ge n-t$.
	This is the behavior that we want to prevent with such a fairness constraint.
	Enough sent messages should force each process to progress, so the location eventually becomes empty. 
	Similarly, as the rule leading from $\ell_1$ has a trivial guard, we want to make sure that all processes starting in $\ell_1$ eventually (send a message and) leave $\ell_1$ empty, as required by the algorithm.
\end{example}

In this section we study the following problem:
\begin{definition}[Parameterized safety and liveness]
	Given a threshold automaton~$\TA$ and a formula~$\varphi$ in
	$\ELTLFT$,
	check whether there is an initial configuration $\sigma_0$ and an infinite schedule~$\tau$ applicable to~$\sigma_0$ such
	that~$\infpath{\sigma_0}{\tau} \models \varphi$.
\end{definition}

Since parameterized coverability is NP-hard, it follows
that parameterized safety and liveness is also NP-hard. We prove that for automata with \emph{multiplicative environments} 
(see Definition \ref{def:mult}) 
parameterized safety and liveness is~in~NP.

\begin{theorem}~\label{th:liveness}
Parameterized safety and liveness of threshold automata with multiplicative environments is in NP.
\end{theorem}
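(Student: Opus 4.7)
The plan is to show that every infinite satisfying path can be put in a lasso shape of polynomial description, and then to encode the existence of such a lasso as a polynomial-size existential Presburger formula; NP then follows from~\cite{ExistPres}. The structural driver is Proposition~\ref{prop:gen-to-steady}: along any path the context can change at most $k := |\Precond|$ times and is therefore eventually constant. Consequently, any infinite path satisfying $\varphi$ can be rewritten as a concatenation
\[
\sigma_0 \xrightarrow{*} \sigma_1 \xrightarrow{*} \cdots \xrightarrow{*} \sigma_{m+1}
\]
of at most $k+1$ steady segments, followed by the infinite iteration $\tau_\infty^\omega$ of a single loop schedule $\tau_\infty$ executed in the fixed final context $\statectx(\sigma_{m+1})$. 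In that fixed context the threshold automaton behaves like a VAS, so by a standard Dickson-style pumping argument one can arrange $\tau_\infty$ to have zero net update on every counter and to internally contain a witness for every $\ltlG\ltlF$-conjunct of $\varphi$; the trivially-guarded self-loops present in threshold automata (cf.\ $sl_1,sl_2,sl_3$ in Figure~\ref{fig:stexample}) make this internal witnessing unproblematic.

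Given this shape, I would write a polynomial-size existential Presburger formula $\Phi_\varphi$ whose satisfying assignments describe such lassos. For each segment $0 \le i \le m$, invoke $\phi_\mathit{steady}(\sigma_i,\sigma_{i+1})$ from Theorem~\ref{th:kirchoffpath}, augmented by $\statectx(\sigma_i) \subsetneq \statectx(\sigma_{i+1})$ to force a genuine context change. For the loop, introduce multiplicity variables $\{x_r^\infty\}_{r\in\ruleset}$ and impose, by analogy with $\phi_\mathit{steady}$: (i) counter-balance equal to zero at every location; (ii) the guard of every rule with $x_r^\infty > 0$ holds in $\sigma_{m+1}$; (iii) a fireability chain rooted at a location populated in $\sigma_{m+1}$ exists for each such rule; (iv) for every shared variable $x$ that still has a pending rise-guard threshold outside $\statectx(\sigma_{m+1})$, the loop's net update on $x$ is zero, so that iteration cannot inadvertently change the context. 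Finally add the temporal-formula clauses of $\varphi$: for each $\ltlG\psi$ conjunct require the propositional formula $\psi$ at every configuration of every segment and of the loop; for each $\ltlF\psi$ conjunct nondeterministically guess a phase boundary or intra-loop position witnessing $\psi$; for each $\ltlG\ltlF\psi$ conjunct require the witness to lie inside the loop. Since the guard-part of $\psi$ is constant on a steady segment, the $\ltlG$-along-a-segment requirements reduce to counter-level assertions expressible from the segment endpoints and rule multiplicities via a rule-reordering argument.

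Multiplicativity of the environment enters in two places. First, scaling the configuration by a sufficiently large positive integer $\mu$ keeps it inside $\ResCond$ (Definition~\ref{def:mult}(i)) and preserves enabledness of all rules (Definition~\ref{def:mult}(ii)), thereby providing enough processes to concurrently realise all chain-fireability witnesses required for the loop. Second, for shared variables whose thresholds are all already activated in $\statectx(\sigma_{m+1})$, scaling guarantees that any nonzero net update along $\tau_\infty$ remains safely below any remaining threshold during the finitely many iterations needed to witness all liveness and fairness conjuncts, after which the schedule idles on the self-loops. The resulting formula $\Phi_\varphi$ has size polynomial in $|\TA| + |\varphi|$; disjunctive choices inside $\phi_\mathit{appl}$ and the position guesses for $\ltlF$ and $\ltlG\ltlF$ witnesses are handled nondeterministically exactly as in the proof of Corollary~\ref{cor:reach}, so satisfiability of $\Phi_\varphi$ lies in NP.

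The main obstacle is the correctness of the lasso reduction in the presence of shared-variable updates along $\tau_\infty$: a naive VAS argument only produces a counter-balanced loop whose shared-variable updates may be positive, and pure repetition would eventually cross a pending threshold in $\statectx(\sigma_{m+1})$ and violate steadiness. The fix is the partition of shared variables into ``activated'' and ``pending'' categories, imposing zero net update on the latter (point (iv) above) and using multiplicativity to provide room for the former. A secondary, more technical obstacle is the rigorous translation of ``$\psi$ holds throughout a steady segment'' into a purely arithmetical condition on rule multiplicities and phase-boundary counts, which requires showing that any valid multiplicity profile can be realised by a schedule that also keeps the relevant location sets populated (resp.\ empty) throughout.
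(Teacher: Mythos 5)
Your lasso-shaped decomposition is a plausible alternative to the paper's cut-graph machinery, but the step you flag as a ``secondary, more technical obstacle'' is in fact the crux of the proof and is not solved in your proposal. Translating ``proposition $p$ holds at every configuration of a steady segment'' into an existential Presburger constraint on rule multiplicities is genuinely problematic: for $p \equiv \neg(S=0)$, a multiplicity profile $\{x_r\}_{r\in\ruleset}$ can satisfy the Kirchhoff balance constraints with both segment endpoints populating $S$ while \emph{every} realizing schedule passes through a configuration that empties $S$ (the witnessing process may need to leave $S$ before the re-populating process arrives). The paper's Lemma~\ref{lem:reach-useful} resolves this only up to a factor-of-$2$ scaling: $\phi_p(\sigma,\sigma')$ being satisfiable does not yield a path from $\sigma$ to $\sigma'$ satisfying $p$ throughout, but a path from $2\cdot\sigma$ to $2\cdot\sigma'$. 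The mechanism is the lifted schedule that runs the underlying steady schedule twice in series, so that every intermediate configuration has the form $\sigma+\eta$ or $\eta+\sigma'$ for some $\eta$ along the original path, whence endpoint constraints propagate to all intermediate configurations. Multiplicativity is exactly what makes this doubling legal; your invocation of it as merely ``providing room'' for chain-fireability does not supply this argument, nor the bookkeeping that chains the doubled witnesses from one phase boundary to the next.

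There are also two correctness issues in the loop encoding. Constraint (iv) restricts only variables with a \emph{pending rise-guard} threshold, but incrementing a variable can also falsify a currently-true \emph{fall} guard, which changes the context and can disable a loop rule after finitely many iterations; the condition actually needed (and imposed in the paper's lasso definition) is that if any loop rule increments $x$ then $x$ must not appear in the fall guard of any loop rule. Separately, the remark that one can ``idle on the self-loops'' once all liveness conjuncts have been witnessed assumes every location carries a trivially-guarded self-loop — that is a modelling choice in Figure~\ref{fig:stexample}, not part of the definition of threshold automata — and, more fundamentally, an infinite suffix that only idles cannot satisfy a $\ltlG\ltlF$ conjunct whose witnesses occur inside the loop body. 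These are the kinds of pitfalls the paper's normal-form/cut-graph apparatus (Corollary~\ref{cor:topo-ord}) is designed to keep track of systematically; a from-scratch lasso encoding must reconstruct all of them, and yours currently does not.
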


\begin{figure}[t]
\centering
\begin{tikzpicture}[font=\small,>=latex];
    \tikzstyle{node}=[circle,fill=black,minimum size=1mm,inner sep=0cm];
    \tikzstyle{cut}=[cross out,thick,draw=red!90!black,
    minimum size=0.15cm,inner sep=0mm,outer sep=.1mm];
    \tikzstyle{path}=[-];
    \tikzstyle{Gfin}=[-, very thick, blue];
    
    \begin{scope}[scale=0.7]
    \node[node,label={below:$a$}] (init) at (0, 0) {};
    \node[node,label={[xshift=0mm]below:$\ltlF b$}] (01) at (1.5,.75) {};
    \node[node,label={[xshift=0mm]above:$\ltlF c$}] (02) at (1.5,-.75) {};
    \node[node,label={below:$\mathit{loop_{st}}$}] (ls) at (3,0) {};
    \node[node,label={[xshift=0mm]below:$\ltlG\ltlF e$}] (031) at (4.5,.75) {};
    \node[node,label={below:$\mathit{loop_{end}}$}] (le) at (6, 0) {};


    \draw[->] (ls) edge (le);
    \draw[->] (init) edge (01);
    \draw[->] (init) edge (02);

    \draw[->] (ls) -- (031);
    \draw[->] (031) -- (le);

    \draw[->] (init) -- (ls);
    \draw[->] (01) -- (ls);
    \draw[->] (02) -- (ls);
    \end{scope}
    
     \begin{scope}[xshift=6cm, yshift=0cm]
     
    \node[node] (0) at (0, 0) {};
    
    \foreach \x/\n in {.6/A, 1.2/B, 1.8/C, 2.4/D, 3.0/E, 3.6/F}
    \node[cut] (\n) at (\x,0) {}; 
    
    \draw[path] (0) -- (A);
    \draw[path] (A) -- (B);
    \draw[path] (B) -- (C);
    \draw[path] (C) -- (D);
    \draw[Gfin] (D) -- (E);
    \draw[Gfin] (E) -- (F);
    \draw[->] (F) edge[bend right=55,looseness=1.7] (D);
    
    \draw[|<->|] ($(A)+(0,-.5)$)
    --node[midway, fill=white, text=black]
    {$d$} ($(F)+(0,-.5)$);
    \node at ($(A)+(0,-.25)$) {$a$};
    \node at ($(B)+(0,-.25)$) {$b$};
    \node at ($(C)+(0,-.25)$) {$c$};
    \node at ($(E)+(0,-.25)$) {$e$};
    \end{scope}

\end{tikzpicture}
\caption{The cut graph of a formula $\ltlF (a \wedge \ltlF b \wedge 
	\ltlF c \wedge \ltlG d \wedge \ltlG \ltlF e)$ (left) 
	and one lasso shape for a chosen topological 
	ordering $a \le \ltlF b \le \ltlF c \le \mathit{loop_{st}} \le \ltlG \ltlF e \le 
	\mathit{loop_{end}}$ (right).}\label{fig:lasso}
\end{figure}

The proof, which can be found in the Appendix, is very technical, and we only give a rough sketch here. The proof relies on two notions introduced in~\cite{ELTLFT}.   First, it is shown in~\cite{ELTLFT} that every $\ELTLFT$ formula is equivalent to a formula in \emph{normal form} of shape $\phi_0 \land \ltlF \phi_1  \land \dots  \land \ltlF \phi_k  \land  \ltlG \phi_{k+1}$, where $\phi_0$ is a propositional formula and $\phi_1,\dots,\phi_{k+1}$ are themselves in normal form. Further, formulas can be put in normal form in polynomial time. The second notion introduced in~\cite{ELTLFT} is the \emph{cut graph} $Gr(\varphi)$ of a formula in normal form. For our sketch it suffices to know that $Gr(\varphi)$ is a directed acyclic graph with two special nodes $\mathit{loop_{st}}$ and $\mathit{loop_{end}}$, and every other node is a subformula of $\varphi$
in normal form (see Figure~\ref{fig:lasso}). 

For a formula $\varphi \equiv \phi_0 \land \ltlF \phi_1 \land \dots \land \ltlF \phi_k \land \ltlG \phi_{k+1}$, we will say that its \emph{local proposition}
is $\phi_0$ and its \emph{global proposition} is the local proposition
of $\phi_{k+1}$.
It is shown in~\cite{ELTLFT} that, given $\varphi=\phi_0 \land \ltlF \phi_1  \land \dots  \land \ltlF \phi_k  \land  \ltlG \phi_{k+1}$, some infinite path satisfies $\varphi$ if{}f there exists a topological ordering 
$v_0,v_1,\dots,v_c = loop_{st},v_{c+1},\dots,v_l = loop_{end}$ of the cut graph and a path
$\sigma_0,\tau_0,\sigma_1,\dots,\sigma_c,\tau_c,\dots,\sigma_{l-1},\tau_{l-1},\sigma_l$ such that, roughly speaking, (among other technical conditions) 
every configuration $\sigma_i$ satisfies the local proposition of $v_i$
and every configuration in $\setconf{\sigma_i}{\tau_i}$ 
satisfies the global proposition of every $v_j$ where $j \le i$.

Using multiplicativity and our main result that reachability is definable
in existential Presburger arithemtic, we show that for every proposition $p$, we can construct an existential formula $\phi_p(\sigma,\sigma')$ such that: If there is a path between $\sigma$ and $\sigma'$, \emph{all} of whose configurations satisfy $p$, then $\phi_p(\sigma,\sigma')$ is satisfiable. Further, if $\phi_p(\sigma,\sigma')$ is satisfiable, then
there is a path between $2 \cdot \sigma$ and $2 \cdot \sigma'$
\emph{all} of whose configurations satisfy $p$. (Here $2 \cdot \sigma = ((2 \cdot \sigma.\counters), (2 \cdot \sigma.\vars), (2 \cdot \sigma.\param))$).
Then, once we have fixed a topological ordering $V = v_0,\dots,v_l$, (among other conditions), we check if there are configurations $\sigma_0,\dots,\sigma_l$
such that for every $i$, $\sigma_i$ satisfies the local proposition of $v_i$ and for every $j \le i$, 
$\phi_{p_j}(\sigma_i,\sigma_{i+1})$ is satisfiable where $p_j$ is the global proposition of $v_j$. Using multiplicativity, we then show that this
procedure is sufficient to check if the given specification $\varphi$ is satisfied.


Our algorithm consists therefore of the following steps: (1) bring $\varphi$ in normal form; (2) construct the cut graph $Gr(\varphi)$; (3) guess a topological ordering of the nodes of $Gr(\varphi)$; (4) for the guessed ordering,
check in nondeterministic polynomial time if the required sequence
$\sigma_0,\dots,\sigma_l$ exists.


\begin{remark}
From an algorithm given in~\cite{ELTLFT} one can infer that
parameterized safety and liveness is in NP for threshold automata with multiplicative environments,
where all cycles are simple, and rules in cycles have update $\vec{0}$.
(The NP bound was not explicitly given in \cite{ELTLFT}.) Our algorithm only requires multiplicativity.
\end{remark}




\section{Synthesis of Threshold Guards} \label{sec:synthesis}

We study the \emph{bounded synthesis} problem for constructing parameterized threshold guards in threshold automata satisfying
a given specification. 
\paragraph{Sketch threshold automata.}
Let an \emph{indeterminate} be a variable that can take values over rational numbers. We consider threshold automata whose guards can contain indeterminates. More precisely, a \emph{sketch} threshold automaton is a tuple $\TA=(\local, \initlocal, \globset, \ruleset)$, just as before, except for the following change.
Recall that in a threshold automaton, a guard is an inequality of one 
of the following two forms:
$$x\ge a_0 + a_1 \cdot p_1 + \ldots + a_{|\paraset|} \cdot p_{|\paraset|}
\;\mbox{ or }\;
x < a_0 + a_1 \cdot p_1 + \ldots + a_{|\paraset|} \cdot p_{|\paraset|}$$
where $a_0, a_1, \dots, a_{|\paraset|}$ are rational numbers.
In a sketch threshold automaton, some of the $a_0, a_1, \dots, a_{|\paraset|}$ can be \emph{indeterminates}. Moreover, indeterminates can be shared between two or more guards.

Given a sketch threshold automaton $\TA$ and an assignment $\mu$ to the indeterminates,
we let $\TA[\mu]$ denote the threshold automaton obtained by substituting the indeterminates by their values in $\mu$. We define the \emph{bounded synthesis} problem: 
\begin{quote}
\noindent Given: An environment $Env$, a sketch threshold automaton $\TA$ with indeterminates $v_1,\dots,v_m$,
a formula $\varphi$ of $\ELTLFT$, and a polynomial $p$. \\
\noindent Decide:  Is there an assignment $\mu$ to $v_1,\dots,v_m$ 
of size $O(p(|\TA|+|\varphi|))$ (i.e., the vector $(\mu(v_1), \ldots, \mu(v_m))$ of rational numbers can be encoded in binary using $O(p(|\TA|+|\varphi|))$ bits) such that $\TA[\mu]$ satisfies $\lnot \varphi$ (i.e., such that for every initial configuration $\sigma_0$ in $\TA[\mu]$, every infinite run starting from $\sigma_0$ satisfies $\lnot \varphi$)?
\end{quote}

We say that an assignment $\mu$ to the indeterminates makes the environment
multiplicative if the conditions of Definition~\ref{def:mult} are satisfied
after plugging in the assignment $\mu$ in the automaton. 
In the following, we will only be concerned with assignments
which make the environment multiplicative.

Since we can guess an assignment in polynomial time, 
by Theorem~\ref{th:liveness} it follows \begin{theorem}\label{th:bound-synthesis-upperbound}
	Bounded synthesis is in $\Sigma_2^p$.
\end{theorem}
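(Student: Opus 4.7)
(Sketch.)
The plan is to exhibit a nondeterministic polynomial-time algorithm with access to an \textsf{NP} oracle, exploiting the well-known equivalence $\Sigma_2^p = \textsf{NP}^{\textsf{NP}}$. Since the problem asks for the existence of a \emph{small} witness $\mu$ followed by a \emph{universal} statement about runs of $\TA[\mu]$, the structure aligns naturally with $\exists\forall$ alternation.

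First, guess an assignment $\mu$ to the indeterminates $v_1, \dots, v_m$. By the problem's definition, it suffices to consider $\mu$ whose binary encoding is of size $O(p(|\TA|+|\varphi|))$, so $\mu$ can be written down in polynomial time. Second, verify in deterministic polynomial time that $\mu$ makes the environment multiplicative: the conditions of Definition~\ref{def:mult} are integer linear statements about the rational coefficients produced by substituting $\mu$, and they can be checked directly from the shape of the resilience condition $\ResCond$, the function $\syssize$, and the guards of $\TA[\mu]$ (which are now ordinary threshold guards). If this check fails, reject the guess.

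Third, having fixed $\TA[\mu]$ and $\varphi$, we need to decide whether $\TA[\mu]$ satisfies $\neg\varphi$, i.e.\ whether for every initial configuration $\sigma_0$ and every infinite schedule $\tau$ applicable to $\sigma_0$, we have $\infpath{\sigma_0}{\tau} \not\models \varphi$. This is precisely the \emph{complement} of the parameterized safety and liveness problem for the threshold automaton $\TA[\mu]$ with multiplicative environment and the formula $\varphi$. By Theorem~\ref{th:liveness}, parameterized safety and liveness over multiplicative environments is in \textsf{NP}, so its complement is in \textsf{coNP}, and hence is decidable by a single query to an \textsf{NP} oracle.

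Putting the three steps together, the overall procedure is: nondeterministically guess $\mu$, verify multiplicativity in deterministic polynomial time, and then issue one oracle call to an \textsf{NP} oracle to confirm that $\TA[\mu] \models \neg\varphi$. This places bounded synthesis in $\textsf{NP}^{\textsf{NP}} = \Sigma_2^p$, as required. The only subtlety worth flagging is that all the work is already done by Theorem~\ref{th:liveness}: once the indeterminates are instantiated, the inner verification task is exactly the problem whose \textsf{NP} membership we established in Section~\ref{sec:liveness}, so no new machinery beyond the standard $\exists\forall$ normal form of $\Sigma_2^p$ is needed. \qed
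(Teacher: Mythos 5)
Your proposal is correct and follows essentially the same route as the paper: guess a polynomial-size assignment $\mu$, verify that $\mu$ makes the environment multiplicative, and then invoke Theorem~\ref{th:liveness} to decide in $\mathsf{coNP}$ (equivalently, with a single $\mathsf{NP}$-oracle query) whether $\TA[\mu]$ satisfies $\lnot\varphi$. The only cosmetic difference is that the paper conservatively bounds the multiplicativity check by $\mathsf{NP}$ (encoding the conditions of Definition~\ref{def:mult} as an integer linear program) rather than claiming a deterministic polynomial-time check, but this does not affect the $\Sigma_2^p$ bound.
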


\noindent By a reduction from the $\Sigma_2$-SAT problem, we also provide a matching lower~bound.
\begin{theorem} \label{th:bound-synthesis-lowerbound}
	Bounded synthesis is $\Sigma_2^p$-complete.
\end{theorem}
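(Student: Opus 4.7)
My plan is a polynomial-time reduction from $\Sigma_2$-$3$-SAT. For an instance $\exists X\forall Y.\,\psi(X,Y)$ with $\psi=C_1\wedge\cdots\wedge C_K$ in $3$-CNF, I build a sketch automaton $\TA$ together with $\varphi\equiv\ltlF\,\neg(\{\ell_F\}{=}0)$ and a linear polynomial $p$, so that "$\TA[\mu]\models\neg\varphi$" (i.e.\ "$\ell_F$ is unreachable from every initial configuration") will coincide with "$\forall Y.\,\psi(X(\mu),Y)$". The two quantifier alternations line up directly: the outer $\exists\mu$ encodes the choice of $X$, and the inner universal "$\forall\sigma_0.\,\forall\pi.\,\pi\models\neg\varphi$" encodes $\forall Y$.

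The construction of $\TA$ piggybacks on the coverability-from-$3$SAT gadget of Theorem~\ref{theorem:paramcov} (Figure~\ref{fig:paramcov}), flipped to test the \emph{falsification} of a single clause rather than the satisfaction of all of them. The $Y$-guessing gadget over shared variables $y_j,\bar{y_j}$ is reused unchanged. For each $x_i$ I add a pair of indeterminates $(a_i,b_i)$, intended to take values $(0,1)$ (for $x_i=\text{true}$) or $(1,0)$ (for $x_i=\text{false}$). A new initial location $\ell_0$ has one outgoing chain per clause; the chain for $C_k=L_{k,1}\vee L_{k,2}\vee L_{k,3}$ runs through three transitions to $\ell_F$, each guarded by "$L_{k,r}$ is false": rise guards $\bar{y_j}\ge 1$ or $y_j\ge 1$ for $Y$-literals, and homogeneous fall guards $w<a_i\cdot\textit{one}$ or $w<b_i\cdot\textit{one}$ for $X$-literals (where $w$ is a never-updated shared variable pinned to $0$ and $\textit{one}$ is a fresh unconstrained parameter). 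To cope with a potentially degenerate $\mu$ in which $(a_i,b_i)=(0,0)$ makes neither $X$-literal check fire, I add, from a second initial location $\ell_0'$, one per-$x_i$ "degeneracy" chain of rise guards $w\ge a_i,\,w\ge b_i$ to $\ell_F$; this chain is executable iff $a_i=b_i=0$.

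Correctness splits as usual. $(\Rightarrow)$: if $X^\ast$ witnesses the $\Sigma_2$-SAT instance, pick $\mu$ encoding $X^\ast$ via $(a_i,b_i)\in\{(0,1),(1,0)\}$; no degeneracy chain fires, and any run reaching $\ell_F$ via some $k^\ast$-clause chain lets one read off from the passed literal checks a $Y$ under which $C_{k^\ast}(X^\ast,Y)$ is false, contradicting $\forall Y.\,\psi(X^\ast,Y)$. $(\Leftarrow)$: if no such $X^\ast$ exists, every multiplicative $\mu$ either has some $(a_i,b_i)=(0,0)$ (the degeneracy chain fires and $\ell_F$ is reached), or else its "effective" $X(\mu)$ extends to a total $X$ on which the falsity of the $\Sigma_2$-SAT hypothesis delivers $Y^\ast,k^\ast$ with $\neg C_{k^\ast}(X,Y^\ast)$; the initial configuration with $\textit{one}=1$ and one process each in $\ell_{y_j}$ and $\ell_0$ admits a run that installs $Y^\ast$ through the guessing gadgets and traverses the $k^\ast$-chain to $\ell_F$. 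The bound $p$ only needs to accommodate $\mu\in\{0,1\}^{2n}$, which is polynomial.

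The main technical obstacle I anticipate is compliance with Definition~\ref{def:mult}, since bounded synthesis is restricted to $\mu$'s for which $\TA[\mu]$ is multiplicative. Every guard in the construction has been deliberately chosen to land in one of the two multiplicative shapes: rise guards $x\ge a_0$ with $a_0\in\{0,1,a_i,b_i\}$ (nonnegative for $\mu$ with $a_i,b_i\ge 0$), and fall guards homogeneous in the parameters (of the form $x<c\cdot\textit{one}$, $a_0=0$). With a trivial environment (unconstrained parameters $N$ and $\textit{one}$ and $\syssize=N$), Definition~\ref{def:mult}(i) is immediate and (ii) holds for every $\mu$ with $a_i,b_i\ge 0$, which is precisely the range exercised by the two correctness directions.
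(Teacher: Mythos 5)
There is a genuine gap: the source problem you reduce from, $\exists X\,\forall Y.\,\psi(X,Y)$ with $\psi$ in \emph{3-CNF}, is not $\Sigma_2^p$-hard. For any fixed $X_0$, deciding $\forall Y.\,\psi(X_0,Y)$ for a 3-CNF $\psi$ amounts to checking that every clause $C_k[X_0]$ is a tautology over $Y$, which holds iff $C_k$ contains a literal made true by $X_0$ or contains complementary $Y$-literals; this is decidable in P. Hence $\exists X\,\forall Y.\,\psi$ with $\psi$ in 3-CNF is only NP-complete, and a reduction from it cannot establish $\Sigma_2^p$-hardness unless the polynomial hierarchy collapses. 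Your own gadget makes this visible: you arrange ``$\ell_F$ reachable iff some chain for some $\neg C_k$ traverses,'' i.e.\ iff the 3-\emph{DNF} formula $\neg\psi$ is satisfiable over $Y$, and 3-DNF satisfiability is trivial. The $\Sigma_2^p$-complete form of the problem takes $\psi$ in 3-\emph{DNF}, so that $\neg\psi$ is 3-CNF and the per-$X_0$ question $\exists Y.\,\neg\psi(X_0,Y)$ is genuinely coNP-hard to refute.

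Switching to DNF then breaks the per-clause-chain strategy wholesale: a $Y$ falsifying a DNF $\psi=D_1\vee\cdots\vee D_K$ must falsify \emph{every} $D_k$, each by falsifying \emph{some} literal of it, which is a conjunction of disjunctions rather than a single short chain, and forces a full 3-SAT-over-$Y$ gadget \`a la Theorem~\ref{theorem:paramcov}. The paper's proof (Appendix, Theorem~\ref{th:bound-synthesis-lowerbound}) indeed takes $\phi(x,y)$ in DNF and sidesteps the issue entirely: it keeps $\ell_F$ reachable exactly when some disjunct guard gets enabled, and uses a two-conjunct specification $\varphi$, namely $\ltlG(\{\ell_F\}{=}0)$ together with an $\ltlF\ltlG$ conjunct that forces processes out of $\ell_{y_k}$ once a disjunct guard fires, so that ``no run satisfies $\varphi$'' directly encodes ``for every admissible $Y$-choice some disjunct of $\phi$ becomes true,'' i.e.\ $\forall Y.\,\phi(X(\mu),Y)$. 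Your degeneracy-chain idea and the careful multiplicativity bookkeeping are sound in isolation, but you would need to change the normal form of $\psi$, replace the per-clause chains with a genuine 3-SAT encoding or with the paper's reachability-plus-fairness trick, and adjust $\varphi$ accordingly.
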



The \emph{synthesis} problem is defined as the bounded synthesis problem, but lifting the
constraint on the size of $\mu$. While we do not know the exact complexity of the synthesis problem,
we can show that, for a large and practically motivated class of threshold automata
introduced in~\cite{LKWB17:opodis}, the synthesis problem reduces to the bounded synthesis problem. We proceed to describe and motivate the class.

The parameter variables of fault-tolerant distributed algorithms usually consist of 
a variable $n$ denoting the number of processes running the algorithm 
and various ``failure'' variables for the number of processes exhibiting
different kinds of failures (for example, a variable $t_1$ might be used to specify the number of 
Byzantine failures, a variable $t_2$ for crash failures, etc.).
The following three observations are made in \cite{LKWB17:opodis}: 
\begin{itemize}
\item[(1)] The resilience condition of these algorithms is usually of the form $n > \sum_{i=1}^k \delta_i t_i$ where $t_i$ are parameter variables and $\delta_i$ are natural numbers. 
\item[(2)] Threshold guards typically serve one of two purposes: to check if at least a certain fraction of the processes sends a message (for example,
$x > n/2$ ensures that a strict majority of processes has sent a message), or to bound the number of processes that crash. 
\item[(3)] The coefficients of the guards are rational numbers with small denominators (typically at most 3).
\end{itemize}
By (2), the structure of the algorithm guarantees that the value of a variable $x$ never goes beyond $n$, the number of processes. Therefore, given a threshold guard template $x \bowtie \vec{u} \cdot \vec{\pi} + v$, where $\vec{u}$ is a vector of indeterminates, $\vec{\pi}$ is a vector of parameter variables, $v$ is an indeterminate, and $\bowtie$ is either $\ge$ or $<$, we are only interested in assignments $\mu$ of $\vec{u}$ and $v$ which satisfy $0 \le \mu(\vec{u}) \cdot \nu(\vec{\pi}) + \mu(v) \le n$ for every valuation $\nu(\vec{\pi})$ of $\vec{\pi}$ respecting the resilience condition. Guards obtained by instantiating guard templates with such a valuation $\mu$ are called \emph{sane guards}~\cite{LKWB17:opodis}.

The following result is proved in \cite{LKWB17:opodis}: Given a resilience condition
$n > \sum_{i=1}^k \delta_i t_i$, and an upper bound $D$ on the denominator of the entries of 
$\mu$ (see (1) and (3) above), the numerators of the entries of $\mu$ are necessarily of polynomial size in $k, \delta_1, \dots, \delta_k$. Therefore, the synthesis problem for sane guards and bounded denominator, as introduced in~\cite{LKWB17:opodis}, reduces to the bounded synthesis problem, and so it can be solved in $\Sigma_2^p$ time. 
Moreover, the reduction used in Theorem \ref{th:bound-synthesis-lowerbound} to prove $\Sigma_2^p$-hardness yields sketch threshold automata with sane guards, and so the the synthesis problem for sane guards and bounded denominator is also $\Sigma_2^p$-complete.

\section{Experimental Evaluation} \label{sec:experiments}
	
Following the techniques presented in this paper, we have verified a number of threshold-based fault-tolerant distributed algorithms.

\paragraph{Benchmarks.}

Consistent broadcast (\textbf{strb})~\cite{ST87:abc} is given in Figure~\ref{fig:st} and its threshold automaton is depicted in Figure~\ref{fig:stexample}.
The algorithm is correct if in any execution either all correct processes or none set accept to true; moreover, if all correct processes start with value~0 then none of them accept, and if all correct processes start with value~1 then they all accept.
The algorithm is designed to tolerate Byzantine failures of less than one third of processes, that is, if $n>3t$.
Folklore Reliable Broadcast (\textbf{frb})~\cite{CT96} that tolerates crash faults and Asynchronous Byzantine agreement (\textbf{aba})~\cite{BrachaT85} satisfy the same specifications as consistent broadcast, under the same resilience condition.

%

{\small
	\setlength{\tabcolsep}{10pt}
\renewcommand{\arraystretch}{1.25}
	\begin{table}[H]
		\centering
	\begin{tabular}{c c | m{0.75cm} | m{0.75cm} | c | c }
		\hline
		\textbf{Input} & \textbf{Case} & \multicolumn{2}{m{1.9cm}|}{{ \textbf{Threshold automaton}}} & \multicolumn{2}{c}{\textbf{Time, seconds}}\\
		\cline{3-6} 
		{} & {(if more than one)} & {\textbf{$|\local|$}} & {\textbf{$|\ruleset|$}} &
		{\textbf{Our tool}} & {\textbf{ByMC}}\\
		\hline
		
		\textbf{nbacg} & {} & {24} & {64} & {11.84} & {10.29}\\
		\textbf{nbacr} & {} & {77} & {1031} & {490.79} & {1081.07}\\
		\textbf{aba} & {Case 1} & {37} & {202} & {251.71} & {751.89}\\
		\textbf{aba} & {Case 2} & {61} & {425} & {2856.63} & {TLE}\\
		\textbf{cbc} & {Case 1} & {164} & {2064} & {MLE} & {MLE}\\
		\textbf{cbc} & {Case 2} & {73} & {470} & {2521.12} & {36.57}\\
		\textbf{cbc} & {Case 3} & {304} & {6928} & {MLE} & {MLE}\\
		\textbf{cbc} & {Case 4} & {161} & {2105} & {MLE} & {MLE}\\
		\textbf{cf1s} & {Case 1} & {41} & {280} & {50.5} & {55.87}\\
		\textbf{cf1s} & {Case 2} & {41} & {280} & {55.88} & {281.69}\\
		\textbf{cf1s} & {Case 3} & {68} & {696} & {266.56} & {7939.07}\\
		\textbf{cf1s} & {hand-coded TA} & {9} & {26} & {7.17} & {2737.53}\\
		\textbf{c1cs} & {Case 1} & {101} & {1285} & {1428.51} & {TLE}\\
		\textbf{c1cs} & {Case 2} & {70} & {650} & {1709.4} & {11169.24}\\
		\textbf{c1cs} & {Case 3} & {101} & {1333} & {TLE} & {MLE}\\
		\textbf{c1cs} & {hand-coded TA} & {9} & {30} & {37.72} & {TLE}\\ 
		\textbf{bosco} & {Case 1} & {28} & {152} & {58.11} & {89.64}\\
		\textbf{bosco} & {Case 2} & {40} & {242} & {157.61} & {942.87}\\
		\textbf{bosco} & {Case 3} & {32} & {188} & {59} & {104.03}\\
		\textbf{bosco} & {hand-coded TA} & {8} & {20} & {20.95} & {510.32}\\
		\hline\\
	\end{tabular}
	\caption{The experiments were run on a machine with Intel\textsuperscript{\textregistered} Core\textsuperscript{\tiny {TM}} i5-7200U CPU with 7.7 GiB memory. 
	The time limit was set to be 5 hours and the memory limit was set to be 7 GiB. TLE (MLE) means that the time limit (memory limit) exceeded
	for the particular benchmark.}	\label{table}	
	\end{table}

}

\vspace{-0.17cm}

Non-blocking atomic commit (\textbf{nbacr})~\cite{Raynal97} and (\textbf{nbacg})~\cite{Gue02} deal with faults using failure detectors. We model this by introducing a special location such that a process is in it if and only if it suspects that there is a failure of the system.

Condition-based consensus (\textbf{cbc})~\cite{MostefaouiMPR03} reaches consensus under the condition that the difference between the numbers of processes initialized with 0 and 1 differ by at least $t$, an upper bound on the number of faults.
We also check algorithms that allow consesus to be achieved in one communication step, such as \textbf{cfcs}~\cite{DobreS06}, \textbf{c1cs}~\cite{BrasileiroGMR01}, as well as Byzantine One Step Consensus \textbf{bosco}~\cite{SongR08}.

\paragraph{Evaluation.}

Table \ref{table} summarizes our results and compares them with the results obtained using the ByMC tool~\cite{KW18}. Due to lack of space,
we have omitted those experiments for which both ByMC and our tool took
less than 10 seconds.

We implemented our algorithms in Python and used Z3 as a back-end
SMT solver for solving the constraints over existential Presburger arithmetic. 
Our implementation takes as input a threshold automaton and a specification
in $\ELTLFT$ and checks if a counterexample exists.
We apply to the latest version of the benchmarks of~\cite{KW18}. Each benchmark yields two threshold automata, a hand-coded one and 
one obtained by a data abstraction of the algorithm written in Parametric Promela. For automata of the latter kind, due to data abstraction, we have to consider different cases for the same algorithm. We test each automaton against all specifications for that automaton.

Our tool outperforms ByMC in all automata with more than 30 states, with the exception of the second case of cbc. It performs worse in most small cases, however in these cases, both ByMC and our tool take less than 10 seconds.
ByMC works by enumerating all so-called \emph{schemas} of a threhold automaton, and
solving a SMT problem for each of them; the number of schemas can grow exponentially in the number of guards. Our tool avoids the enumeration. Since the number of schemas for the second case of cbc is just 2, while the second case of aba and third case of cf1s have more than 3000, avoiding the enumeration seems to be key to our better  performance. 


\section{Conclusions}

In this paper we have addressed the complexity of the most important verification and synthesis problems for threshold automata. In particular, we have shown that the coverability and reachability problems, as well as the model checking problem for the fault-tolerant temporal logic $\ELTLFT$ are all NP-complete, and that the bounded synthesis problem is $\Sigma_2^p$-complete. These results are a consequence of a novel characterization of the reachability relation of threshold automata as an existential formula of Presburger arithmetic. 


\bibliographystyle{plainurl}
\bibliography{References}
	
\clearpage

\appendix

\section*{APPENDIX}

\section{Detailed proofs from Section~\ref{sec:coverability}}

The intuition behind the proof of theorem~\ref{theorem:paramcov} has been
discussed in section~\ref{sec:coverability}. Here we provide the formal proof of the theorem.

The reduction for theorem~\ref{theorem:paramcov} is given by means of the
following definition, since the same construction will be used later on.

\begin{definition}\label{def:paramcov}
	Given a 3-CNF formula $\varphi = C_1 \land C_2 \land \dots \land C_m$ over variables $x_1,\dots,x_n$,
	we define its threshold automaton $\TA_\varphi=(\local, \initlocal, \varset, \ruleset)$ and the associated environment $\Env_\varphi$
	as follows: (see an illustrative example in Figure~\ref{fig:paramcov})
	\begin{itemize}
		\item $\Env_\varphi = (\paraset, \ResCond, \syssize)$ where
		$\paraset$ consists of a single parameter $k$, 
		$\ResCond$ is simply $\mathit{true}$ and $\syssize(k) = k$.
		\item The set of locations is $\local = \{\ell_1,\ldots,\ell_n, \top_1,\ldots,\top_n, \bot_1,\ldots,\bot_n, \ell_{\textit{mid}},\ell_F\}$. 
		\item The set of initial locations is $\iconfigs = \{\ell_1,\dots,\ell_n\}$.
		\item The variable set contains shared variables $\varset = \{y_1,\ldots,y_n, \bar{y_1},\ldots,\bar{y_n}, c_1,\ldots, c_m\}$.
		\item The set of rules~$\ruleset$ contains:
		\begin{itemize}
			\item For every $1 \le i \le n$, we have $(\ell_i,\top_i,\bar{y_i}<1,\cpp{y_i}) \in \ruleset$ and $(\ell_i,\bot_i,y_i<1,\cpp{\bar{y_i}}) \in \ruleset$.
			\item For every $1 \le i \le n$, we have $(\top_i,\ell_{\textit{mid}},\true,\cpp{c_{j_1}}\wedge\ldots\wedge\cpp{c_{j_s}}) \in \ruleset$
			where $C_{j_1},\ldots,C_{j_s}$ are all the clauses
			that $x_i$ appears in. 
			\item For every $1 \le i \le n$, we have 
			$(\bot_i,\ell_{\textit{mid}},\true,\cpp{c_{j_1}}\wedge\ldots\wedge\cpp{c_{j_s}}) \in \ruleset$
			where $C_{j_1},\ldots,C_{j_s}$ are all the clauses
			that $\neg x_i$ appears in. 
			\item $(\ell_{\textit{mid}},\ell_F,\phi,\zerovec) \in \ruleset$, 
			where  $\phi = c_1\ge 1\wedge \ldots \wedge c_m\ge 1$.
		\end{itemize}	
	\end{itemize}
\end{definition}

\emph{Remark: } Note that the constructed threshold automaton is acyclic and has only constant guards, i.e., only guards of the form
$x \ge a_0$ or $x < a_0$ for some $a_0 \in \NatZero$.

A simple consequence of the above definition is the following lemma.
It states that once some process moves to $\top_i$ (resp. $\bot_i$) no
process can move to $\bot_i$ (resp. $\top_i$).

\begin{lemma} \label{lem:noncont}
	Let $\varphi$ be a 3-CNF formula over variables $x_1,\dots,x_n$,
	and let its corresponding threshold automaton be $\TA_\varphi$.
	If $\sigma_0$ is an initial configuration and $\tau$ is a schedule 
	applicable to $\sigma_0$ then
	the following holds for every $i\in \{1,\ldots, n\}$ : If there exists $\sigma\in\setconf{\sigma_0}\tau$ with $\sigma.\counters[\top_i]\ge 1$
	$(\text{resp. }\sigma.\counters[\bot_i] \ge 1)$, then for every $\sigma'\in \setconf{\sigma_0}\tau$ it holds that $\sigma'.\counters[\bot_i]=0$, $(\text{resp. }\sigma'.\counters[\top_i]=0)$.
\end{lemma}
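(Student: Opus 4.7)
The plan is to exploit the fact that the two rules populating $\top_i$ and $\bot_i$ mutually disable each other through their shared-variable updates. First I would observe that, by Definition~\ref{def:paramcov}, the only rule in $\ruleset$ whose target is $\top_i$ is $r = (\ell_i, \top_i, \bar{y_i} < 1, \cpp{y_i})$, and symmetrically the only rule whose target is $\bot_i$ is $r' = (\ell_i, \bot_i, y_i < 1, \cpp{\bar{y_i}})$. Moreover, by the definition of initial configuration, $\sigma_0.\counters(\top_i) = \sigma_0.\counters(\bot_i) = 0$, since $\top_i, \bot_i \notin \iconfigs$.

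Next I would invoke monotonicity of the shared variables: since every update in $\ruleset$ is nonnegative, the values of $y_i$ and $\bar{y_i}$ are non-decreasing along any schedule. Consequently, the first firing of $r$ sets $y_i \ge 1$ permanently, which falsifies the guard $y_i < 1$ of $r'$ in every subsequent configuration; symmetrically, the first firing of $r'$ sets $\bar{y_i} \ge 1$ permanently and falsifies the guard $\bar{y_i} < 1$ of $r$ thereafter. Hence in any schedule $\tau$ applied to $\sigma_0$, the rules $r$ and $r'$ cannot both fire: whichever fires first forbids the other.

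Finally, to conclude the lemma, suppose some $\sigma \in \setconf{\sigma_0}{\tau}$ satisfies $\sigma.\counters(\top_i) \ge 1$. Since $\sigma_0.\counters(\top_i) = 0$ and $r$ is the only rule that increases $\counters(\top_i)$, the rule $r$ must have been fired somewhere in the prefix of $\tau$ leading to $\sigma$. By the previous paragraph, $r'$ is never fired in $\tau$. As $\sigma_0.\counters(\bot_i) = 0$ and $r'$ is the only rule that increases $\counters(\bot_i)$, it follows that $\sigma'.\counters(\bot_i) = 0$ for every $\sigma' \in \setconf{\sigma_0}{\tau}$. The symmetric case exchanging the roles of $\top_i$ and $\bot_i$ is identical. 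The only subtle point, which I would flag explicitly, is the need to rule out that $r'$ was fired \emph{before} $r$; this is handled uniformly by the observation that the two guards are mutually exclusive after either update, so whichever rule fires first blocks the other regardless of ordering.
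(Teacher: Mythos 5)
Your proof is correct and uses essentially the same argument as the paper: the rule $(\ell_i,\top_i,\bar{y_i}<1,\cpp{y_i})$ and the rule $(\ell_i,\bot_i,y_i<1,\cpp{\bar{y_i}})$ mutually disable each other, because firing one increments the very shared variable that falsifies the other's guard, and shared variables never decrease. The paper phrases this by fixing the first configuration $\sigma_p$ that covers $\top_i$ and checking that $\bot_i$ is unreachable both before and after $\sigma_p$, whereas you state the mutual exclusion symmetrically up front; the substance is identical.
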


\begin{proof}
	Let $\infpath{\sigma_0}{\tau} = \sigma_0, t_1, \sigma_1, \dots, \sigma_{|\tau|-1},t_{|\tau|},\sigma_{|\tau|}$. Let $i \in \{1,\dots,n\}$ and let $\sigma_p$ be the first 
	configuration in the path such that $\sigma_p.\counters[\top_i] \ge 1$. 
	Hence $t_{p-1} = (\ell_i,\top_i,\bar{y_i}<1,\cpp{y_i})$.
	This means that $\sigma_{p-1}.\vars[\bar{y_i}]=0$ and so
	the rule
	$(\ell_i,\bot_i,y_i<1,\cpp{\bar{y_i}})$ could not have been fired
	before $\sigma_{p-1}$. Since this rule was never fired before $\sigma_{p-1}$ it follows that
	$\sigma_{q}.\counters[\bot_i] = 0$ for all $q \le p$.
	Also, notice that
	$\sigma_p.\vars[y_i] \ge 1$ and so the rule $(\ell_i,\bot_i,y_i<1,\cpp{\bar{y_i}})$ cannot be fired after $\sigma_p$. 
	Consequently we get that $\sigma_q.\counters[\bot_i] = 0$ for all $q > p$. The other claim is proven in a similar manner as well.
\end{proof}

\subsection{Proof of Theorem~\ref{theorem:paramcov}}

\begin{proof}
	We reduce from 3-SAT. Let $\varphi = C_1 \land C_2 \land \dots \land C_m$ be a 3-CNF formula over variables $x_1,\dots,x_n$ and consider
	the threshold automaton~$\TA_\varphi$ and the environment $\Env_\varphi$ as described in Definition~\ref{def:paramcov}.
	We show that $\varphi$ is satisfiable if and only if there is a path
	$\infpath{\sigma_0}\tau$ such that the configuration
	$\tau(\sigma_0)$ covers $\ell_F$.

	$(\Rightarrow)$	
	Suppose~$\varphi$ is satisfiable and~$\upsilon$ is a truth assignment that satisfies $\varphi$.
	We define an initial configuration~$\sigma_0$ and a schedule~$\tau$ 
	as follows.
	
	The initial configuration $\sigma_0$ has one process in 
	each of the initial locations $\ell_1,\dots,\ell_n$.
	As usual, all shared variables initially have value~0.
	
	The schedule~$\tau$ is defined as follows.
	For each~$i\in\{1,\dots,n\}$, if $\upsilon(x_i)=\true$ we fire the rule $(\ell_i,\top_i,\bar{y_i}<1,\cpp{y_i})$, and if $\upsilon(x_i)=\false$ then we fire the rule $(\ell_i,\bot_i,y_i<1,\cpp{\bar{y_i}})$.
	As all these guards initially evaluate to true, these~$n$ rules are applicable to~$\sigma_0$ (in any order), and lead to the (same) configuration~$\sigma_1$ such that $\sigma_1.\counters[\top_i]\ge1$ if and only if $\upsilon(x_i)=\true$ and $\sigma_1.\counters[\bot_i]\ge1$ if and only if $\upsilon(x_i)=\false$, and all the other locations are empty.
	Note that by construction, for each non-empty location from~$\sigma_1$, there is exactly one outgoing rule.
	Therefore, the rest of the schedule is formed in the only possible way, that is, each process moves to~$\ell_{\textit{mid}}$ and increments the corresponding shared variable(s) along that rule.
	
	Since each clause~$C_j$ evaluates to true under the assignment~$\upsilon$, every shared variable~$c_j$ is incremented at least once.
	Hence the guard of the rule leading to~$\ell_F$ becomes true, and so each process can move to~$\ell_F$, 
	thereby covering $\ell_F$.

	$(\Leftarrow)$
	Suppose there is an initial configuration~$\sigma_0$ and a schedule~$\tau$ such that $\tau(\sigma_0)=\sigma$ and $\sigma.\counters[\ell_F]\ge 1$.
	We construct the truth assignment~$\upsilon$ as follows: for each $i\in\{1,\ldots,n\}$ we define $\upsilon(x_i)=\true$ if $\top_i$ was visited along the path, i.e., if there is $\sigma'\in\setconf{\sigma_0}\tau$ with $\sigma'.\counters[\top_i]\ge 1$; similarly
	we define $\upsilon(x_i)=\false$,
	if there is $\sigma'\in\setconf{\sigma_0}\tau$ with $\sigma'.\counters[\bot_i]\ge 1$.
	If there is an $i$ such that neither $\top_i$ nor $\bot_i$ was visited by $\infpath{\sigma_0}\tau$, then we define $\upsilon(x_i)=\true$.
	Lemma~\ref{lem:noncont} implies that~$\upsilon$ is a well-defined truth assignment.
	By construction, it is easy to see that~$\upsilon$ satisfies~$\varphi$.
\end{proof}

\subsection{Proof of Theorem~\ref{theorem:nonparamcov}}

\begin{proof}
	Once again we give a reduction from 3-SAT.
	Let $\varphi$ be a 3-CNF formula over variables $x_1,\dots,x_n$. 
	We consider the threshold automaton $\TA_\varphi$ as described 
	in Definition~\ref{def:paramcov} and remove all the guards and updates
	on any rule that leaves any initial location.
	Formally, for every $i\in\{1,\ldots,n\}$, we replace the rule $(\ell_i,\top_i,\bar{y_i}<1,\cpp{y_i})$ with $(\ell_i,\top_i,\true,\zerovec)$. 
	Similarly, we replace $(\ell_i,\bot_i,y_i<1,\cpp{\bar{y_i}})$ 
	with $(\ell_i,\bot_i,\true,\zerovec)$.
	The rest of the rules in $\TA_\varphi$ are retained as they are.
	We will denote this new threshold automaton by $\TA'_{\varphi}$.
	
	We define the initial configuration~$\sigma_0$ such that there is exactly one process in each initial state $\ell_i$ (as usual, all other locations are empty and all shared variables have values~0).
	
	$(\Rightarrow)$ 
	Suppose $\varphi$ can be satisfied with assignment $\upsilon$.
	Notice that in this case, we already described a path from
	$\sigma_0$ covering $\ell_F$ in the previous proof.
	
	$(\Leftarrow)$
	Suppose $\ell_F$ can be covered from $\sigma_0$. 
	Let $\infpath{\sigma_0}{\tau}$ be a path along which $\ell_F$ is covered.
	Define $\upsilon$ to be the following truth assignment for $\varphi$:
	$\upsilon(x_i) = \true$ if the rule $(\ell_i,\top_i,\true,\zerovec)$
	was fired in the schedule $\tau$,
	$\upsilon(x_i) = \false$ if the rule $(\ell_i,\bot_i,\true,\zerovec)$
	was fired in the schedule $\tau$, 
	and $\upsilon(x_i) = \true$ if neither of these rules were fired.
	Since the automaton is acyclic and there is only one process
	at $\ell_i$, 
	both $(\ell_i,\top_i,\true,\zerovec)$ and $(\ell_i,\bot_i,\true,\zerovec)$ cannot be fired and so this
	is a well-defined assignment.
	It can then be easily verified that
	$\upsilon$ is a satisfying assignment for $\varphi$.
\end{proof}

\subsection{Proof of Theorem~\ref{theorem:cons-rise-complete}}

\begin{proof}

For proving P-hardness, we give a logspace reduction from the Circuit Value Problem (CVP) which
is known to be P-complete \cite{CVP}.
The reduction we present is similar to the reduction given in Proposition 1 of ~\cite{Mobile}.

CVP is defined as follows: Given a boolean circuit~$C$ with~$n$ input variables and~$m$ gates, and a truth assignment $\upsilon$ for the input variables, check if~$\upsilon$ evaluates to~1 on the circuit~$C$. 

We represent each binary gate~$g$ as a tuple $(\circ,s_1,s_2)$, where $\circ\in\{\wedge,\vee\}$ denotes the operation of~$g$, and $s_1,s_2\in\{x_1,\ldots,x_n,g_1,\ldots,g_m\}$ are the inputs to $g$.
In a similar fashion, each unary gate~$g$ is represented as a tuple $(\neg,s)$. 
By convention, $g_1$ is always the output gate.

Based on this we present the following reduction:
(See Figure~\ref{fig:risecov} for an illustrative example).
Let~$C$ be a boolean circuit with input variables $x_1,\dots,x_n$ and  gates $g_1,\ldots,g_m$, and let~$\upsilon$ be a truth assignment that for every $x_i$, assigns $\upsilon(x_i)=b_i\in \{0,1\}$.
We define the threshold automaton $\TA_{C,v}=(\local, \initlocal, \varset, \ruleset)$ and the environment $\Env_{C,v} = (\paraset, \ResCond, \syssize)$ as follows:
\begin{itemize}
	\item $\Env_{C,v} = (\paraset, \ResCond, \syssize)$ where
	$\paraset$ consists of a single parameter $k$, $\ResCond$ is 
	simply true and $\syssize(k) = k$.
	\item The set of locations is $\local = \{\ell_0,\ell_1,,\ldots,\ell_n, \ell_{g_1},\ell_{g_1}',\ldots, \ell_{g_m},\ell_{g_m}', \ell_F\}$,
	\item The set of initial locations is $\initlocal = \{\ell_0,\ell_{g_1},\ell_{g_2},\dots,\ell_{g_m}\}$,
	\item  The set of shared variables is $\varset = \{x_1^0, x_1^1,\ldots, x_n^0, x_n^1, g_1^0,g_1^1,\ldots,g_m^0,g_m^1\}$
	\item The set of rules~$\ruleset$ contains:
	\begin{itemize}
		\item a rule
		$(\ell_{i-1},\ell_i,\true,\cpp{x_i^{b_i}})$, for every $1 \le i \le n$
		\item a rule $(\ell_{g_1}',\ell_{F},g_1^1\ge 1,\zerovec)$, where $g_1$ is the output gate,
		\item rules $(\ell_{g},\ell_{g}',s_1^{b_1}\ge 1 \wedge s_2^{b_2}\ge 1,\cpp{g^{b_1\circ b_2}})$, for every binary gate $g = (\circ,s_1,s_2)$, and for every $(b_1,b_2)\in\{0,1\}^2$,
		\item rules $(\ell_{g},\ell_{g}',s^{b}\ge 1,\cpp{g^{\neg b}})$, for every unary gate $g = (\neg,s)$, and for every $b\in\{0,1\}$.
	\end{itemize}	 
\end{itemize}

It can be easily checked that if $\sigma_0$ is an initial configuration and $\tau$ is a schedule such that $\tau(\sigma_0) = \sigma$
then $\sigma.\vars[x_i^b] \ge 1$ iff $b = b_i$ and
$\sigma.\vars[g^b] \ge 1$ iff the gate $g$ outputs $b$ when the
input variables are given the assignment $\upsilon$.
Consequently, the state~$\ell_F$ can be covered if and only if the circuit evaluates to 1 on the given assignment. 
Since the reduction can be clearly accomplished in logspace it follows that the coverability problem for this case is P-hard.

\begin{figure}
	\tikzstyle{edgelabel}=[sloped, above, align= center]
	\tikzstyle{node}=[circle,draw=black,thick,minimum size=4.3mm,inner sep=0.75mm,font=\normalsize]
	
	\begin{tikzpicture}[->,node distance = 2cm, scale=0.7, every node/.style={scale=0.8}]	
	
	\node[node] (l0)[label=below:\textcolor{blue}{$\ell_0$}] {};
	\node[node, right = of l0] (l1)[label=below:\textcolor{blue}{$\ell_1$}] {};
	\node[node, right = of l1] (l2)[label=below:\textcolor{blue}{$\ell_2$}] {};
	\node[node, right = of l2] (l3)[label=below:\textcolor{blue}{$\ell_3$}] {};
	
	\draw(l0) edge[edgelabel] node{$\cpp{x_1^{1}}$} (l1);
	\draw(l1) edge[edgelabel] node{$\cpp{x_2^{1}}$} (l2);
	\draw(l2) edge[edgelabel] node{$\cpp{x_3^{1}}$} (l3);
	
	\node[node, below = 1cm of l0] (g1)[label=below:\textcolor{blue}{$\ell_{g_1}$}] {};
	\node[node, right = 7.5cm of g1] (g1')[label=below:\textcolor{blue}{$\ell_{g_1}'$}] {};
	\node[node, right = of g1'] (f) [label=below:\textcolor{blue}{$\ell_{F}$}]{};
	
	\draw(g1) edge[edgelabel] node{$\forall b,b' \in \{0,1\}^2, \ x_1^b\ge 1 \wedge g_2^{b'} \ge 1 \mapsto \cpp{g_1^{b \land b'}}$} (g1');
	\draw(g1') edge[edgelabel] node{$g_1^1 \ge 1$} (f);
	
	\node[node, below = 1cm of g1] (g2)[label=below:\textcolor{blue}{$\ell_{g_2}$}] {};
	\node[node, right = 7.5cm of g2] (g2')[label=below:\textcolor{blue}{$\ell_{g_2}'$}] {};
	
	\draw(g2) edge[edgelabel] node{$\forall b,b' \in \{0,1\}^2, \ g_3^b\ge 1 \wedge x_3^{b'} \ge 1 \mapsto \cpp{g_2^{b \lor b'}}$} (g2');
	
	\node[node, below = 1cm of g2] (g3)[label=below:\textcolor{blue}{$\ell_{g_3}$}] {};
	\node[node, right = 7.5cm of g3] (g3') [label=below:\textcolor{blue}{$\ell_{g_3}'$}] {};
	
	\draw(g3) edge[edgelabel] node{$\forall b \in \{0,1\}, \ x_2^b \ge 1 \mapsto \cpp{g_3^{\lnot b}}$} (g3');
	
	\end{tikzpicture}
	\caption{Threshold automaton~$\TA_{C,v}$ corresponding to the circuit $x_1 \wedge (\neg x_2 \vee x_3)$ with gates $g_1 = (\wedge, x_1, g_2)$, $g_2 = (\vee, g_3, x_3)$, $g_3 = (\neg, x_2)$ and assignment
		$\upsilon(x_1) = 1, \upsilon(x_2) = 1, \upsilon(x_3) = 1$.
		(For brevity, we have denoted all the four transitions possible
		between $\ell_{g_1}$ and $\ell'_{g_1}$ with a single transition
		quantifying over all the four possible choices. Similar notation is 
		employed for the other transitions as well.)
		The assignment~$\upsilon$ is a satisfying assignment of the circuit~$C$, and it is easy to see that 
		if the initial configuration has one process in each of the initial locations $\ell_0, \ell_{g_1},\ell_{g_2}, \ell_{g_3}$, then there is (only) one possible path from that configuration, and it covers~$\ell_F$.
	}\label{fig:risecov}
\end{figure}

\paragraph{A polynomial time algorithm: }

	Let $\TA = (\local, \initlocal, \varset, \ruleset)$ be a threshold automaton such that each transition in $\ruleset$ only has constant rise guards (i.e., guards of the form $x \ge a_0$ for some $a_0 \in \NatZero$). 
	Let $\Env = (\paraset, \ResCond, \syssize)$ be an environment
	which is multiplicative (See definition ~\ref{def:mult}).
	Clearly a guard of the form $x \ge 0$ is redundant and so
	we assume that if $x \ge c$ is a guard in some rule in $\TA$, then $c > 0$. 
	
	Given two configurations $\sigma = (\counters,\vars,\param)$ and $\sigma' = (\counters',\vars',\param')$ let
	$\sigma + \sigma'$ be the configuration $(\counters+\counters',\vars+\vars',\param+\param').$
	Similarly given $c \in \NatZero$ let $c \cdot \sigma$ be the configuration $(c \cdot \counters, c \cdot \vars, c \cdot \param)$.
	Since the automaton has only constant rise guards it follows that
	\begin{equation} \label{eq:important}
		r \text{ is enabled at } \sigma \implies r \text{ is enabled at }
		\sigma + \sigma' \text{ for any configuration } \sigma'
	\end{equation}
	
	We prove that coverability in this case is in P by means of a saturation algorithm which begins with the set of initial locations and finishes with the set of all possible coverable locations. 
	This algorithm is similar to the one mentioned in~\cite{Mobile}.
	
	We initialize two variables $X_\local$ and $X_\ruleset$ by $X_\local := \initlocal$ and $X_\ruleset := \emptyset$, and repeatedly update 
	them until a fixed point is reached, as follows:
	\begin{itemize}
		\item If there exists a rule $r = (\ell,\ell',\true,\vec{u}) \in \ruleset \setminus X_\ruleset$ such that $\ell \in X_\local$, then set $X_\local := X_\local \cup \{\ell'\}$ and $X_\ruleset := X_\ruleset \cup \{r\}$.
		\item If there exists a rule $r=(\ell,\ell', (\land_{1 \le i \le q} \ x_i \geq c_i), \vec{u}) \in \ruleset \setminus X_\ruleset$ such that $\ell \in X_\local$, \emph{and} there exists rules $r_1,r_2,\dots,r_q$ such that each
		$r_i = (\ell_i,\ell'_i,\varphi_i,\vec{u}_i) \in X_\ruleset$
		and $\vec{u}_i[x_i] > 0$, then set $X_\local := X_\local \cup \{\ell'\}$ and $X_	\ruleset := X_\ruleset \cup \{r\}$.
		
	\end{itemize}
	
	By means of this algorithm we get a sequence of 
	sets $(X_\local^0,X_\ruleset^0) \subseteq (X_\local^1,X_\ruleset^1) \subseteq \dots \subseteq
	(X_\local^m,X_\ruleset^m)$, one for each iteration. 
	We say that a rule $r \in \ruleset$ can possibly occur if 
	there is some initial configuration $\sigma_0$ such that
	$\sigma_0 \xrightarrow{*} \sigma$ and $\sigma$ enables $r$.
	We will now show that $X_\local^m$ contains exactly the set
	of coverable locations and $X_\ruleset^m$ contains exactly the
	set of rules which can possibly occur.


	First we show by induction on $i$, that if $\ell \in X_\local^i$ then $\ell$ is coverable and if $r \in X_\ruleset^i$ then $r$ can possibly occur. The claim is clearly true for the base
	case of $i = 0$. Suppose the claim is true for some $i$.
	Let $X_\local^{i+1} \setminus X_\local^i = \{\ell'\}$
	and $X_\ruleset^{i+1} \setminus X_\ruleset^i = \{r\}$.
	Let $c$ be the maximum constant appearing in any of the guards of any 
	of the rules of $\TA$.
	
	By construction, there are two possible cases:
	\begin{itemize}
		\item $r$ is of the form $r = (\ell,\ell',true,\vec{u})$ such that $\ell \in X_\local^i$. By inductive hypothesis $\ell$
		is coverable. It is then clear that $\ell'$ is coverable 
		and also that $r$ can possibly occur.
		\item $r$ is of the form $r = (\ell,\ell',(\land_{1 \le j \le q}  \ x_j \ge c_j), \vec{u})$ such that $\ell \in X_\local^i$. 
		We have to show that 
		\begin{equation} \label{eq:toshow}
			\ell' \text{ is coverable and } r \text{ can possibly occur}
		\end{equation}
		By construction of the algorithm, for each $j \in \{1,\dots,q\}$, there
		exists $r_j = (\ell_j,\ell'_j,\varphi_j,\vec{u}_j)\in X_\ruleset^i$ such that $\vec{u}_j[x_j] > 0$. By inductive hypothesis, $\ell$ is coverable and so there exists 
		$\sigma_0^{\ell} \xrightarrow{*} \sigma^{\ell}$ such that $\sigma.\counters[\ell] \ge 1$. Once again, by inductive hypothesis, for each $j$, the rule $r_j$ can possibly occur and so there exists
		$\sigma_0^j \xrightarrow{*} \sigma^j$ such that $\sigma^j$
		enables $r_j$. 
		
		Let $\sigma_0 = c \cdot (\sigma_0^{\ell} + \sigma_0^1 + \dots + \sigma_0^q)$ and let $\sigma_{mid} = (c \cdot (\sigma^{\ell} + \sigma^1 + \dots + \sigma^q))$.
		Since the environment is multiplicative, $\sigma_0$ is
		a valid initial configuration. It is then easy to see (by repeated applications of observation~\ref{eq:important})
		that $\sigma_0 \xrightarrow{*} \sigma_{mid}$. Notice that, by observation~\ref{eq:important},
		$r_1,\dots,r_q$ are all enabled at $\sigma_{mid}$
		and also that $\sigma_{mid}[\ell_1] \ge c, \dots \sigma_{mid}[\ell_q] \ge c$.
		By firing each of $r_1,\dots,r_q$ exactly $c$ many times, we 
		can arrive at a configuration $\sigma$ where $r$ is enabled. 
		Now firing $r$ from $\sigma$ covers $\ell'$. 
		Hence $\ell'$ is coverable and $r$ can possibly occur thereby proving~(\ref{eq:toshow}).
		
	\end{itemize}
	
	For the converse direction, suppose
	$\sigma_0$ is an initial configuration such that 
	$\sigma_0 \xrightarrow{*} \sigma$ and $\sigma$ covers some
	location $\ell$ and enables some rule $r$. By an 
	easy induction on the length of the path between $\sigma_0$
	and $\sigma$ we can establish that $\ell \in X_\local^m$
	and $r \in X_\ruleset^m$.

\end{proof}

\section{Detailed proofs from Section~\ref{sec:reachability}}

In the following, when we say that $\phi_\mathit{steady}(\sigma,\sigma')$ holds with assignment
$\{y_r\}_{r \in \ruleset}$ (where each $y_r$ is a natural number),
we mean that the formula $\phi(\sigma,\sigma')$ is true when each 
variable $x_r$ is assigned the value $y_r$.
Given two assignments $Y = \{y_r\}_{r \in \ruleset}$ and
$Z = \{z_r\}_{r \in \ruleset}$ we say that $Z < Y$ iff 
$\sum_{r \in \ruleset} z_r < \sum_{r \in \ruleset} y_r$.
Further for an assignment $Y = \{y_r\}_{r \in \ruleset}$ and a location
$\ell$ we say that there is a \emph{fireable cycle} at $\ell$ with respect to $Y$ iff there exists rules $T = \{t_1,\dots,t_m\}$ such that
$y_{t_1} > 0, \dots, y_{t_m} > 0$, $t_i.\tostate = t_{i+1}.\fromstate$ 
for all $i < m$,
and $t_m.\tostate = t_1.\fromstate = \ell$.

The following two lemmas are important properties of the formula
$\phi_\mathit{steady}$.

\begin{lemma} \label{lem:nocycles}
	Let $\phi_\mathit{steady}(\sigma,\sigma')$ be true with the assignment
	$Y = \{y_r\}_{r \in \ruleset}$. Suppose $y_t > 0$ and $\sigma[t.\fromstate] > 0$ for some rule $t$. If there are no fireable cycles at $t.\fromstate$ with respect to $Y$, then $\phi_\mathit{steady}(t(\sigma),\sigma')$ is true with an
	assignment $Z$ where $Z < Y$.
\end{lemma}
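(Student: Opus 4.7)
My plan is to define $Z$ by setting $z_t := y_t - 1$ and $z_r := y_r$ for every $r \ne t$, so that immediately $Z < Y$, and then to verify each conjunct of $\phi_\mathit{steady}(t(\sigma),\sigma')$ for this $Z$. The conjuncts $\phi_\mathit{base}$, $\phi_\local$, $\phi_\globset$ and $\phi_\ruleset$ will be routine. For $\phi_\mathit{base}$, I will show that $\statectx(t(\sigma))=\statectx(\sigma)$ by monotonicity of the guards: a rise guard that became true (or a fall guard that became false) after firing $t$ but not already in $\sigma$ would, since $t(\sigma).\vars\le\sigma'.\vars$ componentwise by $\phi_\globset(\sigma,\sigma',Y)$, also flip in $\sigma'$, contradicting $\statectx(\sigma)=\statectx(\sigma')$. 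The flow equations $\phi_\local$ and $\phi_\globset$ at $(t(\sigma),\sigma',Z)$ then follow from those at $(\sigma,\sigma',Y)$ by direct bookkeeping, because decrementing $z_t$ by one exactly compensates for the changes from $\sigma$ to $t(\sigma)$ at $t.\fromstate$, $t.\tostate$ and on the shared variables. Conjunct $\phi_\ruleset$ is immediate from $\statectx(t(\sigma))=\statectx(\sigma)$.

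The real work will be $\phi_{\textit{appl}}(t(\sigma), Z)$: for each rule $r$ with $z_r>0$, I will repair the fireable chain $(r_1,\dots,r_s=r)$ that witnesses $r$ in $(\sigma,Y)$. In the easy sub-cases a suffix of the original chain already suffices. If $t$ occurs in the chain at some position $i<s$, the no-cycle hypothesis rules out both a second occurrence of $t$ (which together with the first would form a $Y$-fireable cycle at $t.\fromstate$) and the self-loop case $t.\fromstate=t.\tostate$, so the suffix $(r_{i+1},\dots,r_s)$ starts at $t.\tostate$, where $t(\sigma)$ has a process. If $t$ does not occur in the chain and either $r_1.\fromstate\ne t.\fromstate$ or $\sigma[t.\fromstate]\ge 2$, the original chain is still fireable from $t(\sigma)$ under $Z$. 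The sub-case $r=t$ (which forces $y_t\ge 2$) is analogous: the only obstruction is $r_1.\fromstate=t.\fromstate$ with $\sigma[t.\fromstate]=1$, but then $(r_1,\dots,r_{s-1})$ is itself a fireable cycle at $t.\fromstate$, contradicting the hypothesis.

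The hard part will be the remaining sub-case: $r\ne t$, $t$ absent from the chain, $r_1.\fromstate=t.\fromstate$ and $\sigma[t.\fromstate]=1$, so that $t(\sigma)[t.\fromstate]=0$ and purely local repair fails. My plan is to argue by contradiction. Let $G_Z$ denote the graph of rules with $z_\rho>0$, let $S$ be the set of locations from which $r.\fromstate$ is reachable in $G_Z$, and suppose that no $\ell\in S$ is populated in $t(\sigma)$. Summing $\phi_\local(t(\sigma),\sigma',Z)$ over $\ell\in S$ gives an LHS equal to the negative of the $Z$-weight of $G_Z$-edges leaving $S$, and an RHS equal to $\sum_{\ell\in S}\sigma'[\ell]\ge 0$; both must therefore vanish, so no $G_Z$-edge leaves $S$ and $\sigma'$ vanishes on $S$. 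Doing the analogous summation with $Y$ and subtracting shows that the only $Y$-edge that can cross the boundary of $S$ is $t$, and a short accounting then forces $t.\tostate\notin S$ together with $\sum_{\ell\in S}\sigma[\ell]=\sigma[t.\fromstate]=1$. Consequently every $(\sigma,Y)$-chain ending at a rule whose source lies in $S$ must stay inside $S$ and start at $t.\fromstate$. Now $\phi_\local(t(\sigma),\sigma',Z)$ at $t.\fromstate$ forces the $Z$-in-flow into $t.\fromstate$ to be at least $z_{r_1}>0$, so some rule $r''$ with $z_{r''}>0$ satisfies $r''.\tostate=t.\fromstate$; its chain then begins at $t.\fromstate$, ends at $r''.\fromstate$, and followed by $r''$ gives a $Y$-fireable cycle at $t.\fromstate$, contradicting the hypothesis.

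I expect this last sub-case to be the main obstacle: purely local modifications of the chain do not suffice, and one has to compare the flows in $G_Y$ and $G_Z$ globally across the cut $S$ in order to extract the required fireable cycle at $t.\fromstate$ from the failure of repair.
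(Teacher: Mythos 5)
The setup of your proof mirrors the paper's: the same assignment $Z$ (decrement only $z_t$), the same routine verification of $\phi_\mathit{base}$, $\phi_\local$, $\phi_\globset$, $\phi_\ruleset$, and the same easy repair of the witnessing chain when $t$ occurs in it or its source remains populated. Where you diverge is the hard sub-case ($t$ absent from the chain, $r_1.\fromstate=t.\fromstate=\ell$, $\sigma.\counters[\ell]=1$). The paper handles it much more locally than you propose: from the single flow equation $\phi_\local(t(\sigma),\sigma',Z)$ at $\ell$ it extracts an incoming rule $r''$ with $z_{r''}>0$ (and, if no such rule exists, derives a contradiction from that same equation), then observes that the $Y$-chain justifying $r''$ cannot start at $\ell$ (else that chain followed by $r''$ is a fireable cycle at $\ell$) and cannot pass through $t$ (same reason), so it is already a $Z$-chain starting at a location still populated in $t(\sigma)$; prepending it to the original chain repairs $r$. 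Your global cut argument — summing $\phi_\local$ over the whole $G_Z$-reachability cone $S$ of $r.\fromstate$ and comparing $Y$- and $Z$-weights across the boundary — is correct and reaches the same endpoint, but it rederives by a heavier route the very fact (that $r''$'s chain starts at $\ell$ and hence closes a cycle) which the paper reads off directly from the no-cycle hypothesis; what your version "buys" is a uniform contradiction-based formulation, at the price of the more delicate flow accounting across the cut. One genuine detail to repair: in your sub-case $r=t$ the exhibited cycle $(r_1,\dots,r_{s-1})$ is empty when the $Y$-chain for $t$ is the singleton $\{t\}$, so that case slips through. You can close it either by dropping the restriction $r\ne t$ from the hard sub-case (your cut argument applies verbatim, since $z_t>0$ makes $t$ a $G_Z$-edge, which cannot leave $S$, forcing $t.\tostate\in S$ to be populated), or by routing it through the paper's local argument.
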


\begin{lemma} \label{lem:cycles}
	Let $\phi_\mathit{steady}(\sigma,\sigma')$ be true with the assignment
	$Y = \{y_r\}_{r \in \ruleset}$. Suppose $\sigma[\ell] > 0$ and suppose there is a fireable cycle $\{t_1,\dots,t_m\}$
	at $\ell$ with respect to $Y$. Then $\phi_\mathit{steady}(t_1(\sigma),\sigma')$ is true with an
	assignment $Z$ where $Z < Y$.
\end{lemma}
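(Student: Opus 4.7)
The plan is to define $Z = \{z_r\}_{r \in \ruleset}$ by $z_{t_1} = y_{t_1} - 1$ and $z_r = y_r$ for every $r \neq t_1$. Since $y_{t_1} \geq 1$, all entries of $Z$ remain nonnegative, and $\sum_r z_r = \sum_r y_r - 1$, giving $Z < Y$ in the required ordering. The rest of the proof consists of verifying that each of the five conjuncts composing $\phi_\mathit{steady}(t_1(\sigma), \sigma')$ is satisfied by this $Z$, with $\phi_\mathit{appl}$ being the only nontrivial one.

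For $\phi_\mathit{base}$, the parameters and $\syssize$ are untouched by firing $t_1$, so the delicate step is to show $\statectx(t_1(\sigma)) = \statectx(\sigma)$; it would then combine with $\statectx(\sigma) = \statectx(\sigma')$ from $\phi_\mathit{base}(\sigma, \sigma')$. By monotonicity of guards, the only way this could fail would be a rise guard $x \geq a$ that is false in $\sigma$ but true in $t_1(\sigma)$; this forces $t_1.\update[x] = 1$ and $\sigma.\vars[x] \geq a - 1$, so $\phi_\globset(\sigma, \sigma', Y)$ combined with $y_{t_1} \geq 1$ yields $\sigma'.\vars[x] \geq a$, contradicting $\statectx(\sigma) = \statectx(\sigma')$. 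With contexts preserved, $\phi_\ruleset(t_1(\sigma), Z)$ transfers immediately from $\phi_\ruleset(\sigma, Y)$. The flow conjuncts $\phi_\local$ and $\phi_\globset$ are a direct bookkeeping check: the unit decrement in $z_{t_1}$ removes one outgoing token at $\ell$ and one incoming token at $t_1.\tostate$, exactly matching the counter changes from $\sigma$ to $t_1(\sigma)$, while the variable side shrinks by $t_1.\update$, matching the shift in $\vars$.

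The hard part, and the one that actually uses the cycle hypothesis, is $\phi_\mathit{appl}(t_1(\sigma), Z)$: for every rule $r$ with $z_r > 0$ I must exhibit a new fireable chain. Fix such an $r$ and let $r_1, \dots, r_s = r$ be its witness chain from $\phi_\mathit{appl}(\sigma, Y)$. If the chain avoids $t_1$ and either $r_1.\fromstate \neq \ell$ or $\sigma.\counters(\ell) \geq 2$, it still works unchanged. If the chain contains $t_1$ at some position $p$, truncate to $r_{p+1}, \dots, r_s$: this chain starts at $t_1.\tostate$, which just gained a process, and uses only rules $r_j \neq t_1$, so $z_{r_j} = y_{r_j} > 0$. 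The remaining case, which is precisely what the cycle is needed for, is when the chain avoids $t_1$, starts at $r_1.\fromstate = \ell$, and $\sigma.\counters(\ell) = 1$, so $\ell$ is empty in $t_1(\sigma)$; here I prepend the cycle tail to form $t_2, \dots, t_m, r_1, \dots, r_s$, which starts at $t_1.\tostate$, whose prepended rules all satisfy $z_{t_i} = y_{t_i} > 0$, and which connects through $t_m.\tostate = \ell = r_1.\fromstate$. The edge case $r = t_1$, which arises only when $y_{t_1} > 1$, is handled analogously by the chain $t_2, \dots, t_m, t_1$. Assembling these witnesses yields $\phi_\mathit{appl}(t_1(\sigma), Z)$, and hence $\phi_\mathit{steady}(t_1(\sigma), \sigma')$.
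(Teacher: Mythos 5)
Your proof is correct and follows essentially the same strategy as the paper's: take $z_{t_1} = y_{t_1} - 1$ and $z_r = y_r$ otherwise, verify the first four conjuncts directly, and then handle $\phi_\mathit{appl}$ by repairing the original witness chain — truncating it after $t_1$ when $z_{t_1}$ drops to zero, or prepending $t_2,\dots,t_m$ when $\sigma''.\counters[\ell]$ becomes zero. The one place where you add genuine content beyond the paper is the argument that $\statectx(t_1(\sigma)) = \statectx(\sigma)$: the paper dismisses this as ``easily checked,'' whereas you derive it properly by observing that any guard flipped by firing $t_1$ would, via $\phi_\globset$ and $y_{t_1}\geq 1$, have already been flipped at $\sigma'$, contradicting $\statectx(\sigma)=\statectx(\sigma')$. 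That is the right justification and worth spelling out.
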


We will first see how using these two lemmas, we can prove
Theorem~\ref{th:kirchoffpath}. Then we will present the proofs of these 
two lemmas.

\subsection{Proof of Theorem~\ref{th:kirchoffpath}}

\begin{proof}
	It is clear from our discussion during the construction
	of the formula $\phi_\mathit{steady}$
	that if there is a steady run from~$\sigma$ to~$\sigma'$ then $\phi_\mathit{steady}(\sigma,\sigma')$ is true. 
	Hence, we will only show that this is a sufficient condition.
	
	Let $\phi_\mathit{steady}(\sigma,\sigma')$ be true with the assignment $\{y_r\}_{r \in \ruleset}$.
	We induct on the value of $\sum_{r \in \ruleset} y_r$.
	
	\textsc{Base Case. }
	It can be easily verified that $\forall r \in \ruleset. \  y_r = 0$ if and only if $\sigma = \sigma'$. 
	Hence, if all $y_r$ are zero, then $\sigma = \sigma'$ and so we are done. 
	This constitutes the base case of the induction.
	
	\textsc{Induction Hypothesis. }
	Let us fix values $Y = \{y_r\}_{r\in\ruleset}$, such that not all of them are zero.
	We assume that if $\phi_\mathit{steady}(c,c')$ holds for some configurations~$c$ and~$c'$ with the assignment $Z = \{z_r\}_{r \in \ruleset}$ such that $Z < Y$, then there is a steady schedule~$\tau$ such that $\tau(c)=c'$.
	
	\textsc{Induction Step. }
	Assume now that $\phi_\mathit{steady}(\sigma,\sigma')$ holds with the  assignment $Y = \{y_r\}_{r\in\ruleset}$.
	We show here that there exists a steady schedule~$\tau$ such that $\tau(\sigma)=\sigma'$, using the following idea: 
	(i) we construct a configuration~$\sigma''$ that is reachable from $\sigma$ in one step, i.e., we show that there is a transition~$t$ such that $t(\sigma)=\sigma''$, and 
	(ii) we prove that $\phi_\mathit{steady}(\sigma'',\sigma')$ holds for an assignment
	$\{z_r\}_{r\in\ruleset}$ such that $\{z_r\}_{r\in\ruleset} < \{y_r\}_{r\in\ruleset}$. 
	By applying our induction hypothesis, there is a schedule~$\rho$ with $\rho(\sigma'')=\sigma'$. 
	We would then have $\tau = t\cdot \rho$ as the required schedule.
	
	Let us construct~$\sigma''$.
	By the assumption, there is a rule $r'\in\ruleset$ such that $y_{r'}>0$.
	As $\phi_\mathit{steady}(\sigma,\sigma')$ holds with the assignment $Y=\{y_r\}_{r\in\ruleset}$, $\phi_{\textit{appl}}(\sigma,Y)$ is true and therefore there must be a set of rules $S=\{r_1,\dots,r_s\} \subseteq \ruleset$ such that $\phi_\mathit{chain}^{r'}(S,\sigma,Y)$ holds, that is,
	$$
	\sigma.\counters[r_1.\fromstate] > 0 \ \land \
	\bigwedge_{1\le i\le s} y_{r_i} > 0 \ \land \ \bigwedge_{1 < i \le s} r_{i-1}.\tostate = r_i.\fromstate \ \land \ r_{s} = r'
	$$
	
	Let $r = r_1$ and let $\ell = r.\fromstate$. If 
	there is no fireable cycle at $\ell$ with respect to $Y$, then by Lemma~\ref{lem:nocycles} 
	we have that $\phi_\mathit{steady}(r(\sigma),\sigma')$ is true
	with an assignment $Z < Y$ and so we can define $\sigma'' = r(\sigma)$. Otherwise, if $\{t_1,\dots,t_m\}$ is a fireable
	cycle at $\ell$ then by Lemma~\ref{lem:cycles} we have that
	$\phi_\mathit{steady}(t_1(\sigma),\sigma')$ is true with 
	an assignment $Z < Y$ and so we can define $\sigma'' = t_1(\sigma)$.
	Hence we can construct the required
	$\sigma''$ in either case and so the proof is complete.
	
\end{proof}

Now we will prove the Lemmas~\ref{lem:nocycles} and~\ref{lem:cycles}.

\subsection{Proof of Lemma~\ref{lem:nocycles}}

\begin{proof}
	Let $\ell := t.\fromstate$.
	Since $y_t > 0$, from 
	$\phi_\ruleset(\sigma,Y)$ we obtain that $\sigma \models t.\varphi$. 
	As we have $\sigma.\counters[t.\fromstate] = \sigma.\counters[\ell] > 0$ we know that 
	$t$ is enabled at $\sigma$.
	Hence we set $\sigma''=t(\sigma)$ and 
	let $Z = \{z_r\}_{r \in \ruleset}$ be the following assignment:
	$z_r = y_r$ if $r \neq t$ and $z_r = y_r - 1$ if $r = t$.
	It can be easily checked that all of the following holds:
	$\phi_\mathit{base}(\sigma'',\sigma') \land
	\phi_\local(\sigma'',\sigma',Z) \land
	\phi_\globset(\sigma'',\sigma',Z) \land 
	\phi_\ruleset(\sigma'',Z) $. 
	Therefore, we only need to prove that $\phi_{\textit{appl}}(\sigma'',Z)$ holds.
	
	Fix any arbitrary rule $r \in \ruleset$ such that $z_r > 0$.
	We now have to show that there exists a set of rules $S'=\{r'_1,\ldots,r'_{s'}\}\subseteq \ruleset$ such that
	$$
	\sigma''.\counters[r'_1.\fromstate] > 0 \ \land \
	\bigwedge_{1\le i\le s'} z_{r'_i} > 0 \ \land \ \bigwedge_{1 < i \le s'} r'_{i-1}.\tostate = r'_{i}.\fromstate \ \land \ r'_{s'} = r
	$$
	
	Since $z_r > 0$, it follows that $y_r > 0$.
	Since $\phi_{\textit{appl}}(\sigma,Y)$ holds, it follows that
	there exists a set of rules $S = \{r_1,\dots,r_s\} \subseteq \ruleset$ such that
	$$
	\sigma.\counters[r_1.\fromstate] > 0 \ \land \
	\bigwedge_{1\le i\le s} y_{r_i} > 0 \ \land \ \bigwedge_{1 < i \le s} r_{i-1}.\tostate = r_{i}.\fromstate \ \land \ r_{s} = r
	$$
	We consider three cases:
	\begin{itemize}
		\item Suppose $\sigma''.\counters[r_1.\fromstate] > 0$ and
		suppose $z_{r_i} > 0$ for every $1\le i\le s$. Then we can
		simply set $S' := S$.
		\item Suppose there exists $i$ such that $z_{r_i} = 0$. 
		Hence $z_{r_i} \neq y_{r_i}$ and this immediately implies that $r_i = t$. 
		Since $\sigma''$ was obtained from $\sigma$ by 
		firing $t$, it follows that $\sigma''.\counters[t.\tostate] = \sigma''.\counters[r_i.\tostate] = \sigma''.\counters[r_{i+1}.\fromstate] > 0$. 
		It is then easy to see that we can 
		set $S' := \{r_{i+1},\dots,r_s\}$.
		\item Suppose $\sigma''.\counters[r_1.\fromstate] = 0$.
		We can assume that $z_{r_i} > 0$ for every $1 \le i \le s$  as otherwise the previous case
		applies.
		Since $\sigma''$ was obtained from $\sigma$ by firing
		$t$, it follows that $r_1.\fromstate = t.\fromstate = \ell$. Let $in^{\ell}_1,\dots,in^{\ell}_a$ and
		$out^{\ell}_1,\dots,out^{\ell}_b$ be the set of
		incoming and outgoing rules at $\ell$ in the  threshold automaton.
		We further split this into two more cases:
		\begin{itemize}
			\item There exists $i$ such that $z_{in^{\ell}_i} > 0$. Let $r'' = in^{\ell}_i$. Since $z_{r''} > 0$ it follows
			that $y_{r''} > 0$. Hence there exists a set $S'' = \{r''_1,\dots,r''_{s''}\} \subseteq \ruleset$ such that
			$$
			\sigma.\counters[r''_1.\fromstate] > 0 \ \land \
			\bigwedge_{1\le i\le s''} y_{r''_i} > 0 \ \land \ \bigwedge_{1 < i \le s''} r''_{i-1}.\tostate = r''_{i}.\fromstate \ \land \ r''_{s''} = r''
			$$
			Since $\ell$ is not part of any fireable cycle with respect to 
			$Y$ and since $r''.\tostate = \ell$ it follows that 
			$r''_1.\fromstate  \neq \ell$. 
			Hence $\sigma.\counters[r''_1.\fromstate] > 0
			\implies \sigma''.\counters[r''_1.\fromstate] > 0$.
			Once again, since $\ell$ is not part of any
			fireable cycle with respect to $Y$, it follows
			that $r''_i \neq t$ for any $i$. Therefore
			$y_{r''_i} > 0 \implies z_{r''_i} > 0$.
			Hence we can
			set $S' := \{r''_1,\dots,r''_{s''},r_1,\dots,r_s\}$.
			\item There does not exist $i$ such that  $z_{in^{\ell}_i} > 0$. We show that this case
			cannot happen.
			Recall that the 
			assignment $Z$ satisfies the formula $\phi_\local(\sigma'',\sigma',Z)$. In particular, 
			$$
			\sum_{i=1}^a z_{in^{\ell}_i} - \sum_{i=1}^b z_{out^{\ell}_i} = \sigma'.\counters[\ell] - \sigma''.\counters[\ell]
			$$
			
			By assumption, $z_{in^{\ell}_i} = 0$ for all $i$, $1 \le i \le a$. Also $z_{r_1} > 0$ and $r_1$ is (by definition) an outgoing transition
			from $l$. Hence the left hand side of the equation is strictly less
			than 0. Since $\sigma''.\counters[\ell] = 0$, it follows that
			the right hand side of the equation is bigger than or equal to
			0, which gives rise to a contradiction.
		\end{itemize}
	\end{itemize}
	
\end{proof}

\subsection{Proof of Lemma~\ref{lem:cycles}}

\begin{proof}
	Suppose $\ell$ is part of some fireable cycle with respect to $Y$, i.e., 
	there exists rules $T = \{t_1,\dots,t_m\}$ such that
	$y_{t_1} > 0, \dots, y_{t_m} > 0$, $t_i.\tostate = t_{i+1}.\fromstate$
	for all $i < m$, and $t_m.\tostate = t_1.\fromstate = \ell$.
	Let $t = t_1$. 
	Since $y_t > 0$, from 
	$\phi_\ruleset(\sigma,Y)$ we obtain that $\sigma \models t.\varphi$. 
	As we have $\sigma.\counters[t.\fromstate] = \sigma.\counters[\ell] > 0$ we know that 
	$t$ is enabled at $\sigma$.
	Hence we set $\sigma''=t(\sigma)$ and 
	let $Z = \{z_r\}_{r \in \ruleset}$ be the following assignment:
	$z_r = y_r$ if $r \neq t$ and $z_r = y_r - 1$ if $r = t$.
	It can be easily checked that all of the following holds:
	$\phi_\mathit{base}(\sigma'',\sigma') \land
	\phi_\local(\sigma'',\sigma',Z) \land
	\phi_\globset(\sigma'',\sigma',Z) \land 
	\phi_\ruleset(\sigma'',Z) $. 
	Therefore, we only need to prove that $\phi_{\textit{appl}}(\sigma'',Z)$ holds.

	Fix any arbitrary rule $r \in \ruleset$ such that $z_r > 0$.
	We now have to show that there exists a set of rules $S'=\{r'_1,\ldots,r'_{s'}\}\subseteq \ruleset$ such that
	$$
	\sigma''.\counters[r'_1.\fromstate] > 0 \ \land \
	\bigwedge_{1\le i\le s'} z_{r'_i} > 0 \ \land \ \bigwedge_{1 < i \le s'} r'_{i-1}.\tostate = r'_{i}.\fromstate \ \land \ r'_{s'} = r
	$$
	
	Since $z_r > 0$, it follows that $y_r > 0$.
	Since $\phi_{\textit{appl}}(\sigma,Y)$ holds, it follows that
	there exists a set of rules $S = \{r_1,\dots,r_s\} \subseteq \ruleset$ such that
	$$
	\sigma.\counters[r_1.\fromstate] > 0 \ \land \
	\bigwedge_{1\le i\le s} y_{r_i} > 0 \ \land \ \bigwedge_{1 < i \le s} r_{i-1}.\tostate = r_{i}.\fromstate \ \land \ r_{s} = r
	$$
	We now consider three cases:
	\begin{itemize}
		\item Suppose $\sigma''.\counters[r_1.\fromstate] > 0$ and
		suppose $z_{r_i} > 0$ for every $1\le i\le s$. Then we can
		simply set $S' := S$.
		\item Suppose there exists $i$ such that $z_{r_i} = 0$. 
		Hence $z_{r_i} \neq y_{r_i}$ and this immediately implies that $r_i = t$. 
		Since $\sigma''$ was obtained from $\sigma$ by 
		firing $t$, it follows that $\sigma''.\counters[t.\tostate] = \sigma''.\counters[r_i.\tostate] = \sigma''.\counters[r_{i+1}.\fromstate] > 0$. 
		It is then easy to see that we can 
		set $S' := \{r_{i+1},\dots,r_s\}$.
		\item Suppose $\sigma''.\counters[r_1.\fromstate] = 0$.
		We can assume that $z_{r_i} > 0$ for every $1 \le i \le s$  as otherwise the previous case
		applies.
		Since $\sigma''$ was obtained from $\sigma$ by firing
		$t$, it follows that $r_1.\fromstate = t.\fromstate = \ell = t_m.\tostate$. Notice 
		that $\sigma''.\counters[t.\tostate] = 
		\sigma''.\counters[t_2.\fromstate] > 0$ and also that $z_{t_i} = y_{t_i} > 0$ for all $i > 1$. Hence we can set 
		$S' := \{t_2,\dots,t_m,r_1,\dots,r_s\}$.
	\end{itemize}
\end{proof}

\emph{Remark: } Notice that, in the course of 
proving Theorem~\ref{th:kirchoffpath}, we have actually proved something much stronger. 
For a schedule $\tau$ and a rule $r$, let $\tau(r)$ denote the
number of times $r$ is present in the schedule $\tau$.
Analysing our proof of Theorem~\ref{th:kirchoffpath}, it is easy
to notice the following:
\begin{lemma}\label{lem:observation}
	If $\tau$ is a steady
	schedule between $\sigma$ and $\sigma'$ then $\phi_\mathit{steady}(\sigma,\sigma')$
	holds with assignment $\{\tau(r)\}_{r \in \ruleset}$.
	Further, if $\phi_\mathit{steady}(\sigma,\sigma')$ holds
	with assignment $\{y_r\}_{r \in \ruleset}$ then it is possible
	to create a steady schedule $\tau$ between $\sigma$ and $\sigma'$ such 
	that $\tau(r) = y_r$ for all rules $r$. 
\end{lemma}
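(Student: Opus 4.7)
The plan is to establish the forward (soundness) direction first by a direct construction, and then tackle the converse (completeness) direction by induction on $\sum_{r \in \ruleset} y_r$, where $\{y_r\}_{r \in \ruleset}$ is any satisfying assignment for $\phi_\mathit{steady}(\sigma,\sigma')$. For soundness, given a steady schedule $\tau$ from $\sigma$ to $\sigma'$, I would set $y_r$ to be the number of occurrences of $r$ in $\tau$. Then $\phi_\mathit{base}$ holds by definition of steadiness, $\phi_\local$ and $\phi_\globset$ hold as Kirchhoff-style balance equations, $\phi_\ruleset$ holds because any fired rule has its guard true along the whole steady path, and $\phi_{\textit{appl}}$ holds by taking, for each fired rule $r$, a prefix chain of fired rules that traces back to an initially populated location.

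For the converse, the base case $\sum_r y_r = 0$ forces $\sigma=\sigma'$ by $\phi_\local$, $\phi_\globset$, and $\phi_\mathit{base}$, and the empty schedule witnesses the claim. For the inductive step, the goal is to exhibit a rule $t$ that is enabled in $\sigma$ and to show that, setting $\sigma'' := t(\sigma)$ and $z_r := y_r - [r=t]$, the assignment $\{z_r\}$ satisfies $\phi_\mathit{steady}(\sigma'',\sigma')$; the induction hypothesis then furnishes the remaining steady schedule, and prepending $t$ yields the desired schedule from $\sigma$ to $\sigma'$.

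To find such a $t$, I would fix any rule $r'$ with $y_{r'} > 0$ and invoke the satisfied disjunct of $\phi_{\textit{appl}}$ to obtain a chain $r_1, \dots, r_s = r'$ of rules, all with $y_{r_i} > 0$, whose first source $r_1.\fromstate =: \ell$ is already populated in $\sigma$. The natural candidate to fire is $t := r_1$. Validating $\phi_\mathit{base}$, $\phi_\local$, $\phi_\globset$ and $\phi_\ruleset$ at $(\sigma'',\sigma')$ under $\{z_r\}$ is routine bookkeeping; the delicate part is re-establishing $\phi_{\textit{appl}}(\sigma'',\{z_r\})$. Concretely, for any rule $r$ with $z_r > 0$ one must exhibit a fireable chain from a location populated in $\sigma''$. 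The problematic situation is that firing $t$ may have emptied $\ell$, so some chain relied on in $\phi_{\textit{appl}}(\sigma,\{y_r\})$ may start at $\ell$ even though no process remains there in $\sigma''$.

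This obstacle is exactly where the case split on whether $\ell$ participates in a fireable cycle under $\{y_r\}$ saves us, and this is the main technical heart of the argument. If no fireable cycle touches $\ell$, then every chain in $\phi_{\textit{appl}}(\sigma,\{y_r\})$ that actually lands at $\ell$ can be extended backwards: a counting argument on $\phi_\local$ at $\ell$ shows that some incoming rule at $\ell$ must still have positive multiplicity, and by non-cyclicity its own chain does not revisit $\ell$, so the backwards extension avoids $\ell$ and terminates at a location still populated in $\sigma''$. If there is a fireable cycle $t_1, \ldots, t_m$ through $\ell$, I would instead choose $t := t_1$; then $\sigma''.\counters(t_2.\fromstate) > 0$ and prepending the tail $t_2, \dots, t_m$ repairs any chain that started at $\ell$. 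Either way, $\phi_{\textit{appl}}(\sigma'',\{z_r\})$ is rescued, so $\phi_\mathit{steady}(\sigma'',\sigma')$ holds under a strictly smaller assignment, and the induction goes through. I expect the bookkeeping in the cycle-free subcase, in particular the argument that some incoming rule at $\ell$ must have $y_r > 0$, to be the subtlest step, and I would present it by summing the $\phi_\local$ equation at $\ell$ and contrasting with the fact that $z_{r_1} > 0$ contributes to the outflow.
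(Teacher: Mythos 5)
Your proposal is correct and follows essentially the same approach as the paper: the paper proves Theorem~\ref{th:kirchoffpath} by this exact induction on $\sum_r y_r$, using the same case split on whether the source location $\ell$ of the chosen chain participates in a fireable cycle (Lemmas~\ref{lem:nocycles} and~\ref{lem:cycles}), with the same choice of rule to fire in each case and the same counting argument on $\phi_\local$ at $\ell$; Lemma~\ref{lem:observation} is then read off by tracking the multiplicities through that induction, as you do.
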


\subsection{Proof of Theorem~\ref{th:main}}

\begin{proof}
	Let $\sigma$ and $\sigma'$ be two configurations.
	First we will show that deciding if there is a \emph{steady schedule}
	$\tau$ such that~$\tau(\sigma) = \sigma'$ can be done in NP.
	From Theorem~\ref{th:kirchoffpath} we know there is a steady schedule from $\sigma$ to $\sigma'$ if and only if
	the existential Presburger formula
	$\phi_\mathit{steady}(\sigma,\sigma')$ is true. 
	However the formula $\phi_\mathit{steady}(\sigma,\sigma')$ is exponential in size when constructed naively. 
	But the exponential dependence comes only from the constraint $\phi_{\textit{appl}}$ which can be easily reduced to a polynomial dependence as follows:
	For each rule~$t$ we guess a set $S = \{r_1^t,\dots,r_s^t\}$ and check that $r_s^t = t$ and $r_{i}^t.\fromstate = r_{i+1}^t.\tostate$ for all $i < s$.
	Once that is done, we replace $\phi_{\textit{appl}}$ with
	$$
	\bigwedge_{t \in R} x_t > 0 \implies  
	\sigma[r_1^t.from] > 0 \ \land \
	x_{r_1^t} > 0 \ \land \dots \land \ x_{r_s^t} > 0 
	$$
	
	It is then clear that $\phi_\mathit{steady}(\sigma,\sigma')$ is true if and only if at least one of our guesses is true. 
	Now we have reduced it to a polynomial sized existential Presburger arithemtic formula which we can decide in NP (See~\cite{ExistPres}).
	This shows that checking whether there is a steady schedule
	from $\sigma$ to $\sigma'$ is in NP.
	
	Now suppose we want to check if there is a (general) run from $\sigma$ to $\sigma'$. By Proposition~\ref{prop:gen-to-steady},
	every path can be written as the concatenation of 
	at most $K$ steady paths where $K = |\PrecondU| + |\PrecondL| + 1$. 
	Hence it suffices to check if there is a run of the form
	\begin{equation*}
		\sigma = \sigma_0 \xrightarrow{*} \sigma'_0 \rightarrow \sigma_1 \xrightarrow{*} \sigma'_1 \rightarrow \sigma_2 \dots \sigma_{K} \xrightarrow{*} \sigma'_{K} = \sigma'
	\end{equation*}
	such that the context of each $\sigma_i$ is the same as the context of 
	$\sigma'_i$. With this in mind, we define $\phi_\mathit{step}(\eta,\eta')$
	as the formula $\phi_\mathit{steady}(\eta,\eta')$, except we do not enforce
	that $\eta$ and $\eta'$ have the same context and we enforce
	that the existential variables $\{x_r\}_{r \in \ruleset}$ obey the constraint $\sum_{r \in \ruleset} x_r \le 1$. It is
	easy to see that
	$\phi_\mathit{step}(\eta,\eta')$ is true iff $\eta'$ can be
	reached from $\eta$ in at most one step.
	Hence if we define $\phi_\mathit{reach}(\sigma,\sigma')$ to be 
	\begin{equation} \label{eq:reach}
		\exists \ \sigma_0, \sigma'_0, \dots, \sigma_{K}, \sigma'_{K} \
		\left(\sigma_0 = \sigma \land \sigma'_{K} = \sigma' \land \bigwedge_{0 \le i \le K} \phi_\mathit{steady}(\sigma_i,\sigma'_i) 
		\land  \bigwedge_{0 \le i \le K-1}
		\phi_\mathit{step}(\sigma'_i,\sigma_{i+1}) \right)
	\end{equation}
	then it is clear that there is a run from $\sigma$ to $\sigma'$ 
	iff $\phi_\mathit{reach}(\sigma,\sigma')$ is satisfied. 
	To decide if $\phi_\mathit{reach}(\sigma,\sigma')$ is true, we eliminate the exponentially sized disjunctions in $\phi_\mathit{steady}$ as before 
	and check that the resulting (polynomial sized) formula is satisfiable.
\end{proof}

The proof of Theorem~\ref{th:main} is complete. 
However, for future purposes, we modify the formula $\phi_{reach}(\sigma,\sigma')$ slightly,
so that it holds some extra information. 
Notice that in equation~\ref{eq:reach},
for each rule $r$, there is an existential variable
$x^i_r$ appearing in the subformula $\phi_\mathit{steady}(\sigma_i,\sigma'_i)$ and 
there is also an existential variable $y^i_r$ appearing in the subformula
$\phi_\mathit{step}(\sigma'_i,\sigma_{i+1})$.
We introduce a new existential variable $sum_r$ and enforce the constraint 
$sum_r = \sum_{i=0}^K x^i_r + \sum_{i=0}^{K-1} y^i_r$.
Using Lemma~\ref{lem:observation} we can easily prove that

\begin{lemma}\label{lem:another-observation}
	The following are true:
	\begin{itemize}
		\item Suppose $\phi_{reach}(\sigma,\sigma')$ is true with the 
		assignment $\{z_r\}_{r \in \ruleset}$ to the existential variables
		$\{sum_r\}_{r \in \ruleset}$. Then there is a schedule $\tau$
		between $\sigma$ and $\sigma'$ such that $\tau(r) = z_r$ for all
		rules $r$.
		\item Conversely, if $\tau$ is a schedule between $\sigma$ 
		and $\sigma'$, then it is possible to satisfy $\phi_{reach}(\sigma,\sigma')$ 
		by setting $\{\tau(r)\}_{r \in \ruleset}$
		to the variables in $\{sum_r\}_{r \in \ruleset}$.
	\end{itemize}
\end{lemma}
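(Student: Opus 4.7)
The plan is to prove both directions of Lemma~\ref{lem:another-observation} by unfolding the explicit structure of $\phi_{reach}$ shown in equation~(\ref{eq:reach}) and then invoking Lemma~\ref{lem:observation} segment by segment. Recall that $\phi_{reach}(\sigma,\sigma')$ existentially quantifies intermediate configurations $\sigma_0,\sigma'_0,\ldots,\sigma_K,\sigma'_K$ with $\sigma_0=\sigma$, $\sigma'_K=\sigma'$, and $K=|\Precond|+1$, together with, for every rule $r$, count variables $x^i_r$ occurring inside $\phi_\mathit{steady}(\sigma_i,\sigma'_i)$ and $y^i_r$ occurring inside $\phi_\mathit{step}(\sigma'_i,\sigma_{i+1})$. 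The auxiliary variable $sum_r$ is pinned by the added constraint
\[ sum_r \;=\; \sum_{i=0}^{K} x^i_r \;+\; \sum_{i=0}^{K-1} y^i_r . \]

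For the first bullet, suppose $\phi_{reach}(\sigma,\sigma')$ is satisfied by some assignment whose projection on the $sum_r$ variables is $\{z_r\}_{r\in\ruleset}$. Apply the second part of Lemma~\ref{lem:observation} to each $\phi_\mathit{steady}(\sigma_i,\sigma'_i)$ in order to materialise a steady schedule $\rho_i$ from $\sigma_i$ to $\sigma'_i$ with $\rho_i(r)=x^i_r$ for every $r\in\ruleset$. Since $\phi_\mathit{step}(\sigma'_i,\sigma_{i+1})$ was tailored so that $\sum_{r} y^i_r \leq 1$, it enforces either $\sigma'_i=\sigma_{i+1}$ (the trivial step) or the existence of a unique rule $t_i$ with $y^i_{t_i}=1$ and $t_i(\sigma'_i)=\sigma_{i+1}$. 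Concatenating the pieces yields a schedule $\tau$ from $\sigma$ to $\sigma'$ whose rule-count is $\tau(r) = \sum_{i=0}^{K} x^i_r + \sum_{i=0}^{K-1} y^i_r = z_r$.

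For the second bullet, let $\tau$ be any schedule from $\sigma$ to $\sigma'$. By Proposition~\ref{prop:gen-to-steady}, the induced path decomposes into at most $K$ steady sub-paths; pad with empty steady schedules (and empty step slots) so that the decomposition has exactly $K+1$ steady pieces $\rho_0,\ldots,\rho_K$ separated by $K$ context-crossing, possibly empty, single-rule moves $t_0,\ldots,t_{K-1}$. Read off the endpoints $\sigma_i,\sigma'_i$ from this decomposition. Define $x^i_r:=\rho_i(r)$ and set $y^i_r:=1$ if $t_i=r$ (and $0$ otherwise). The first part of Lemma~\ref{lem:observation} guarantees that $\phi_\mathit{steady}(\sigma_i,\sigma'_i)$ is satisfied by $\{x^i_r\}_r$, and by construction $\phi_\mathit{step}(\sigma'_i,\sigma_{i+1})$ is satisfied by $\{y^i_r\}_r$. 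The forced value of $sum_r$ is then $\sum_i x^i_r + \sum_i y^i_r = \tau(r)$, as desired.

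The only delicate point, and thus the main obstacle, is arranging the decomposition of $\tau$ to match the rigid shape of $\phi_{reach}$: Proposition~\ref{prop:gen-to-steady} only guarantees \emph{at most} $K+1$ steady segments, while $\phi_{reach}$ fixes exactly that number, so one must explicitly pad the decomposition with trivial steady sub-paths (witnessing $\sigma_i=\sigma'_i$ with all $x^i_r=0$) and empty step slots (all $y^i_r=0$). Once this padding is handled, both directions are routine applications of Lemma~\ref{lem:observation} segment by segment together with the ``at most one rule'' semantics of $\phi_\mathit{step}$.
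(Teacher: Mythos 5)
Your proof is correct and matches the paper's own (terse) argument: the paper simply asserts that Lemma~\ref{lem:another-observation} follows from Lemma~\ref{lem:observation} and the structure of $\phi_\mathit{reach}$, and you have fleshed this out exactly in the intended way, applying Lemma~\ref{lem:observation} to each steady block, reading off the at-most-one fired rule from each $\phi_\mathit{step}$ block, and padding the decomposition from Proposition~\ref{prop:gen-to-steady} with trivial steady segments and empty step slots so that it conforms to the rigid $K{+}1$/$K$ shape of $\phi_\mathit{reach}$. The padding observation is the only non-mechanical point and you identify it correctly.
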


For liveness properties, this lemma will prove very useful.

\subsection{Proof of Corollary~\ref{cor:reach}}

\begin{proof}
	Recall that, in the parameterized reachability problem, 
	we are given two sets $\local_{=0},  \local_{>0}$ of locations, and we want to decide if there is an initial configuration $\gst_0$, and some configuration $\gst$ reachable from $\gst_0$ such that it satisfies $\gst.\counters[\ell] = 0$ for every 
	$\ell \in \local_{=0}$ and $\gst.\counters[\ell] > 0$ for every $\ell \in \local_{>0}$. The non-parameterized reachability problem
	is similar, except the initial configuration $\gst_0$ is also
	given as part of the input.
	
	By Theorem~\ref{th:main} we know that there is a formula $\phi_\mathit{reach}$ in 
	existential Presburger arithmetic with $(2\numlocal+2\numglob+2|\paraset|)$ free variables such that
	$\phi_\mathit{reach}(\sigma,\sigma')$ is true iff there is a run between $\sigma$ and
	$\sigma'$. 
	For parameterized reachability, 
	we modify the formula $\phi_{reach}$ so that the free variables become existential variables.
	For non-parameterized reachability, we modify $\phi_{reach}$ so that the free variables corresponding to the first configuration is fixed to $\sigma_0$ and the free variables corresponding to the second configuration become existential variables.
	Finally, in both the cases, we specify that the second configuration
	must satisfy the constraints according to the sets $\local_{=0}$
	and $\local_{>0}$.
	It is then clear that by checking if the constructed formula is 
	satisfiable, we can
	solve both the problems in NP.
\end{proof}

\section{Detailed proofs from Section~\ref{sec:liveness}}

We break down the proof of Theorem~\ref{th:liveness} into various parts.
As a preliminary step, we introduce the notion of a lasso path.

\subsection{Lasso paths and cut graphs}

Similar to model checking finite systems, we first show that
if a path of the threshold automaton satisfies the $\ELTLFT$ formula
$\varphi$, then there is an ``ultimately periodic'' path which 
also satisfies $\varphi$. \\

For a schedule $\tau$ let $[\tau]$ denote the set of all rules which
appear in $\tau$. We have the following definition:

\begin{definition} 
	A path $\finpath{\sigma}{\tau'}$ is called a lasso path
	if $\tau'$ can be decomposed as $\rho \circ \tau^{\omega}$ such that 
	if we let
	$\sigma_1 = \rho(\sigma)$ and $\sigma_2 = (\rho \circ \tau)(\sigma)$
	then
	\begin{itemize}
		\item $\sigma_1.\counters = \sigma_2.\counters$.
		\item If $r \in [\tau]$ such that $r.\update[x] > 0$
		then for all rules $r' \in [\tau]$, $x$ does not appear in any of 
		the fall guards of $r'$.
	\end{itemize}
\end{definition}

\begin{proposition} \label{prop:lasso}
	If $\infpath{\sigma'}{\tau'}$ satisfies a formula $\varphi$ in $\ELTLFT$
	then there is a lasso path  which satisfies 
	$\varphi$.
\end{proposition}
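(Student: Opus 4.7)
I would prove the proposition by a pumping argument tailored to the monotone structure of threshold automata. First, using the normal form for $\ELTLFT$ formulas (introduced later in Section~\ref{sec:liveness}), I would rewrite $\varphi$ as $\phi_0 \land \bigwedge_{i=1}^{k} \ltlF \phi_i \land \ltlG \phi_{k+1}$, so that satisfaction decomposes into a propositional condition on the starting configuration, finitely many eventualities, and one safety obligation.

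Since $\infpath{\sigma'}{\tau'} \models \varphi$, I would then choose an index $N$ large enough that (a) every $\ltlF \phi_i$ is witnessed at some position $\le N$, and (b) the context $\statectx(\sigma_n)$ has stabilized for all $n \ge N$. Property (b) uses the monotonicity of guards (rise guards stay true once true, fall guards stay false once false) together with the finiteness of the set of possible contexts. Next, let $R_\infty \subseteq \ruleset$ be the set of rules that fire infinitely often after $N$. The key observation is that $R_\infty$ automatically satisfies the second clause of the lasso definition: if some $r \in R_\infty$ has $r.\update[x] > 0$ and some $r' \in R_\infty$ contains a fall guard $x < c$, then after sufficiently many firings of $r$ the value of $x$ exceeds $c$, permanently falsifying the guard of $r'$ and contradicting $r' \in R_\infty$. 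Picking $M \ge N$ after which only rules of $R_\infty$ fire, and using that counter valuations lie in the finite set $\{\counters \colon \sum_\ell \counters(\ell) = \syssize(\sigma'.\param)\}$, the pigeonhole principle yields indices $M \le n_1 < n_2$ with $\sigma_{n_1}.\counters = \sigma_{n_2}.\counters$.

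I would then set $\rho$ to be the prefix of $\tau'$ up to position $n_1$ and $\tau$ to be the segment from $n_1$ to $n_2$, and check that $\rho \circ \tau^\omega$ is applicable to $\sigma'$ and is a lasso path. Applicability of each subsequent iteration of $\tau$ uses (i) counter equality at the loop endpoints, (ii) monotonicity of rise guards, and (iii) the $R_\infty$ property, which prevents any rule of $[\tau]$ from falsifying a fall guard of a rule of $[\tau]$. Finally I would verify that $\rho \circ \tau^\omega$ satisfies $\varphi$: $\phi_0$ is preserved because $\sigma'$ is unchanged; each $\ltlF \phi_i$ is witnessed within $\rho$ by the choice of $N$; and $\ltlG \phi_{k+1}$ is preserved because in the prefix the configurations are identical to those of the original path, while along each iteration of $\tau$ the counter projections coincide with those of the first iteration and the context is already the stable one, so every propositional sub-formula of $\phi_{k+1}$ built from counter atoms and guards from $\Precond$ evaluates identically.

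The main obstacle, I expect, is the last step: configurations visited by repeating $\tau$ differ from the original configurations in their shared-variable components, so one must argue that the truth value of $\phi_{k+1}$ is unaffected. The bridge is that $\phi_{k+1}$ depends on the shared variables only through guards in $\Precond$ whose truth is fixed by the stable context reached by position $M$, while counter-based atoms are preserved by the equality of counters at the pumping points and across iterations of $\tau$.
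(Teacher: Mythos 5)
Your pumping insight is sound and close in spirit to the paper's: the paper likewise isolates the set $V$ of shared variables that grow without bound, observes that rules with fall guards on $V$ must eventually stop firing, and then pigeonholes to find repeated counter configurations; so your $R_\infty$ argument is essentially the same combinatorial core. The divergence is in how the two proofs justify that the pumped path still satisfies $\varphi$. You work directly with the normal form and treat $\phi_1,\ldots,\phi_{k+1}$ as though they were propositional; but in $\ELTLFT$ normal form the conjuncts $\phi_i$ and $\phi_{k+1}$ are themselves in normal form, i.e., they carry nested $\ltlF$ and $\ltlG$ obligations. This creates a genuine gap in two places. First, ``each $\ltlF\phi_i$ is witnessed within $\rho$ by the choice of $N$'' does not follow: a witness for $\ltlF\phi_i$ at some position $n_i \le N$ is an entire infinite suffix of the \emph{original} path satisfying $\phi_i$, and once you replace the tail after $n_1$ by $\tau^\omega$ that suffix changes, so the witness may no longer be a witness. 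Second, for $\ltlG\phi_{k+1}$ you argue only that the ``propositional sub-formula'' of $\phi_{k+1}$ evaluates identically across iterations of $\tau$, which ignores the $\ltlF$- and $\ltlG$-conjuncts nested inside $\phi_{k+1}$.

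The paper avoids this entirely by passing to a B\"uchi automaton $B_\varphi$ for $\varphi$ and the product $\Sys(\TA)\times B_\varphi$: the nested temporal obligations are encoded in the finite automaton state, so it suffices to pigeonhole on a triple (accepting B\"uchi state, counter configuration, growth pattern of $V$) and then repeat the segment between two matching occurrences. Repeating a loop through an accepting product state trivially yields an accepting run, hence a lasso satisfying $\varphi$, without any recursion on the formula. If you want to keep the automaton-free presentation, you would need to replace your last step by an argument that explicitly descends through the recursive normal form (essentially Lemma~\ref{lem:already} and the cut-graph machinery from the appendix), choosing $N$ so that the witnesses for \emph{all} nested $\ltlF$ sub-obligations, at every depth, have all been discharged before $n_1$ and so that every nested $\ltlG$ obligation is stably satisfied on the periodic tail. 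As written, the passage from ``counters and contexts coincide'' to ``the full temporal formula is preserved'' is not justified.

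A minor additional point: context stability alone does not make all atomic guard propositions constant across iterations. Rise guards on variables in $V$ can still flip from false to true while pumping. What actually saves you is that once a rise guard becomes true it stays true, together with the fact that the original B\"uchi transition labels only record which atoms were true, not that they were \emph{exactly} that set. Your write-up would need to make that explicit rather than appealing to a fully frozen context.
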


\begin{proof}
	We do not give details on Buchi automata and product construction, 
	as these are well-known.
	
	The formula $\varphi$ in $\ELTLFT$ can be thought of as an LTL
	formula over the atomic propositions $\mathit{pf}$ and hence
	there is a Buchi automaton $B_\varphi = (Q,2^{\mathit{pf}},Q_0,\Delta,F)$
	which recognizes exactly those sequence of propositions which
	satisfy $\varphi$.
	
	Fix a threshold automaton $\TA = (\local,\initlocal,\varset,\ruleset)$. Let $\Sigma$ be the set 
	of all configurations of $\TA$, $\iconfigs$ be the set
	of all initial configurations of $\TA$ and $\transrel \subseteq \Sigma \times \ruleset \times \Sigma$ 
	be the transition relation between the configurations.
	Let $\Sys(\TA) = (\Sigma,\iconfigs,\transrel)$.
	We can then construct
	the standard product construction $\Sys(\TA) \times B_{\varphi}$,
	where there is a transition $((\sigma,q),p,(\sigma',q'))$
	iff $(q,p,q') \in \Delta, (\sigma,r,\sigma') \in \transrel$ for some rule $r$
	and $\sigma \models p$. A path $(\sigma_0,q_0),(\sigma_1,q_1),\dots$
	is an accepting path iff it visits a state in $\Sigma \times F$ infinitely often.
	
	Let $PA = (\sigma_0,q_0),(\sigma_1,q_1),\dots$ be an infinite accepting path in $\Sys(\TA) \times B_\varphi$. This gives rise to a path 
	$P = \sigma_0,r_0,\sigma_1,r_1,\dots$ in $\Sys(\TA)$.
	Let $V \subseteq \globset$ be the set of all variables which
	are infinitely often incremented along $P$.
	Since these variables are infinitely often incremented,
	it follows that there exists $i$ such that for all $j \ge i$,
	$r_j$ does not have any fall guards with any of the variables in $V$.
	Since the number of processes does not change during the course of a 
	path and since the number of states in $F$ is finite, by combining all of the above facts, 
	it follows that we can find an infinite subsequence of $PA$ of the 
	form $(\sigma_{i_1},q_{i_1}),(\sigma_{i_2},q_{i_2}),\dots$ such that
	\begin{itemize}
		\item $q = q_{i_1} = q_{i_2} = \dots$ and $q \in F$
		\item $\sigma_{i_1}.\counters = \sigma_{i_2}.\counters = \dots$
		\item For every $j$, $\sigma_{i_j}.\vars[x] < \sigma_{i_{j+1}}.\vars[x]$ iff $x \in V$
		\item For every $i_1 \le j \le i_2$, the rule $r_j$ does
		not have any fall guards with any of the variables in $V$.
	\end{itemize}
	
	Let $\rho = r_1 \circ r_2 \circ \dots \circ r_{i_1-1}$ 
	and let $\tau = r_{i_1} \circ r_{i_1+1} \circ \dots \circ r_{i_2-1}$.
	It can be verified (using all the points given above) that
	$\finpath{\sigma_0}{\rho \circ \tau^{\omega}}$ is a lasso
	path which satisfies $\varphi$.
\end{proof}

Hence it suffices to concentrate on lasso paths in the future. 
To prove our theorem, we also need the notion of a cut graph of a formula
$\varphi$.
However, for our purposes, it suffices to know that
the cut graph $Gr(\varphi)$ of a formula $\varphi$
is a directed acyclic graph which consists of two distinguished vertices $loop_{st}$
and $loop_{end}$ and every other vertex is labelled by a (normal) sub-formula
of $\varphi$. 

For a formula  $\varphi \equiv \phi_0 \land \ltlF \phi_1 \land  \dots \land \ltlF \phi_k \land \ltlG \phi_{k+1}$ in normal form, let
$\mathtt{prop}(\varphi) = \phi_0$.
The following lemma (proved in~\cite{ELTLFT}) connects the notion of a lasso path and 
the cut graph. Intuitively, it says that a lasso path satisfies a formula
$\varphi$ iff it can be ``cut'' into finitely many pieces
such that (1) the endpoints of each piece satisfy some (propositional)
formula dictated by the cut-graph and (2) \emph{all} the configurations between any two endpoints satisfy some other (propositional) formula
as dictated by the cut-graph.

\begin{lemma}\label{lem:already}
	A lasso path $\infpath{\sigma_0}{\rho \circ \tau^{\omega}}$ satisfies
	a formula $\varphi \equiv \phi_0 \land \ltlF \phi_1 \land  \dots \land \ltlF \phi_k \land \ltlG \phi_{k+1}$ iff 
	there is a \emph{topological ordering}
	\begin{equation*}
		v_1, v_2, \dots, v_{c-1}, v_c = loop_{st}, v_{c+1}, \dots,
		v_{l-1}, v_l = loop_{end}
	\end{equation*}
	of its cut graph and a finite path
	\begin{equation*}
		\sigma_0, \tau_0, \sigma_1, \tau_1, \dots \sigma_c, \tau_c, \dots \sigma_{l-1}, \tau_{l-1}, \sigma_{l}
	\end{equation*}
	such that the following holds:
	If each $v_i$ (other than $v_c$ and $v_l$) is 
	of the form $\phi_0^i \land \ltlF \phi_1^i \land \dots \land \ltlF \phi_{k_i}^i \land \ltlG \phi_{k_i+1}^i$, then 
	
	\begin{itemize}
		\item $\sigma_0$ is an initial configuration,
		and $\tau_i(\sigma_i) = \sigma_{i+1}$ for every $0\le i < l$.
		\item $\tau_0 \circ \tau_1 \dots \circ \tau_{c-1} = \rho \circ\tau^{K}$ for some $K$
		and $\tau_c \circ \dots \circ \tau_{l-1} = \tau$ 
		\item $\sigma_0 \vDash \phi_0$ and $\setconf{\sigma_0}{\tau_0} \vDash \mathtt{prop}(\phi_{k+1})$.
		\item For every $i \notin \{c,l\}$, we have $\sigma_i \vDash \phi_0^i$.
		\item If $i < c$, then $\setconf{\sigma_i}{\tau_i} \vDash
		\bigwedge_{0 \le j \le i} \ \mathtt{prop}(\phi_{k_j+1}^j)$.
		\item If $i \ge c$, then $\setconf{\sigma_i}{\tau_i} \vDash \bigwedge_{0 \le j < l} \ \mathtt{prop}(\phi_{k_j+1}^j)$.
	\end{itemize}
\end{lemma}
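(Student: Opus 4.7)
The approach is an induction on the nesting depth of temporal operators in the normal form of $\varphi$. The base case, where $\varphi$ is a purely propositional formula $\phi_0$, is immediate: the cut graph degenerates to $loop_{st} \to loop_{end}$ with no intermediate vertex, the only nontrivial condition becomes $\sigma_0 \vDash \phi_0$, and this coincides with the definition of the lasso satisfying $\varphi$. So the work lies in the inductive step, where $\varphi \equiv \phi_0 \land \ltlF \phi_1 \land \dots \land \ltlF \phi_k \land \ltlG \phi_{k+1}$ with each $\phi_i$ ($i \ge 1$) in normal form of strictly smaller nesting depth.

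For the direction $(\Rightarrow)$, I start from the lasso $\infpath{\sigma_0}{\rho \circ \tau^\omega}$ satisfying $\varphi$. Since $\ltlG \phi_{k+1}$ holds, every suffix satisfies $\phi_{k+1}$ and in particular every configuration of the path satisfies the propositional head $\mathtt{prop}(\phi_{k+1})$. For each top-level $\ltlF \phi_i$, I pick the earliest position $m_i$ at which the corresponding suffix satisfies $\phi_i$; the periodicity of $\tau^\omega$ lets me assume $m_i$ lies in $\rho$ or in the first copy of $\tau$. Those positions falling in $\rho$ yield vertices ordered before $loop_{st}$, those falling in the first copy of $\tau$ yield vertices ordered strictly between $loop_{st}$ and $loop_{end}$; this simultaneously fixes a topological ordering of the cut graph compatible with its edges, and fixes the boundary configurations $\sigma_1, \ldots, \sigma_l$. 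The local proposition condition $\sigma_i \vDash \phi_0^i$ follows because $\phi_0^i$ is the propositional head of the subformula labeling $v_i$ and its $\ltlF$-suffix holds from $\sigma_i$. The global-proposition conditions follow from applying the induction hypothesis to each $\phi_i$ to get an inner decomposition, and observing that the propositional heads of the $\ltlG$-component of each $\phi_i$ must hold from the moment $\phi_i$ is activated (monotonicity of $\ltlG$), which is captured by the accumulating conjunction $\bigwedge_{j \le i}$ before $loop_{st}$ and the full conjunction $\bigwedge_{j < l}$ after $loop_{st}$.

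For the direction $(\Leftarrow)$, given the topological ordering and the finite path, I verify $\varphi$ pointwise. The propositional conjunct $\phi_0$ holds at $\sigma_0$ by hypothesis. For each $\ltlF \phi_i$, let $v_{j_i}$ be its vertex; the sub-decomposition obtained by restricting to the descendants of $v_{j_i}$ in the cut graph, together with the tail $\sigma_{j_i}, \tau_{j_i}, \sigma_{j_i+1}, \ldots$, meets the hypotheses of the lemma at one less nesting depth, and so by the inductive hypothesis the suffix from $\sigma_{j_i}$ satisfies $\phi_i$. For $\ltlG \phi_{k+1}$ I combine the global-proposition conditions with the periodic structure: every configuration satisfies the local proposition of every relevant $v_j$, while the nested $\ltlF$ subformulas of $\phi_{k+1}$ are witnessed inside the loop and therefore recur in every suffix of $\tau^\omega$, so each suffix satisfies $\phi_{k+1}$.

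The main obstacle will be the case $\ltlG \phi_{k+1}$ when $\phi_{k+1}$ itself contains nested $\ltlF$ (and further $\ltlG$) subformulas, since such formulas must be satisfied from \emph{every} position of the lasso rather than just once. The argument relies on the interaction of three facts: the periodicity of the lasso, the placement of the corresponding "activation" vertices strictly between $loop_{st}$ and $loop_{end}$ in the topological ordering, and the fact that in the range $i \ge c$ the condition demands the propositional heads of \emph{all} global components to hold throughout $\setconf{\sigma_i}{\tau_i}$, not merely those of earlier vertices. Verifying that this combination exactly captures the semantics of $\ltlG$ over a lasso — so that the induction carries through an arbitrary alternation of $\ltlF$ and $\ltlG$ — is the technical heart of the proof, and the conditions separating $i < c$ from $i \ge c$ in the statement are precisely what makes this bookkeeping work.
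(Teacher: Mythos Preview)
The paper does not contain a proof of this lemma. Immediately before the statement it says the lemma is ``proved in~\cite{ELTLFT}'', and the remark following it adds that the proof there (given for automata whose cycles carry zero updates) ``virtually \ldots\ holds for the general case''. So there is no in-paper argument to compare your proposal against; the result is imported wholesale.

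On your proposal itself: the induction on nesting depth is the natural strategy, and the overall shape --- pick witness positions for the top-level $\ltlF$-subformulas, order them to get a topological ordering of the cut graph, push the $\ltlG$-heads forward as accumulating invariants, and use periodicity for the loop part --- is sound in outline and is essentially what~\cite{ELTLFT} does. Two points would need tightening in a full write-up. First, the cut graph of $\varphi$ is built \emph{recursively} from the cut graphs of $\phi_1,\ldots,\phi_{k+1}$; your step ``restricting to the descendants of $v_{j_i}$ in the cut graph'' and applying the induction hypothesis presupposes that this sub-DAG is exactly the cut graph of $\phi_i$, which is a structural fact about the definition in~\cite{ELTLFT} that you should invoke explicitly rather than leave implicit. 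Second, your claim that each $m_i$ can be placed in $\rho$ or in the first copy of $\tau$ is slightly too strong: since the $\phi_i$ may themselves contain nested $\ltlF$'s, their witnesses can spread over several iterations of $\tau$, which is exactly why the statement allows $\tau_0\circ\cdots\circ\tau_{c-1}=\rho\circ\tau^{K}$ for an unspecified $K$ rather than $K=1$. You should fix $K$ first (a bound in terms of the number of $\ltlF$ subformulas suffices) and only then choose the $m_i$. With these repairs the scheme goes through.
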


We note the subtle but important difference in the indices between the 
last two points.
When $i < c$, we require $\setconf{\bf{\sigma_i}}{\tau_i} \vDash \bf{\bigwedge_{0 \le j \le i} \ \mathtt{prop}(\phi_{k_j+1}^j)}$, 
but when $i \ge c$, we require $\bf{\setconf{\sigma_i}{\tau_i} \vDash \bigwedge_{0 \le j < l} \ \mathtt{prop}(\phi_{k_j+1}^j)}$.\\

\textit{Remark: } In \cite{ELTLFT} the above lemma was only
proven for automata where there are no updates in a cycle,
but virtually the same proof also holds for the general case.
\medskip

Combining Proposition~\ref{prop:lasso} and Lemma~\ref{lem:already},
we get the following lemma.

\begin{corollary}\label{cor:topo-ord}
	The formula $\varphi \equiv \phi_0 \land \ltlF \phi_1 \land  \dots \land \ltlF \phi_k \land \ltlG \phi_{k+1}$
	is satisfiable if and only if there is a \emph{topological ordering} $<_{Gr}$
	\begin{equation*}
		v_1,v_2,\dots, v_{c-1}, v_c = loop_{st}, v_{c+1}, \dots,
		v_{l-1}, v_l = loop_{end}
	\end{equation*}
	of its cut graph and a finite path $P$
	\begin{equation*}
		\sigma_0, \tau_0, \sigma_1, \tau_1, \dots \sigma_c, \tau_c, \dots \sigma_{l-1}, \tau_{l-1}, \sigma_{l}
	\end{equation*}
	such that the following holds:
	If each $v_i$ (other than $v_c$ and $v_l$) is 
	of the form $\phi_0^i \land \ltlF \phi_1^i \land \dots \land \ltlF \phi_{k_i}^i \land \ltlG \phi_{k_i+1}^i$, then 
	\begin{itemize}
		\item $\sigma_0$ is an initial configuration, $\sigma_c.\counters = \sigma_l.\counters$ and 
		$\tau_i(\sigma_i) = \sigma_{i+1}$ for every $i < l$.
		\item If $r \in [\tau_c \circ \tau_{c+1} \dots \circ \tau_{l-1}]$ such that 
		$r.\update[x] > 0$ then for every rule $r' \in [\tau_c \circ \tau_{c+1} \dots \circ \tau_{l-1}]$, 
		$x$ does not appear in any of the fall guards of $r'$.
		\item $\sigma_0 \vDash \phi_0$ and $\setconf{\sigma_0}{\tau_0} \vDash \mathtt{prop}(\phi_{k+1})$.
		\item For every $i \notin \{c,l\}$, we have $\sigma_i \vDash \phi_0^i$.
		\item If $i < c$, then $\setconf{\sigma_i}{\tau_i} \vDash
		\bigwedge_{0 \le j \le i} \mathtt{prop}(\phi_{k_j+1}^j)$.
		\item If $i \ge c$, then $\setconf{\sigma_i}{\tau_i} \vDash \bigwedge_{0 \le j < l} \mathtt{prop}(\phi_{k_j+1}^j)$.
	\end{itemize}	
	If such a topological ordering $<_{Gr}$ and such a path 
	$P$ exists then we call the pair $(<_{Gr},P)$ a witness to the 
	formula $\varphi$.
\end{corollary}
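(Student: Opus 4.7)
The plan is to deduce the corollary by combining Proposition~\ref{prop:lasso} with the biconditional in Lemma~\ref{lem:already}, checking that the two extra conditions appearing in the corollary (namely the counter-equality $\sigma_c.\counters = \sigma_l.\counters$ and the fall-guard restriction on the loop segment) translate cleanly between the two formulations.

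For the forward direction, I would apply Proposition~\ref{prop:lasso} to any infinite path witnessing satisfiability of $\varphi$, extracting a lasso path $\infpath{\sigma_0}{\rho \circ \tau^\omega}$ that satisfies $\varphi$. Lemma~\ref{lem:already} then produces a topological ordering $<_{Gr}$ and a finite path $P$ enjoying all the properties demanded by the corollary, possibly excepting the two new ones. For the counter-equality, Lemma~\ref{lem:already} gives $\sigma_c = (\rho \circ \tau^K)(\sigma_0)$ and $\sigma_l = (\rho \circ \tau^{K+1})(\sigma_0)$; combined with the lasso property $\rho(\sigma_0).\counters = (\rho \circ \tau)(\sigma_0).\counters$, a short induction shows that $(\rho \circ \tau^k)(\sigma_0).\counters$ is constant for all $k \ge 1$, hence $\sigma_c.\counters = \sigma_l.\counters$. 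The fall-guard restriction is inherited directly from the lasso definition, since $[\tau_c \circ \cdots \circ \tau_{l-1}] = [\tau]$.

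For the backward direction, given $<_{Gr}$ and $P$ as in the statement, I would set $\rho := \tau_0 \circ \cdots \circ \tau_{c-1}$ and $\tau := \tau_c \circ \cdots \circ \tau_{l-1}$, and exhibit $\rho \circ \tau^\omega$ as a genuine lasso path through $\sigma_0$ satisfying $\varphi$. The reverse direction of Lemma~\ref{lem:already} would then yield satisfiability. The key is justifying that $\tau$ can be iterated indefinitely from $\sigma_l$: occupancy is correct because $\sigma_l.\counters = \sigma_c.\counters$; rise guards of rules in $\tau$ remain enabled by monotonicity, since global variable values never decrease; and for any fall guard on a variable $x$ occurring in a rule of $\tau$, the fall-guard condition guarantees that no rule of $\tau$ increments $x$, so $\sigma_l.\vars[x] = \sigma_c.\vars[x]$ and the fall guard still holds. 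A simple invariant then shows that the counters and the values of guard-relevant variables coincide at the start of every loop iteration, licensing the infinite unrolling.

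The main obstacle I anticipate is precisely this backward step: showing that a single successful execution of $\tau$ from $\sigma_c$ to $\sigma_l$ lifts to an infinite repetition respecting every guard forever. It hinges on the interplay between rise-guard monotonicity (built into the model) and the fall-guard condition imposed on the loop segment, requiring careful bookkeeping of which shared variables are incremented inside $\tau$ versus which only appear as guards there.
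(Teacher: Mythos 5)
Your argument is correct and matches the paper's approach exactly: the paper simply states that the corollary follows by combining Proposition~\ref{prop:lasso} with Lemma~\ref{lem:already}, and your proposal fills in the routine verifications (propagating $\sigma_c.\counters = \sigma_l.\counters$ via the additivity of counter updates, translating the lasso's condition on $[\tau]$ into the fall-guard restriction on $[\tau_c \circ \cdots \circ \tau_{l-1}]$, and in the backward direction using rise-guard monotonicity plus the fall-guard restriction to show that $\tau$ can be unrolled infinitely). The paper leaves all of this implicit, so your write-up is if anything more complete than the source.
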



This finishes the first part of the proof of theorem~\ref{th:liveness}.

\subsection{Roadmap}

Before we move on to the next part, we give some intuition.
Notice that for $(<_{Gr},P)$ to be a witness for $\varphi$, 
among other restrictions,
it has to satisfy some reachability conditions
($\sigma_{i+1} = \tau_i(\sigma_i)$), some safety conditions ($\sigma_i \vDash \phi$) 
as well as some liveness conditions ($\setconf{\sigma_i} {\tau_i} \vDash \phi$).
Our strategy for the next part is the following:
Intuitively, the main hindrance for us to apply 
our main result (that the reachability relation is existential Presburger definable) here are the liveness conditions. 
Here is where the multiplicativity condition comes into picture.
For a propositional formula $p$, we will
define an existential Presburger arithmetic formula 
$\phi_p$ with the following properties:
\begin{itemize}
	\item If there is a path from $\sigma$ to $\sigma'$
	such that all the configurations in the path satisfy $p$
	then $\phi_p(\sigma,\sigma')$ will be true.
	\item If $\phi_p(\sigma,\sigma')$ is true then there is 
	a path from $2 \cdot \sigma$ to $2 \cdot \sigma'$ such that
	all the configurations in the path satisfy $p$.
\end{itemize}

Using this newly defined formula $\phi_p$ and some additional tricks,
we then show that for a topological ordering $<_{Gr}$, we can write a formula $\phi_{live}$ such that
$\phi_{live}$ will be satisfied iff there is a path $P$ 
such that $(<_{Gr},P)$ is a witness to the given specification.


\subsection{The usefulness of multiplicativity}
First, given a path $P$ from $\sigma_0$ to $\sigma_m$, we define
a lifted path $\mathtt{lift}(P)$ between $2 \cdot \sigma$ and
$2 \cdot \sigma_m$.

\begin{definition}
	Suppose $\sigma_0, \tau_0, \sigma_0', t_0, \sigma_1, \tau_1, \sigma_1', t_1, \dots, \sigma_{m-1}, \tau_{m-1}, \sigma_{m-1}', t_{m-1}, \sigma_m$
	is a path such that each $\tau_i$ is a steady schedule between $\sigma_i$ and $\sigma_{i+1}$ and 
	each $t_i$ is a rule such that $\sigma_i'$ and $\sigma_{i+1}$ have different contexts. Its corresponding lifted path 
	$\mathtt{lift}(P)$ is defined by $2 \cdot \sigma_0, 2 \cdot \tau_0, 2 \cdot \sigma_0', 2 \cdot t_0, 2 \cdot \sigma_1, 2 \cdot \tau_1, \dots, 2 \cdot t_{m-1}, 2 \cdot \sigma_m$.
\end{definition}

Using multiplicativity it is easy to verify that $\mathtt{lift}(P)$
is a valid path from $2 \cdot \sigma_0$ to $2 \cdot \sigma_m$.

We say that a path $P = \sigma_0, \tau_0, \sigma_1, \dots, \sigma_m, \tau_m$
satisfies a proposition $p$ if all the configurations in $P$
satisfy $p$. We now prove the following lemma:

\begin{lemma}\label{lem:reach-useful}
	Let $p$ be a propositional $\mathit{pf}$-formula as given in $\ELTLFT$.
	Then there exists an existential Presburger formula $\phi_{p}$ such that
	\begin{itemize}
		\item If there is a path $P$ from $\sigma$ to $\sigma'$
		such that $P$ satisfies $p$
		then $\phi_p(\sigma,\sigma')$ will be true.
		\item If $\phi_p(\sigma,\sigma')$ is true then there is
		a path $P$ from $\sigma$ to $\sigma'$ such that the lifted path $\mathtt{lift}(P)$ 
		from $2 \cdot \sigma$ to $2 \cdot \sigma'$ satisfies $p$.
	\end{itemize}
\end{lemma}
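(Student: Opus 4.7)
The plan is to define $\phi_p$ as a refinement of the formula $\phi_\mathit{reach}$ from Section~\ref{sec:reachability}, augmented with per-segment clauses that encode $p$ at the witness configurations of every steady segment. I first normalize: a propositional $\mathit{pf}$-formula is either $cf$ or $gf \Rightarrow cf$, where $cf$ is a conjunction of atoms of the form $S = 0$ or $\neg(S = 0)$ and $gf$ is a positive Boolean combination of threshold guards. Because $\phi_\mathit{reach}$ already decomposes any path into at most $K+1$ steady segments $\sigma_0 \xrightarrow{*} \sigma_0' \to \sigma_1 \xrightarrow{*} \sigma_1' \to \cdots$ connected by single-rule steps, and because the context is constant within a steady segment, the truth of $gf$ is constant there; hence $p$ reduces, segment by segment, to either $\top$ or a pure counter constraint.

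For the construction I extend $\phi_\mathit{reach}(\sigma,\sigma')$ using the existing rule-count witnesses $x_r^i$. For each segment $i$ in which $gf$ evaluates to true at $\sigma_i$ (a syntactic linear check on the globals and parameters in $\sigma_i$), or always if $p = cf$, I add: for every atom $S = 0$ of $cf$ the conjuncts $\sigma_i.\counters(\ell) = 0$ for each $\ell \in S$ together with $x_r^i = 0$ for every rule $r$ with $r.\tostate \in S$; and for every atom $\neg(S = 0)$ the conjuncts $\sum_{\ell \in S} \sigma_i.\counters(\ell) \ge 1$ and $\sum_{\ell \in S} \sigma_i'.\counters(\ell) \ge 1$. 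For each connecting rule $t_i$ between segments, I also require $t_i.\tostate \notin S$ whenever an adjacent segment imposes an $S = 0$ atom, so that the doubled two-fold firing of $t_i$ in $\mathtt{lift}(P)$ does not momentarily populate $S$. All these clauses unfold into existential Presburger arithmetic.

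The forward direction is essentially syntactic: a path $P$ from $\sigma$ to $\sigma'$ whose every configuration satisfies $p$ gives a steady decomposition that witnesses $\phi_\mathit{reach}$, and each added clause is a direct necessary consequence (for instance, $S = 0$ holding throughout a steady segment forces both $\counters|_S = 0$ at the endpoints and that no rule with $\tostate \in S$ is fired). The backward direction is the technical core: from a satisfying assignment and Lemma~\ref{lem:another-observation} I extract a concrete path $P$ whose $i$-th steady segment fires each rule exactly $x_r^i$ times, and then exhibit that $\mathtt{lift}(P) \models p$ via a dominance argument. Under multiplicativity the doubled schedule $2 \tau_i$ applied to $2\sigma_i$ performs $\tau_i$ twice: the first pass reaches $\sigma_i + \sigma_i'$, the second pass reaches $2\sigma_i'$. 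At each intermediate configuration of the first pass the counter at any $\ell$ equals $\sigma_i.\counters(\ell) + \eta.\counters(\ell) \ge \sigma_i.\counters(\ell)$ for some intermediate $\eta$ of a single pass of $\tau_i$ on $\sigma_i$; symmetrically, in the second pass every counter dominates $\sigma_i'.\counters(\ell)$. Consequently every $\neg(S = 0)$ atom is witnessed in the first pass by the token guaranteed to sit in the $\ell_s \in S$ provided by $\sigma_i$, and in the second pass by the token at the $\ell_{s'} \in S$ provided by $\sigma_i'$, while every $S = 0$ atom survives because both $\sigma_i$ and $\sigma_i'$ have zero count on $S$ and no rule used enters $S$.

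The hard part is reconciling doubling with the context in which $gf$ is evaluated. Multiplicativity preserves the truth of each individual guard of $\TA$ (truth at $\sigma$ implies truth at $2\sigma$ and at $\sigma + \eta$), but the converse can fail: a rise guard with strictly positive constant term can turn from false to true under doubling, so $gf$ may become strictly stronger at $2\sigma_i$ than at $\sigma_i$, causing $\mathtt{lift}(P)$ to require $cf$ at configurations where $P$ did not. I plan to defuse this by branching $\phi_p$ on the exact context $\Omega$ and, for each guard $g$ meant to be false in $\Omega$, adding the explicit linear constraint stating that $g$ also evaluates to false at the doubled configuration (a linear inequality in $\sigma_i.\vars$ and $\sigma_i.\param$ that stays in existential Presburger). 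This guarantees that the context of $\mathtt{lift}(P)$ along each segment matches the context used to derive the per-segment clauses of paragraph two, so that the dominance argument above yields $\mathtt{lift}(P) \models p$ and both directions of the lemma close.
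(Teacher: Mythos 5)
Your construction and the overall plan match the paper's: refine $\phi_\mathit{reach}$ by adding per-segment counter constraints that encode $cf$, quoting the rule-count witnesses so that $S=0$ stays zero throughout a segment, and handle $\neg(S=0)$ via a dominance argument over the two passes of the doubled schedule. Two small remarks. First, your explicit clause ``$t_i.\tostate \notin S$'' for the connecting rules is correct but already subsumed once you zero out the counts of \emph{all} incoming rules of every $\ell\in S$ (the paper does this with the total-count variables $sum_r$, which count the connecting steps too); so it is redundant, not wrong. Second, your dominance argument (first pass $\ge\sigma_i$, second pass $\ge\sigma_i'$) is exactly the ``careful inspection'' the paper compresses into one sentence, and it is sound for the $cf$ atoms.

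The interesting part is your fourth paragraph, and here your analysis of the difficulty is genuinely sharp but the proposed repair does not close the lemma. You are right that multiplicativity is one-directional: with a rise guard $x\ge a_0+\sum a_ip_i$ and $a_0>0$ (which the paper's example $x\ge t+1-f$ has), a configuration with $x < a_0+\sum a_ip_i$ can have $2x\ge a_0+2\sum a_ip_i$, so $gf$ can flip from false at $\sigma_i$ to true partway through the doubled segment, after which $\mathtt{lift}(P)$ must satisfy $cf$ even though $\phi_p$ imposed nothing on segment $i$. This is a real hole — the paper's treatment of the case $p=gf\Rightarrow cf$ is a single displayed clause with no justification of the backward implication, and a small example (one shared variable $x$ incremented by a self-loop, a guard $x\ge 3+p$, and $p := (x\ge 3+p)\Rightarrow(\{\ell_1\}=0)$) shows the backward direction can actually fail for that clause. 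However, the fix you propose — guessing the context $\Omega$ and, for every guard false in $\Omega$, adding the linear inequality that it is also false at $2\sigma_i$ — strengthens $\phi_p$ in a way that breaks bullet one of the lemma: a legitimate path $P$ that satisfies $p$ can perfectly well have a segment whose start configuration $\sigma_i$ violates your added inequality (that is precisely the scenario you constructed to motivate it), so such a $P$ no longer witnesses $\phi_p$. You have thus traded a backward-direction gap for a forward-direction gap; the two properties pull in opposite directions and one cannot get both merely by pinning the context of $2\sigma_i$ to that of $\sigma_i$. To repair the lemma for this case one needs something genuinely different — e.g.\ restricting $gf$ to threshold-homogeneous guards (constant term $0$, for which the flip cannot occur), or refining the segment decomposition of the \emph{doubled} path rather than the original one — and the paper's proof as written does not supply it either.
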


\begin{proof}
	Recall that by theorem~\ref{th:main} that the reachability relation $\phi_{reach}(\sigma,\sigma')$ is
	\begin{equation*}
		\exists \ \sigma_0, \sigma'_0, \dots, \sigma_{K}, \sigma'_{K} \
		\left(\sigma_0 = \sigma \land \sigma'_{K} = \sigma' \land \bigwedge_{0 \le i \le K} \phi_\mathit{steady}(\sigma_i,\sigma'_i) 
		\land  \bigwedge_{0 \le i \le K-1}
		\phi_\mathit{step}(\sigma'_i,\sigma_{i+1}) \right)
	\end{equation*}
	
	where $K = |\Phi| + 1$, \ 
	$\phi_{steady}(\sigma_i,\sigma'_i)$ is true iff there is 
	a steady schedule between $\sigma_i$ and $\sigma'_i$
	and $\phi_{step}(\sigma'_i,\sigma_{i+1})$ is true iff $\sigma_{i+1}$
	can be reached from $\sigma'_i$ in at most one step.
	Further recall that for each rule $r$ there is 
	an existential variable $sum_r$ which denotes the 
	number of times the rule $r$ is fired in the run
	between $\sigma$ and $\sigma'$.
	
	Let $p$ be a $\mathit{pf}$-formula as given in $\ELTLFT$.
	Depending on the structure of $p$ we introduce a different formula
	for $\phi_p(\sigma,\sigma')$:
	In each case, it will be immediately clear that if there is a path $P$ 
	from $\sigma$ to $\sigma'$ satisfying $p$, then
	$\phi_p(\sigma,\sigma')$ is true. For this reason, we will only discuss 
	how to prove the other direction of the claim (namely, if $\phi_p(\sigma,\sigma')$ is true then there is
	a path $P$ from $\sigma$ to $\sigma'$ such that the lifted path $\mathtt{lift}(P)$ 
	from $2 \cdot \sigma$ to $2 \cdot \sigma'$ satisfies $p$.)
	\begin{itemize}
		\item $p := (S = 0)$ for some $S \subseteq \local$. 
		In this case, we construct $\phi_p$ from $\phi_{reach}$ as follows:
		To each of the formulas $\phi_\mathit{steady}(\sigma_i,\sigma'_i)$,
		we add the following constraint
		\begin{equation}
			\bigwedge_{\ell \in S} \sigma_i.\counters[\ell] = 0 \land  
			\bigwedge_{\ell \in S} \sigma'_i.\counters[\ell] = 0 \land  
			\bigwedge_{\ell \in S} \bigwedge_{r \in In_{\ell}} sum_r = 0
		\end{equation}
		where
		$In_{\ell}$ denotes the set of all incoming transitions to $\ell$.
		By using Lemma~\ref{lem:another-observation} it is clear that if
		$\phi_p(\sigma,\sigma')$ is true then there is a path
		$P$ between $\sigma$ and $\sigma'$ such that \emph{all}
		the configurations in the path $P$ satisfy $p$.
		Consequently, it is then easy to verify that 
		all the configurations in the lifted 
		path $\mathtt{lift}(P)$ also satisfy $p$.\\
		
		\item $p := \lnot (S = 0)$ for some $S \subseteq \local$. In this case, we construct $\phi_p$ from $\phi_{reach}$ as follows:
		To each of the formulas $\phi_\mathit{steady}(\sigma_i,\sigma'_i)$ 
		we add the following constraint
		\begin{equation}
		\bigvee_{\ell \in S} \sigma_i.\counters[\ell] > 0 \ \land \
		\bigvee_{\ell \in S} \sigma'_i.\counters[\ell] > 0
		\end{equation}
		Using Lemma~\ref{lem:another-observation} it is then clear that if
		$\phi_p(\sigma,\sigma')$ is true then there exists a path
		between $\sigma$ and $\sigma'$ such that each of the \emph{intermediate} configurations $\sigma_0,\sigma_0',\dots,\sigma_K,\sigma_K'$ satisfy $p$. 
		By a careful inspection of the corresponding lifted path
		$\mathtt{lift}(P)$ it can be verified that \emph{every} configuration in the lifted path satisfies $p$.
		
		\item $p := (S' = 0) \land \bigwedge_{S \in T} \lnot (S = 0)$.
		In this case, to each $\phi_{steady}(\sigma_i,\sigma_i')$
		we add 
		\begin{multline} \label{eq:conjunct}
		\bigwedge_{\ell \in S'} \sigma_i.\counters[\ell] = 0 \ \land \
		\bigwedge_{\ell \in S'} \sigma'_i.\counters[\ell] = 0 \ \land \
		\bigwedge_{\ell \in S'} \bigwedge_{r \in In_{\ell}} sum_r = 0 \ \land \\
		\left(\bigwedge_{S \in T} \bigvee_{\ell \in S} \sigma_i.\counters[\ell] > 0\right) \ \land \
		\left(\bigwedge_{S \in T} \bigvee_{\ell \in S} \sigma'_i.\counters[\ell] > 0\right)
		\end{multline}

		\item $p := (g \implies c)$ where $g$ is a $\mathit{gf}$-formula and $c:= (S' = 0) \land \bigwedge_{S \in T} \lnot (S = 0)$ is a $\mathit{cf}$-formula. 
		In this case, to each $\phi_{steady}(\sigma_i,\sigma_i')$ we
		add the constraint 
		\begin{equation}
			(\sigma_i \models g) \implies  \phi_c
		\end{equation}
		where $\phi_c$ is the constraint given by equation~(\ref{eq:conjunct}).
	\end{itemize}
	
\end{proof}

Given a path $P$, let $P(r)$ denote the number of times
the rule $r$ appears in $P$.
Notice that once again we have proved something stronger.

\begin{lemma}\label{lem:observation-prop}
	Let $p$ be a propositional formula in $\ELTLFT$. Then,
	\begin{itemize}
		\item If there is a path $P$ between $\sigma$ and $\sigma'$
		satisfying $p$ then it is possible to satisfy $\phi_p(\sigma,\sigma')$ by setting $\{P(r)\}_{r \in \ruleset}$
		to the variables $\{sum_r\}_{r \in \ruleset}$.
		\item Suppose $\phi_p(\sigma,\sigma')$ is satisfiable.
		Then there is path $P$ between $\sigma$ and $\sigma'$
		such that $P(r) = sum_r$ for every rule $r$ and 
		the lifted path $\mathtt{lift}(P)$ satisfies $p$.
	\end{itemize}
\end{lemma}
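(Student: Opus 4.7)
The plan is to view Lemma \ref{lem:observation-prop} as a direct refinement of Lemmas \ref{lem:another-observation} and \ref{lem:reach-useful} that tracks how the existential variables $\{sum_r\}_{r \in \ruleset}$ are instantiated. Both statements already contain the substance of what is needed; the work is only to thread the identity $P(r) = sum_r$ through both directions.

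For the first bullet, I would start from a path $P$ between $\sigma$ and $\sigma'$ satisfying $p$. By Lemma \ref{lem:another-observation}, setting $sum_r := P(r)$ for each rule $r$ already satisfies $\phi_{reach}(\sigma,\sigma')$, via some intermediate configurations $\sigma_0, \sigma'_0, \dots, \sigma_K, \sigma'_K$ that appear along $P$ and therefore all satisfy $p$. It remains to verify the extra conjuncts that $\phi_p$ adds on top of $\phi_{reach}$. I would carry out a case analysis on the shape of $p$, parallel to the construction of $\phi_p$ in Lemma \ref{lem:reach-useful}: the conjuncts asserting $\sigma_i, \sigma'_i \models p$ (or the guard-conditional version for $p = g \Rightarrow c$) follow immediately from $\sigma_i, \sigma'_i$ being configurations on $P$; for the conjunct $\bigwedge_{r \in In_\ell} sum_r = 0$ that appears when $S=0$ is a subformula of $p$, I would argue that since no configuration along $P$ ever places a process in $\ell \in S$, no rule incoming to $\ell$ is ever fired, so $P(r) = 0$.

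For the second bullet, I would start from a satisfying assignment $\{z_r\}$ of $\phi_p(\sigma,\sigma')$. Since $\phi_p$ is syntactically $\phi_{reach}$ with additional conjuncts, Lemma \ref{lem:another-observation} supplies a concrete path $P$ from $\sigma$ to $\sigma'$ with $P(r) = z_r$ for every rule $r$. The second bullet of Lemma \ref{lem:reach-useful} then certifies that the lifted path $\mathtt{lift}(P)$ satisfies $p$, because the argument in that proof uses precisely the path extracted via Lemma \ref{lem:another-observation}, not some other path obtained from the same assignment.

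The main subtlety, and the only genuinely non-routine step, is the forward direction for the case $p = (S = 0)$: the added conjunct on $sum_r$ is a dynamic statement about firings of rules, not a static propositional consequence of $p$, and must be justified by the observation that an always-empty $\ell \in S$ along $P$ receives no incoming firings. All other cases amount to mechanical substitutions between $P(r)$ and $sum_r$ and direct appeals to the two preceding lemmas.
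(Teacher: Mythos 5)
Your proposal is correct and follows essentially the same route the paper takes: the paper states the lemma with only the remark ``notice that once again we have proved something stronger,'' meaning it is obtained by re-reading the proof of Lemma~\ref{lem:reach-useful} while tracking how Lemma~\ref{lem:another-observation} assigns $P(r)$ to the $sum_r$ variables, which is precisely the bookkeeping you carry out. You also correctly flag the two genuine points of care --- that the forward direction for $S=0$ needs the dynamic fact that no incoming rule to an always-empty location is ever fired, and that in the backward direction one must observe that the path produced in Lemma~\ref{lem:reach-useful} is the one extracted via Lemma~\ref{lem:another-observation} (so it really has $P(r)=sum_r$) rather than treating that lemma's second bullet as a black-box existential.
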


\subsection{Proof of Theorem~\ref{th:liveness}}

Using the above lemma, we now give a proof of Theorem~\ref{th:liveness}.
The proof proceeds by encoding the conditions mentioned in Corollary~\ref{cor:topo-ord}.

Let $\varphi \equiv \phi_0^0 \land \ltlF \phi_1^0 \land \dots \land \ltlF \phi_{k_0}^0 \land \ltlG \phi_{k_0+1}^0$. 
We want to guess a witness $(<_{Gr},P)$. For this reason,
we first guess and fix a topological ordering $<_{Gr}$
\begin{equation*}
	v_1, v_2, \dots, v_c = loop_{st}, v_{c+1}, \dots, v_l = loop_{end}
\end{equation*}
of the cut graph $Gr(\varphi)$. Let each $v_i$ (apart from $v_c$ and $v_l$)
be of the form $\phi_0^i \land \ltlF \phi_1^i \land \dots \land \ltlF \phi_{k_i}^i \land \ltlG \phi_{k_i+1}^i$.

We will write a formula $\phi_{live}$ with existential variables
decribing $l+1$ configurations $\sigma_0,\sigma_1,\dots,\sigma_c,\dots, \sigma_l$.
We will now show to enforce each of the conditions mentioned by 
Corollary~\ref{cor:topo-ord} as a formula in 
existential Presburger arithmetic. (For ease of presentation,
we provide the steps in an order different from the one in Corollary~\ref{cor:topo-ord})

\paragraph{First condition: } $\sigma_0$ is an initial configuration
and $\sigma_c.\counters = \sigma_l.\counters$: Clearly this can
be encoded as an existential Presbuger formula.

\paragraph{Second condition: } For each $i \notin \{c,l\}, \ \sigma_i \vDash \phi_0^i$ : Since each $\phi^0_i$ is a propositional $\mathit{pf}$ formula, this constraint can once again be encoded as an existential Presbuger formula.

\paragraph{Third condition: } For each $i < c$, the path
between $\sigma_i$ and $\sigma_{i+1}$ must satisfy $\bigwedge_{0 \le j \le i} \ \mathtt{prop}(\phi^j_{k_j+1})$ : Let $p_i := \bigwedge_{0 \le j \le i} \ \mathtt{prop}(\phi^j_{k_j+1})$. We enforce this condition by putting
the constraint
\begin{equation}\label{eq:first}
	\phi_{p_i}(\sigma_i,\sigma_{i+1})
\end{equation} 

\paragraph{Fourth condition: } For each $i \ge c$, the path
between $\sigma_{i}$ and $\sigma_{i+1}$ must satisfy $\bigwedge_{0 \le j \le l} \ \mathtt{prop}(\phi^j_{k_j+1})$ : Let $p_{i} := \bigwedge_{0 \le j \le l} \ \mathtt{prop}(\phi^j_{k_j+1})$. We enforce this condition by putting
the constraint
\begin{equation}\label{eq:second}
	\phi_{p_{i}}(\sigma_{i},\sigma_{i+1})
\end{equation} 

\paragraph{Fifth condition: } If $r$ is a rule which is fired in 
the schedule between $\sigma_c$ and $\sigma_l$ such that $r.\vec{u}[x] > 0$
for some shared variable $x$, then for every rule $r'$ which is fired 
in the schedule between $\sigma_c$ and $\sigma_l$, $x$ should not 
appear in any of the fall guards of $r'$ : 
Recall that for every $0 \le i < l$ and for every rule $r$,
the formula $\phi_{p_i}(\sigma_i,\sigma_{i+1})$ (as defined
in ~\ref{eq:first} and ~\ref{eq:second}) has an existential variable
$sum^i_r$ which denotes the number of times the rule $r$ has 
been fired in the schedule between $\sigma_i$ and $\sigma_{i+1}$. 
Hence to enforce this  condition, we put the following constraint:

\begin{equation*}
	\bigwedge_{r, r' \in \ruleset, x \in \varset} \left(\left(\left(\bigvee_{c \le i < l} sum^i_r > 0 \right) \land r.\vec{u}[x] > 0 \land r' \in Fall_x\right) \implies \left(\bigwedge_{c \le i < l} sum^i_{r'} = 0\right)\right)
\end{equation*}

Here $Fall_x$ denote the set of all rules which have a fall guard with 
the variable $x$.

Let $\phi_{live}$ be the above formula. 
From the construction of the formula and Lemma~\ref{lem:observation-prop}
it is clear that if $(<_{Gr},P)$ is a witness then 
$\phi_{live}$ is satisfiable.
On the other hand, suppose $\phi_{live}$ is satisfiable.
We therefore get a sequence of configurations $\eta_0,\dots,\eta_l$
satisfying the formula $\phi_{live}$. 
By Lemma~\ref{lem:observation-prop} we get that for every $i$,
there is a path $P_i$ between $\eta_i$ and $\eta_{i+1}$ such that
$P_i(r) = sum^i_r$ for every rule $r$ and the lifted path $\mathtt{lift}(P_i)$
satisfies the proposition $p_i$. It is then easy to see that
the pair $(<_{Gr},Q)$ where $Q$ is given by,

\begin{equation*}
	\mathtt{lift}(P_0), \mathtt{lift}(P_1), \dots, \mathtt{lift}(P_{l-1})
\end{equation*}

is a witness.\\

It then follows that model-checking formulas in $\ELTLFT$ is in NP.

\section{Detailed proofs of Section~\ref{sec:synthesis}}

\subsection*{Proof of theorem~\ref{th:bound-synthesis-upperbound}}

\begin{proof}
	Given a sketch threshold automaton $\TA$, formula $\varphi$, and polynomial $p$, we
	first guess in polynomial time an assignment $\mu$.
	Clearly we can check if $\mu$ makes the environment multiplicative in NP
	as the conditions of definition~\ref{def:mult} can be encoded 
	as a(n) (integer) linear program.
	Using Theorem~\ref{th:liveness},
	we can decide if all executions of $\TA[\mu]$ 
	satisfy $\lnot \varphi$ in co-non-deterministic 
	polynomial time in $\TA[\mu]$. So we get a $\Sigma_2^p$ procedure
	for the bounded synthesis problem.
\end{proof}

\subsection*{Proof of theorem~\ref{th:bound-synthesis-lowerbound}}

The upper bound follows from Theorem~\ref{th:bound-synthesis-upperbound} and
so we only prove the lower bound here, by giving a reduction from the
$\Sigma_2$-3-SAT problem, which is well-known to be $\Sigma_2^p$-hard.

Let $\exists x_1,\dots,x_m  \ \forall y_1,\dots,y_k \ 
\phi(x,y)$ be a $\Sigma_2$-3-SAT formula where $\phi(x,y)$ is in disjunctive normal form. 
We construct a threshold automaton as follows: (See Figure~\ref{fig:example} for an illustrative example)

Let $Env = (\paraset, \ResCond,\syssize)$ be the environment
where $\paraset$ consists of a single variable $n$, 
$\ResCond$ is simply $n \ge 1$ and $\syssize$ is the identity function.

The sketch threshold automaton $\TA$ is constructed as follows:
For every variable $x_i$ we will have a location $\ell_{x_i}$. Further
we will also have a start location $\ell_{x_0}$.
Let $a, b_1,\dots,b_m,\bar{b_1},\dots,\bar{b_m}$ be shared variables. 
Our construction will ensure that $a$ will never be incremented.
Let $v_1,\dots,v_m$ be indeterminates which will appear in the 
threshold guards.
Between $\ell_{x_{i-1}}$
and $\ell_{x_i}$ there are two rules of the form $(\ell_{x_{i-1}},\ell_{x_i}, a < v_i \cdot n, \cpp{{b_i}})$ 
and $(\ell_{x_{i-1}},\ell_{x_i}, a \ge v_i \cdot n, \cpp{\bar{b_i}})$.
Since $a$ will never be incremented (and hence always stay $0$),
it follows that depending on the value that $v_i$ is assigned to, exactly
one of these two rules can be fired.

For every variable $y_i$ we will have a location $\ell_{y_i}$ and
two other locations $\ell_{z_i}, \ell_{\bar{z_i}}$.  Further we will also have a location $\ell_{y_0}$ and a transition
$(\ell_{x_m},\ell_{y_0},\true,0)$. 
Let $c_1,\dots,c_k,\bar{c_1},\dots,\bar{c_k}$ be shared variables.
From $\ell_{y_{i-1}}$ there are two rules, 
$(\ell_{y_{i-1}},\ell_{z_i},true,\cpp{c_i})$ and
$(\ell_{y_{i-1}},\ell_{\bar{z_i}},true,\cpp{\bar{c_i}})$. 
Intuitively, firing the first rule corresponds to making $y_i$
to be true and firing the second rule corresponds to making it false.
We will ensure that if two different processes fire two different rules
from $\ell_{y_{i-1}}$ then all the processes get stuck:
We add two rules $(\ell_{z_i},\ell_{y_i},c_i \ge n,\vec{0})$
and $(\ell_{\bar{z_i}},\ell_{y_i},\bar{c_i} \ge n,\vec{0})$.
These two rules ensure that if two different processes fired
two different rules from the location $\ell_{y_{i-1}}$ then they
all get stuck and cannot progress to $\ell_{y_i}$.

For a variable $x_i$, let $var(x_i) = b_i$ and $var(\bar{x_i}) = \bar{b_i}$.
Similarly for a variable $y_i$, let $var(y_i) = c_i$ and $var(\bar{y_i}) = \bar{c_i}$.
Now for every disjunct $D = p \land q \land r$ that appears in $\phi(x,y)$
we have the following transition from $\ell_{y_k}$ to a new location $\ell_{F}$: $(\ell_{y_k},\ell_F,var(p) \ge 1 \land var(q) \ge 1 \land var(r) \ge 1,0)$. 
Finally both $\ell_{y_k}$ and $\ell_F$ have self-loops with
no guards.

Let the above sketch threshold automaton be $\TA$.
Let $\varphi'$ be the following formula:
\begin{equation*}
	\left(\bigvee_{(p \land q \land r)  \in \phi(x,y)} var(p) \ge 1 \land var(q) \ge 1 \land var(r) \ge 1 \right)
\end{equation*}
and let $\varphi$ be the following formula:
$(\ltlF \ltlG (\varphi' \implies \ell_{y_k} = 0)) \land (\ltlG \ell_{F} = 0)$.
The first part of $\varphi$ says that if any one of the rules
between $\ell_{y_k}$ and $\ell_F$ get unlocked at some point,
then all the processes leave $\ell_{y_k}$ (and subsequently go
to $\ell_F$). The second part of $\varphi$ says that no process
ever reaches $\ell_F$. Hence $\varphi$ can be satisfied by a path
iff there is a process in $\ell_{y_k}$ but none of the outgoing rules
from $\ell_{y_k}$ are unlocked.

We notice the following observation about the constructed threshold automaton: Let $\mu$ be any assignment to the indeterminates.
In any infinite run of $\TA[\mu]$ at least one process reaches
the location $\ell_{y_k}$. Further, in any run in which some process reaches the location $\ell_{y_k}$, the run satisfies the following:
For every $1 \le i \le m$, either all configurations of the run
have $b_i = 0$ or all configurations of the run have $\bar{b_i} = 0$.
Similarly for every $1 \le j \le k$, either all configurations
of the run have $c_i = 0$ or all configurations of the run
have $\bar{c_i} = 0$.

Let $p$ be the identity function.
We now show that there exists an assignment to the indeterminates
$\{v_i\}$ (using $p(|\TA|+|\varphi|)$ many bits)
such that $\lnot \varphi$ is satisfied by \emph{every} infinite run iff the formula $\exists x_1,\dots,x_m \ \forall y_1,\dots,y_n \ \phi(x,y)$
is true.

Suppose there exists an assignment $\mu$ to the indeterminates $\{v_i\}$
such that $\lnot \varphi$ is true for every infinite run starting
from every initial configuration in the threshold automaton $\TA[\mu]$. Consider the following assignment $\nu$
for the variables $x_i$: $\nu(x_i) = \true$ if $\mu(v_i) > 0$
and $\nu(x_i) = \false$ if $\mu(v_i) = 0$. It is now an easy check
that the formula $\forall y_1 \dots y_n \ \phi(\nu(x),y)$ is true.

Suppose the formula $\exists x_1,\dots,x_m \ \forall y_1,\dots,y_n \ \phi(x,y)$ is true. Let $\nu$ be an assignment to the variables $\{x_i\}$
such that $\forall y_1 \dots y_n \ \phi(\nu(x),y)$ is true.
Consider an assignment $\mu$ to the indeterminates $\{v_i\}$ as follows:
$\mu(v_i) = 1$ if $\nu(x_i) = \true$ and $\mu(v_i) = 0$ if $\nu(x_i) = \false$. 
It is once again easy to check that $\lnot \varphi$ is true
for every infinite run starting
from every initial configuration in the threshold automaton $\TA[\mu]$.
  
\begin{figure} \label{fig:example}
	\tikzstyle{node}=[circle,draw=black,thick,minimum size=5mm,inner sep=0.75mm,font=\normalsize]
	\tikzstyle{edgelabelabove}=[sloped, above, align= center]
	\tikzstyle{edgelabelbelow}=[sloped, below, align= center]
		\begin{tikzpicture}[->,node distance = 1.5cm,scale=0.8, every node/.style={scale=0.8}]
		\node[node] (x0) {$\ell_{x_0}$}; 
		\node[node, right = of x0] (x1) {$\ell_{x_1}$};
		\node[node, right = of x1] (x2) {$\ell_{x_2}$};
		
		\draw(x0) edge[edgelabelabove, bend left] node{${\tiny a < v_1 \cdot n \mapsto \cpp{{b_1}}}$} (x1);
		\draw(x0) edge[edgelabelbelow, bend right] node{$a \ge v_1 \cdot n\mapsto \cpp{\bar{b_1}}$} (x1);
		\draw(x1) edge[edgelabelabove, bend left] node{$a < v_2 \cdot n \mapsto \cpp{b_2}$} (x2);
		\draw(x1) edge[edgelabelbelow, bend right] node{$a \ge v_2  \cdot n\mapsto \cpp{\bar{b_2}}$} (x2);
		
		\node[node, right = of x2, xshift = -0.5cm] (y0) {$\ell_{y_0}$};
		\node[node, above right = of y0, xshift = -0.5cm] (z1) {$\ell_{z_1}$};
		\node[node, below right = of y0, xshift = -0.5cm] (bz1) {$\ell_{\bar{z_1}}$};
		\node[node, right = of y0] (y1) {$\ell_{y_1}$};
		\node[node, right = of y1, xshift = 1.5cm] (F) {$\ell_{F}$};
		
		\draw(x2) edge[edgelabelabove] node{$\true$} (y0);
		
		\draw(y0) edge[edgelabelabove, bend left] node{$\cpp{c_1}$} (z1);
		\draw(y0) edge[edgelabelbelow, bend right] node{$\cpp{\bar{c_1}}$} (bz1);
		
		\draw(z1) edge[edgelabelbelow, bend left] node{$c_1 \ge n$} (y1);
		\draw(bz1) edge[edgelabelabove, bend right] node{$\bar{c_1} \ge n$} (y1);

		\draw(y1) edge[edgelabelabove, bend left] node{$b_1 \ge 1 \land c_1 \ge 1 \land \bar{c_1} \ge 1$} (F);
		\draw(y1) edge[edgelabelbelow, bend right] node{$\bar{b_1} \ge 1 \land \bar{b_2} \ge 1 \land c_1 \ge 1$} (F);
		
		\draw(y1) edge[edgelabelabove, loop right] (y1);
		\draw(F) edge[edgelabelabove, loop above] (F);
		\end{tikzpicture}
		\caption{Threshold automaton corresponding to the formula
			$\exists x_1, x_2 \ \forall y_1 \ (x_1 \land y_1 \land \bar{y_1}) \lor (\bar{x_1} \land \bar{x_2} \land y_1)$}
\end{figure}

\end{document}